\documentclass[a4paper,11pt]{article}

\usepackage[english]{babel}

\usepackage{fullpage,graphicx}

\usepackage[utf8]{inputenc}

\usepackage{amsmath,amsfonts,amssymb,amsthm,algorithm,bbm,bm}
\usepackage{complexity}
\usepackage{xspace}
\usepackage{ifthen}
\usepackage{mathtools}
\usepackage{thmtools,thm-restate}

\usepackage{newpxtext,newpxmath}
\usepackage[scaled=.92]{helvet}
\ifdefined\CONFERENCE%
\usepackage{lineno}
\linenumbers
\else%
\relax%
\fi%

\usepackage{thm-restate}
\usepackage{enumitem}

\usepackage[dvipsnames]{xcolor}
\usepackage[pagebackref]{hyperref}
\hypersetup{
	colorlinks = true,
	urlcolor   = MidnightBlue,
	linkcolor  = MidnightBlue,
	citecolor  = OliveGreen,
	unicode    = true
}

\usepackage[capitalise,nameinlink,noabbrev]{cleveref}

\usepackage[%
activate={true,nocompatibility},
selected=true,
tracking=true,
factor=1100,
babel=true
]%
{microtype}

\usepackage{fancyhdr}

\usepackage
[%
labelformat=default, %
labelsep=colon, %
textformat=simple, %
font={small,sf}, %
labelfont=bf %
]%
{caption}

\usepackage{sectsty}
\allsectionsfont{\sffamily}

\makeatletter
\renewenvironment{abstract}{%
   \small%
   \begin{center}%
     {\bfseries \sffamily\abstractname\vspace{-.5em}\vspace{\z@}}%
   \end{center}%
   \quotation
 }
\makeatother

\declaretheorem[name=Theorem, parent=section]{theorem}
\declaretheorem[name=Lemma, sibling=theorem]{lemma}
\declaretheorem[name=Claim, sibling=theorem]{claim}
\declaretheorem[name=Proposition, sibling=theorem]{proposition}
\declaretheorem[name=Corollary, sibling=theorem]{corollary}
\declaretheorem[name=Definition, sibling=theorem, style=definition]{definition}

\usepackage[disable]{todonotes} %

\newcommand{\newauthor}[3]{
  \newcounter{#1comment}
  \setcounter{#1comment}{1}
  \expandafter\newcommand\csname #1\endcsname[1]{%
    \par\noindent%
    \todo[inline, size = \small, backgroundcolor = {#3}, caption = {}]{
      \arabic{#1comment}:
      {##1} --~\textbf{#2}
    }%
    \addtocounter{#1comment}{1}%
  }
}

\newauthor{sr}{Susanna}{Orange}
\newauthor{yassine}{Yassine}{yellow}
\newauthor{kr}{Kilian}{LimeGreen}
\newauthor{david}{David}{Cyan}

\newcommand{\N}{\mathbbm{N}}
\renewcommand{\R}{\mathbbm{R}}
\newcommand{\Z}{\mathbbm{Z}}
\newcommand{\F}{\mathbbm{F}}

\newcommand{\calA}{\mathcal{A}}

\newcommand{\calF}{\mathcal{F}}
\newcommand{\calG}{\mathcal{G}}
\newcommand{\calH}{\mathcal{H}}

\newcommand{\calL}{\mathcal{L}}

\newcommand{\calQ}{\mathcal{Q}}

\newcommand{\calT}{\mathcal{T}}

\newcommand{\eps}{\varepsilon}
\renewcommand{\epsilon}{\varepsilon}

\newcommand{\Expectation}{\mathbbm{E}}

\DeclareMathOperator{\vars}{Vars}

\DeclareMathOperator{\clique}{Clique}
\DeclareMathOperator{\bclique}{BlockClique}
\DeclareMathOperator{\bcliqueu}{BlockClique^{U}}
\DeclareMathOperator{\bcliqueb}{BlockClique^{B}}

\DeclareMathOperator{\PHP}{PHP}
\DeclareMathOperator{\tseitin}{Tseitin}
\DeclareMathOperator{\CNF}{CNF}

\DeclareMathOperator{\search}{Search}
\DeclareMathOperator{\vc}{vc}
\DeclareMathOperator{\VC}{VC}
\DeclareMathOperator{\Th}{Th}
\DeclareMathOperator{\core}{core}

\undef\poly
\DeclareMathOperator{\poly}{poly}
\undef\polylog
\DeclareMathOperator{\polylog}{polylog}

\DeclareMathOperator{\littleo}{o}

\DeclareMathOperator{\depth}{depth}
\DeclareMathOperator{\leaves}{leaves}

\newcommand{\CoreExtension}{E^\star}

\newcommand{\ind}{\mathbbm{1}}

\newcommand{\bfX}{\bm{X}}
\newcommand{\bfY}{\bm{Y}}

\newcommand{\ar}{c}
\newcommand{\ta}{t_A}
\newcommand{\tb}{t_B}
\newcommand{\false}{0}
\newcommand{\true}{1}

\newcommand{\eqalignspace}{\hspace{3cm}}

\newcommand{\res}{\mathrm{Res}}

\newcommand{\Setsize}[1]{\bigl\lvert#1\bigr\rvert}
\newcommand{\setsize}[1]{\lvert#1\rvert}

\newcommand{\set}[1]{\{#1\}}
\newcommand{\Set}[1]{\big\{#1\big\}}

\newcommand{\twincommandJN}[6]%
    {#1#2#3\vphantom{#2#5}\mspace{-2.05mu}#4.#5#6}
\newcommand{\setdescr}[3][\mid]{\set{ #2 #1 #3 }}
\newcommand{\Setdescr}[3][|]%
     {\ifthenelse{\equal{#1}{;}}%
     {\Set{ #2 \,;\, #3 }}
     {\ifthenelse{\equal{#1}{:}}%
     {\Set{ #2 \,:\, #3 }}
     {\twincommandJN{\bigl\{}{#2\,}{\bigl#1}{\bigr}{\,#3}{\bigr\}}}}}
\newcommand{\SETDESCR}[3][|]%
     {\twincommandJN{\left\{}{#2\,}{\left#1}{\right}{\,#3}{\right\}}}

\newcommand{\restrict}[2]{{{#1}\!\!\upharpoonright_{#2}}}

\newcommand{\eqperiod}{\enspace .}
\newcommand{\eqcomma}{\enspace ,}

\newcommand{\ie}{i.e.,\ }

\newcommand{\ResLin}{\textrm{Res}(\oplus)}
\newcommand{\ResParity}{\ResLin}

\providecommand{\erdosrenyi}{Erd\H{o}s-R\'{e}nyi\xspace}

\newcommand{\aas}{asymptotically almost surely\xspace}

\DeclareMathOperator{\sem}{sem}
\newcommand{\semf}{\sem(\calF)}
\newcommand{\treesem}{\sem^\star}
\newcommand{\treesemf}{\sem^\star(\calF)}

\crefname{claim}{Claim}{Claims}
\Crefname{claim}{Claim}{Claims}
\crefname{prop}{Property}{Properties}
\Crefname{prop}{Property}{Properties}

\title{
  Superpolynomial Length Lower Bounds for 
  Tree-Like \\
  Semantic Proof Systems with Bounded Line Size
}

\author{%
  \ifdefined\CONFERENCE%
  \relax%
  \else%
  \makebox[.33\linewidth]{Susanna F. de Rezende}\\
  \textsl{Lund University}%
  \and
  \makebox[.33\linewidth]{David Engström}\\
  \textsl{Lund University}\vspace{0.2em}%
  \and
  \makebox[.33\linewidth]{Yassine Ghannane}\\
  \textsl{University of Copenhagen}%
  \and
  \makebox[.33\linewidth]{Kilian Risse}\\
  \textsl{Lund University}%
  \fi%
}

\date{\today}

\begin{document}

\maketitle

\begin{abstract}
  We prove superpolynomial length lower bounds for the semantic tree-like
  Frege refutation system with bounded line size. 
  Concretely, for any function
  $n^{2-\varepsilon} \leq s(n) \leq 2^{n^{1-\varepsilon}}$ we exhibit
  an explicit family~$\calA$ of $n$-variate CNF formulas $A$, each of
  size~$|A| \le s(n)^{1+\eps}$, such that if~$A$ is chosen uniformly
  from~$\calA$, then \aas any tree-like Frege refutation of~$A$ in
  line-size~$s(n)$ is of length super-polynomial in~$|A|$.  Our lower
  bounds apply also to tree-like degree-$d$ threshold systems, for
  $d \approx \log\bigl(s(n)\bigr)$, that is, for $d$ up to $n^{1-\eps}$.  More
  generally, our lower bounds apply to the semantic version of these
  systems and to any semantic tree-like proof system where the number
  of distinct lines is bounded by $\exp\bigl(s(n)\bigr)$.
\end{abstract}

\pagestyle{fancy}
\fancyhead{}
\fancyfoot{}
\renewcommand{\headrulewidth}{0pt}
\renewcommand{\footrulewidth}{0pt}
\fancyfoot[C]{\sffamily\thepage}

\pagenumbering{roman}

\thispagestyle{empty}
\pagebreak

\pagebreak

\pagenumbering{arabic}
\setcounter{page}{1}

\section{Introduction}

Proof complexity studies certificates of
unsatisfiability, \emph{refutations}, of unsatisfiable propositional
formulas.
The original motivation for the conception of the
field~\cite{Reckhow75Thesis,CR79Relative} was to establish
that there are propositional formulas that require %
refutations of size superpolynomial in the formula size or,
equivalently, to separate~$\NP$ from $\coNP$.
Since such a separation would imply~$\P \neq \NP$ we do not expect a
resolution in the near future. %

The general approach towards such a separation is to study
constrained
refutations
and %
show %
length lower bounds on %
ever stronger 
classes of  refutations,
\emph{proof systems}, to ultimately establish that
there is no \emph{polynomially bounded} proof system:
the goal is to establish that there is no proof system that has
polynomially-size %
refutations
for \emph{all} %
unsatisfiable
propositional formulas.
Establishing that there is no polynomially bounded proof system
readily separates~$\NP$ from $\coNP$.

As an intermediary goal proof complexity studies different proof
systems and compares their relative deductive abilities.
Lower bounds on limited proof systems are interesting in their own
right
yielding lower bounds for limited models of computation and
applications to coding theory~\cite{KM24LCC,KM24Smooth}, the theory of
total search problems~\cite{GHJMPRT24Separations}, and many other
adjacent areas.
These lower bounds are well-motivated but %
do not seem to directly
contribute towards the %
original objective of
establishing~$\NP \neq \coNP$.

Progress on extending lower bounds from weak proof systems such as
\emph{resolution}, \emph{cutting planes}, or \emph{bounded-depth
  Frege} to more powerful %
systems is slow.
If the ultimate aim is to establish lower bounds for strong proof
systems that can efficiently refute simple benchmark formulas such as
the \emph{pigeonhole principle}, \emph{Tseitin contradictions}, or the
\emph{clique-coloring formulas}, then we need to develop lower bound
strategies that do not depend on these formulas requiring long
refutations in a proof system.
Since worst-case refutation size lower bounds on constraint
satisfaction problems (CSPs) follow from the hardness of simple
combinatorial principles~\cite{AO18}, it is common to study
average-case CSP instances, such as the \emph{clique} or
\emph{coloring} formulas %
on an \erdosrenyi random graph, and
prove refutation size lower bounds on these formulas in weak proof
systems.
The hope is that since we cannot obtain average-case lower bounds
from the hardness of simple formulas, these lower bounds hold
regardless of whether simple formulas are hard for a given proof
system.
While the average-case lower bound arguments
for weak proof systems
are getting more and more involved, we cannot rule out that these
lower bounds
hinge on the fact that
it is hard to refute
simple
combinatorial principles.

Our main result is an average-case refutation length lower bound for
semantic tree-like proof systems with a bounded number of distinct
lines.
More concretely we show that there exists an explicit family~$\calA$
of $n$-variate CNF formulas~$A$ such that if~$A$ is chosen uniformly
from~$\calA$, then any tree-like semantic proof system over a bounded
number of lines \aas requires super-polynomial length to refute~$A$.
These proof systems are very strong, and can efficiently
refute simple combinatorial principles such as the pigeonhole
principle, the Tseitin contradictions, and the clique-colouring
formula, thus establishing that our lower bound technique does not
hinge on these principles being hard.
As corollaries, we obtain average-case super-polynomial length lower
bounds for tree-like Frege refutation systems with lines of bounded size, and for
tree-like degree-$d$ threshold proof systems, for $d$ up to
$n^{1-\varepsilon}$. %
Our lower bounds apply for the semantic version of these systems and
the parameters are close to optimal: allowing slightly larger
line-size, or slightly larger degree, would allow these proof systems
to represent the CNF formula~$A$ on a single line, thus allowing these
semantic systems to immediately derive contradiction.

\subsection{Semantic Proof Systems}
\label{sec:intro-sem}

For concreteness, let us informally define these semantic proof
systems; we refer to \cref{sec:semantic-psys} for a formal treatment.
Consider a set~$\calF$ of $n$-variate Boolean functions. In the
following, we view each %
function~$f \in \calF$ as an indicator of a set of
assignments~$\alpha \in \set{0,1}^n$ that are ruled out, i.e., we
think of $\alpha$ being ruled out by $f$ if~$f(\alpha) = 1$.
The \emph{semantic~$\calF$ proof system}, denoted by~$\sem(\calF)$, is
an inferential proof system that operates over proof lines in~$\calF$:
a \emph{$\sem(\calF)$ refutation~$\pi$} of a
CNF formula~$A \coloneq C_1 \land \dots \land C_m$
is a sequence~$\pi \coloneq (f_1, \ldots, f_t)$ such that~$f_t = 1$ is
the constant $1$ function and each $f_i$ is either an axiom, that is,
there is a clause~$C \in A$ such that~$f$ is the indicator of
falsifying assignments of~$C$;
or $f_i \in \calF$ and
there are~$f_j,f_k$ with~$j,k < i$ such that~$f_i$ is (semantically)
implied by $f_j$ and $f_k$,
that is, for all $\alpha \in \{0,1\}^n$ if $f_i(\alpha) = 1$ then either $f_j(\alpha) = 1$ or $f_k(\alpha) = 1$.
The \emph{length} of
the derivation~$\pi$ is~$t$,
and
$\pi$
is \emph{tree-like} if each function~$f_j \in \pi$ %
is used at most once to derive some other function~$f_i$.
We denote the tree-like~$\semf$ proof system by~$\treesemf$.

Semantic tree-like proof systems are very powerful: 
for any fixed CNF formula~$A \coloneq C_1 \land \dots \land C_m$ there
is a set~$\calF$ of Boolean functions %
such that there are semantic refutations of~$A$ over~$\calF$ of
size~$O(m)$:
if~$\calF$ contains the
functions~$\Setdescr{f_i \coloneq \bigwedge_{j = 1}^i C_j}{i \in
  [m]}$, then there is a $\treesemf$ refutation of~$A$ %
by iteratively deriving the functions~$f_i$ for~$i=1, \ldots, m$.
More generally, even some large formula families~$\calA$ may be
refuted by~$\treesemf$ for small
sets~$\setsize{\calF} \ll \setsize{\calA}$. For example, the family of
Tseitin contradictions on at most $n$ variables may be refuted by
adding all linear equations mod 2 to~$\calF$, and the graph pigeonhole
formulas over at most $n$ variables may be refuted by adding linear
inequalities to~$\calF$.

Furthermore, if $\setsize{\calF} = \exp\bigl(\Omega(n\log n)\bigr)$ we
can assume---without increasing the size of $\calF$ by more than a
polynomial factor---that $\treesemf$ is well-behaved in the sense that
it is complete and closed under permutations of variables.  More
formally, for any $\calF$, we can construct a $\calF'\supseteq \calF$
of size at most $|\calF| \cdot \exp\bigl(O(n\log n)\bigr)$ that is
well-behaved. Indeed, it suffices to include the $3^n$ functions
representing all clauses, to ensure that $\treesem(\calF')$ is
complete. In order to make it closed under permutations, for every
$f \in \calF$, we further include $f$ under all permutations of
variables. This increases the size by at most a multiplicative factor
$\exp\bigl(O(n\log n)\bigr)$.

We can therefore conclude that there are sets $\calF$ of size
$\exp\bigl(O(n \log n)\bigr)$ such that $\treesemf$ is a well-behaved
proof system, and it refutes any fixed formula and even simple
families of formulas.  In general, these proof systems can be somewhat
unnatural. For instance, we cannot guarantee that they are closed
under restrictions nor that proofs are efficiently verifiable.  In
this paper, we consider concrete, natural proof systems that are
captured by this definition of semantic proof systems. We highlight
two of these, for which, prior to this paper, no %
superpolynomial refutation size lower bounds
were known.

The first of these are tree-like Frege refutation systems operating
over Boolean formulas of size at most~$s$.  We consider the refutation
version, where the system is given a CNF formula, viewed as a set of
axioms, and every line of the refutation is either one of these axioms
or can be derived by some sound inference rule from two previous
lines.  This system can be simulated by $\treesem(\calF_s)$,
where~$\calF_s$ is the family of formulas of size at most~$s$, and
thus~$\setsize{\calF_s} \leq \exp\bigl(O(s\log s)\bigr)$.
We note that while tree-like Frege simulates general, dag-like Frege,
it is unclear whether this also holds for Frege systems with bounded
line-size. This is because, given a length-$\ell$ dag-like Frege
refutation of a CNF formula $A$ with lines containing formulas of size
at most $s$, the known simulation produces a tree-like Frege proof of
length~$O(\ell)$ but increases the maximum size of the formulas in a
line to $\ell \, s$.  This might seem like a mild increase in formula
size but, in the semantic setting, allowing for formulas of size $|A|$
yields trivial upper bounds---the system can represent the formula $A$
in one line and thus immediately derive contradiction.

The second proof system we highlight are tree-like degree-$d$
threshold systems which operate over polynomial inequalities of degree
at most~$d$. Since there are~$\exp\bigl( O(n^{d+1}) \bigr)$ many
distinct such inequalities, there is a family $\calF$ of functions,
with $|\calF| \leq \exp\bigl( O(n^{d+1}) \bigr)$ such that $\treesemf$
simulates tree-like degree-$d$ threshold systems. We note that there
are known superpolynomial lower bounds for this system when $d$ is at
most logarithmic in $n$~\cite{bps07,GP18,IR21}.  Similarly to the case
of Frege, once the semantic version of these systems can encode a
contradictory CNF formula $A$ in one line, it can immediately derive
contradiction.  This implies in particular that semantic tree-like
degree-$d$ threshold systems can refute any $d$-CNF formula in length
linear in the formula size.

\subsection{Lower Bounds for Tree-Like Semantic Proof Systems}
\label{sec:results-semantic}

In order to prove refutation length lower bounds on $\treesemf$
refutations that hold for \emph{any} set~$\calF$ of bounded
size, %
we need to consider a family of %
formulas~$\calA$ of size~$\setsize{\calA} \gg \setsize{\calF}$ 
and 
argue that if~$A$ is sampled uniformly from~$\calA$, then 
\aas there are no short tree-like~$\sem(\calF)$ refutations of~$A$.
We present our lower bounds in three different parameter settings.
The first lower bound minimizes the width of the family of CNF
formulas~$\calA$.

\begin{theorem}[informal]
  \label{thm:main-informal}
  There is a family~$\calA$ of $n$\nobreakdash-variate
  $3$\nobreakdash-CNF formulas such that for \emph{any} set~$\calF$ of
  $n$-variate Boolean functions of
  size~$\setsize{\calF} \leq \exp({n^{2-\eps}})$ the following
  holds. If~$A$ is sampled uniformly from~$\calA$, then \aas all
  $\treesemf$ refutations of~$A$ are of super-polynomial length.
\end{theorem}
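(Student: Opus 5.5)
We would prove the statement with a counting argument. Since $\calF$ may be chosen adversarially, we cannot exploit any structure of the lines. Instead we bound, for a fixed $\calF$, how many formulas $A\in\calA$ can possibly have a short tree-like refutation over $\calF$, and then pick $\calA$ so large and so ``spread out'' that this is a vanishing fraction. A union bound over short refutations is hopeless in general, because the number of length-$t$ proofs is about $\setsize{\calF}^{t}$. The key is therefore to compress: a tree-like refutation of length $t$ should certify unsatisfiability through a small amount of information.

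The first step uses the tree-like structure to find a balanced decomposition. In a refutation tree with $t$ leaves, some internal node $g\in\calF$ has a subtree containing between $t/3$ and $2t/3$ leaves. The subtree derives $g$ from a subset $A_1\subseteq A$ of axioms, so $A_1\models g$. Replacing that subtree by $g$ gives a refutation of $(A\setminus A_1)\wedge \neg g$; more precisely, $A\setminus A_1$ together with $g$ implies $1$. Recursing on both parts for $O(\log t)$ levels, with $t=\poly(|A|)$, expresses unsatisfiability through $\poly(t)$ intermediate functions from $\calF$. This alone gives no compression, so we use a sharper variant. At depth $k=O(\log t)$ the refutation yields $g_1,\dots,g_r\in\calF$, where $r\le t^{O(1)}$ but crucially $r$ is small compared to $|A|$ when the pieces are chosen appropriately. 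Each $g_i$ is implied by a sub-CNF of $A$, and together they are contradictory. What this shows is that $A$ admits a ``cover'' by a few $\calF$-functions, each implied by a block of clauses.

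The second step designs $\calA$, which is where the paper's BlockClique macros hint at the construction: a planted-block structure with a hidden random object (e.g.\ block-clique/bipartite structure over $n$ variables with $\approx n^{2-\eps'}$ random bits). In such a family any function $g$ implied by a sub-CNF of size $\ll |A|$ carries only limited information about the random seed. Concretely, we want the property that for any fixed $g$, the probability over $A\sim\calA$ that a small clause subset of $A$ implies $g$ while $g$ is ``useful'' (not implied by very few clauses) is at most $\exp(-n^{2-\eps/2})$. A union bound over $g\in\calF$ then uses $\setsize{\calF}\le\exp(n^{2-\eps})$: \aas every useful $g\in\calF$ fails to be implied by any medium-sized sub-CNF of $A$. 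Thus every $\calF$-line derived in the proof is essentially implied by a tiny set of clauses (a ``local'' function).

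The third step, which I expect to be the main obstacle, converts this locality into a length lower bound. Here we would argue along the proof tree from the root with an adversary in the spirit of a Prover–Delayer game. Each line $g$ is implied by few clauses $S_g$, and the root's contradiction requires the union of these clause sets to be globally unsatisfiable. A sub-CNF witnessing unsatisfiability of the random instance must be large, say of size $n^{\Omega(1)}$, by an expansion property of $A$ that holds \aas. Hence the tree must combine many local pieces. Since each inference merges only two lines, a tree whose leaves have local supports and whose root has a large support forces a superpolynomial number of leaves. We would try to establish this via a resolution-width-style measure $\mu(g)=\min\setsize{S_g}$, which is subadditive over inferences. That gives only a linear bound, so we will need a stronger measure, likely a progress measure counting the satisfiable extensions that are killed, combined with the fact that in tree-like proofs each branch must independently rule out many assignments. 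The expected route is to show that a Delayer strategy earns $\omega(\log |A|)$ points, using that $\mu$ grows slowly. Making this quantitative while keeping the union bound in the second step valid is the delicate part.
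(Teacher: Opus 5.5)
Your proposal has a genuine gap. The second step's locality property is false at every scale where it would help, and the third step, which has to produce the superpolynomial bound, is never supplied.

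To see the first problem, note that $|\mathcal{F}|$ may be as large as $\exp(n^{2-\varepsilon})$. So an adversarial $\mathcal{F}$ can contain, for every set $S$ of at most $m\approx n^{2-\varepsilon}/(C\log n)$ potential edges, the line $g_S=\bigvee_{\{u,v\}\in S}(x_u\wedge x_v)$. Whenever all of $S$ is missing from $G$, $g_S$ is derivable in $O(m)$ tree-like steps from the $m$ corresponding edge axioms. That event has probability $2^{-m}$, and since $G$ has $\Theta(k^2n^2)$ non-edges, whp there are astronomically many such $S$. Counting the tuple assignments $g_S$ rules out, at least $m\,n^{k-2}/\binom{k}{2}$ of them, shows that no sub-CNF with fewer than roughly $m/k^2$ clauses implies it. Hence a per-line failure probability of $\exp(-n^{2-\varepsilon/2})$ is impossible unless ``medium-sized'' means about $n^{2-\varepsilon/2}$ clauses, at which point lines are essentially global. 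Moreover, any support-type measure that is subadditive and equals one on axioms is bounded by the size of a minimal unsatisfiable subformula, here $\operatorname{poly}(n)$. Such a measure can therefore never certify more than polynomially many leaves. The ``stronger measure'' or Delayer strategy you defer to is exactly the missing content. Two smaller points: after cutting out the subtree of $g$, the remainder refutes $g$ together with the remaining leaves, not $A\setminus A_1$, because axioms can recur as leaves. And balancing does not yield $r\ll|A|$.

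Your instinct to compress is right, but the compression must be done per leaf, and locality must be replaced by a linear pseudo-measure. The paper uses Brent--Spira balancing only to obtain an $\mathcal{F}$-decision tree of depth $\Delta=O(\log t)$ solving the falsified clause search problem. The family is the unary block $k$-clique formulas with $k=4\log n$ on $\mathcal{G}(n,k,1/2)$, with block axioms split into $3$-clauses via extension variables. Restricting to the witnessing assignments $\rho_t$, each leaf becomes a set $Q_\ell$ of tuples cut out by at most $\Delta$ query answers. So all possible leaf sets lie in one family $\mathcal{Q}$ of size $|\mathcal{F}|^{O(\Delta)}\le\exp(n^{2-\delta})$ that does not depend on the refutation, and each $Q_\ell$ is ruled out by a single missing edge. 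The pseudo-measure $\mu_{G,d}(t)=n^{-k}\sum_{E\subseteq\binom{t}{2},\,\mathrm{vc}(E)\le d}\chi_E(G)$ then satisfies $\mu(\mathcal{T})=1-o(1)$. For $Q$ ruled out by a missing edge between blocks $i,j$, pairing $H$ with $H\cup\{i,j\}$ cancels everything except boundary terms. These are grouped by cores and bounded by $n^{-\Omega(\varepsilon d)}$ once $G$ is pseudorandom. The union bound runs over $Q\in\mathcal{Q}$ and cores, using a concentration bound for weighted Fourier sums with failure probability $2^{-\Omega(n^{2-\delta/2})}$. This controls arbitrary $Q$, however many axioms a line aggregates, which the locality approach cannot do. Since the $Q_\ell$ partition $\mathcal{T}$ and $\mu$ is linear, the tree has $n^{\Omega(D)}=n^{\Omega(\log n)}$ leaves.
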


As corollaries
we obtain optimal average-case lower bounds for the tree-like versions
of %
cutting planes and resolution over linear
equations proof systems. We discuss these consequences in
\cref{sec:results-cr}.

Note that the $3$-CNF formulas~$A \in \calA$ for which
\cref{thm:main-informal} holds have to be rather dense since the
set~$\calA$ needs to be of size~$\setsize{\calA} \gg \setsize{\calF}$,
for any~$\calF$ of size~$\exp\bigl(O({n^{2-\eps}})\bigr)$. 
If we want to
allow the proof system to operate over significantly more lines, say,
$\setsize{\calF} \gg \exp(n^{3})$, then our family~$\calA$ of hard
formulas needs to be over formulas of larger width as there are
only~$O\bigl(\exp(n^3)\bigr)$ CNF formulas of width 3.
The following statement is the intermediate parameter setting for CNF
formulas of constant width.

\begin{theorem}[informal]
  \label{thm:main-informal-ell-CNF}
  For any constant even integer~$\ell \geq 4$, there is a
  family~$\calA$ of $n$\nobreakdash-variate $\ell$\nobreakdash-CNF
  formulas %
  such that for \emph{any} set~$\calF$ of $n$\nobreakdash-variate
  Boolean functions of
  size~$\setsize{\calF} \leq \exp\bigl(n^{\ell - \eps}\bigr)$ the
  following holds. If~$A$ is sampled uniformly from~$\calA$, then \aas
  all tree\nobreakdash-like $\sem(\calF)$ refutations of~$A$ are of
  super-polynomial length.
\end{theorem}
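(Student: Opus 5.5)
Our plan is to prove the statement with a counting argument over short tree-like refutations, combined with a structural property that a uniformly random $A \in \calA$ satisfies \aas. We take $\calA$ to be block-structured clique formulas: for a hidden $k$-partite graph $G$ on $k$ blocks of $n^{\Theta(1)}$ vertices each, $A_G$ states that $G$ contains a block-clique (one vertex per block, pairwise adjacent). Each non-edge becomes a width-$2$ clause, the block constraints are long clauses, and for constant width $\ell$ we encode the block constraints by $\ell$-ary edge/hyperedge constraints. Choosing $\ell$-uniform hyperedge constraints lets the randomness be over $\approx n^{\ell}$ independent bits, so $\setsize{\calA} = \exp(\Theta(n^{\ell}))$. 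This is why $\ell$ must be even: half the coordinates of each constraint come from one side of a bipartition, which makes the encoding symmetric.

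\textbf{Step 1 (decision-tree / depth reduction).} A tree-like $\treesemf$ refutation of length $L$ is balanced into a refutation of depth $O(\log L)$, or equivalently viewed as a Prover--Adversary game. We would show that any length-$L$ tree-like refutation contains a small \emph{core}: a set of at most $L$ lines $f \in \calF$ such that contradiction follows from $O(\log L)$ axioms per branch. Concretely, we extract from a root-to-leaf walk a small set $S$ of axioms with $\setsize{S} \le \polylog(L)\cdot w$, together with a set of at most $L^{O(1)}$ functions of $\calF$ that \emph{certify} the refutation.

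\textbf{Step 2 (counting).} The number of possible certificates is at most $\setsize{\calF}^{\poly(L)} \le \exp\bigl(n^{\ell-\eps}\poly(L)\bigr)$. For each fixed certificate $\Gamma$, we bound the probability over $A$ that $\Gamma$ is a valid refutation skeleton for $A$ by $\exp(-n^{\ell-\eps/2}\poly(L))$, and then take a union bound. For $L = n^{O(1)}$ we need each certificate to "pin down" $\Omega(n^{\ell-\eps/2})$ bits of $A$ with independent failure probability.

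\textbf{Step 3 (the key lemma).} A fixed small set of Boolean functions can be consistent with a refutation of $A$ only if its lines semantically entail a large portion of the hidden structure. Here we use a planted-solution or extension argument. For a random $G$, every partial assignment to few blocks that is locally consistent extends: \aas every set of $o(k)$ chosen vertices has many common neighbours in the remaining blocks. Hence any line refuting a locally consistent partial clique must "know" the non-edges around it, and a short tree can only examine polylogarithmically many blocks per path. We expect this to be the main obstacle: translating "$f$ is implied by two children" into a statement that a fixed $f$ agrees with many independent random bits of $A$. Our first attempt would be to consider, for each line $f$, the set of block-clique assignments it rules out, and to show by a Janson-type inequality that for a fixed $\calF$-line family the probability of covering all candidate cliques without using most edge bits is exponentially small in $n^{\ell-\eps}$. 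The size bound on $\calF$ then absorbs the union.
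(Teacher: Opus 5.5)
Your Step~1 (Brent--Spira balancing into an $\calF$-decision tree of depth $O(\log L)$) is what the paper does. The counting in Step~2, however, cannot close, and Step~3 does not supply a fix.

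Whether a fixed certificate is a valid refutation of $A$ depends on $A$ only through whether its at most $L$ leaf axioms are clauses of $A$; the semantic inference steps do not involve $A$ at all. For clique formulas with edge probability $1/2$ (the paper's instantiation), this is the event that at most $L$ prescribed edges are absent. That event has probability $0$ or at least $2^{-L}$. So the per-certificate bound $\exp\bigl(-n^{\ell-\eps/2}\poly(L)\bigr)$ can only hold if no certificate is valid for any $A$ of positive probability. This is false, e.g.\ for any $\calF$ containing all clauses, since two completely non-adjacent blocks give a polynomial-length refutation. A union bound over $\setsize{\calF}^{\poly(L)}=\exp\bigl(n^{\ell-\eps}\poly(L)\bigr)$ certificates is therefore hopeless.

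The same obstruction hits your Janson-type idea. A fixed tree with $r$ leaves, each of whose tuple sets shares a common edge, is correct whenever one such edge per leaf is missing, which already has probability at least $2^{-r}$. Bounding the probability that a fixed tree works, or knowing that each leaf is individually explained by a missing edge, gives no handle on the number of leaves.

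The missing idea is to union bound over single leaves rather than whole refutations, and to aggregate over leaves with a \emph{linear} functional.

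Each leaf of the balanced tree carries a set $Q_\ell\subseteq\calT$ that is an intersection of at most $\Delta=O(\log L)$ sets $Q(f)$ or $Q(\lnot f)$. So all leaf sets lie in a family $\calQ$ of size $\setsize{Q(F_{\setsize{X}})}^{O(\log L)}\le\exp(n^{2-\delta})$, depending only logarithmically on $L$.

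The event union bounded per $Q\in\calQ$ and core $F$ is not whether $Q$ is ruled out by a missing edge. That is guaranteed deterministically by correctness of the tree, and for a fixed $Q$ it has constant probability. The event is instead a large deviation of $n^{-k}\sum_{t\in Q}\sum_{H\in\calH_F}\chi_{H(t)}(G)$. An $m$-th moment bound with $m\approx n^{2-\delta/2}$ makes it $2^{-m}$-unlikely, which beats $\setsize{\calQ}$. Your common-neighbourhood concentration enters only to bound the weights in this sum.

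On such a good graph the pseudo-measure satisfies $\mu_{G,d}(\calT)\ge 1-o(1)$. On a leaf ruled out by a missing edge $e$, the cancellation $\chi_{H(t)\cup e}=-\chi_{H(t)}$ leaves only $e$-boundary terms, so $\mu_{G,d}(Q_\ell)\le n^{-\Omega(\eps d)}$. Since the leaves partition $\calT$, linearity forces $n^{\Omega(\eps d)}$ leaves.

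Your family description also needs correcting: it is not a hyperedge encoding. The paper uses $\bclique_c(G,k)$ with $\ell=2c$, which is even because each edge axiom $C_u\vee C_v$ joins two width-$c$ vertex clauses. The graph is $G\sim\calG(n,k,1/2)$ with $k=4\log n$, and range axioms are rewritten as $3$-CNF. With $N=\Theta(kcn^{1/c})$ variables one gets $\exp\bigl(N^{\ell-\eps}\bigr)\le\exp\bigl(n^{2-\eps/(2c)}\bigr)$. Then \cref{thm:main} with $D=2\log n$ gives length $n^{\Omega(\log n)}$, superpolynomial in $\setsize{A}=\poly(n)$.
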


Note that the size upper bound on~$\calF$ is essentially optimal as
there are $\exp\bigl(O(n^{\ell})\bigr)$ many $\ell$-CNF formulas.
Recall that for the theorem to hold it must be the case that
$|\calA| \gg |\calF|$ and hence most formulas in $\calA$ must be
dense. For the family $\calA$ we consider, most formulas $A\in \calA$
are of size~$\widetilde\Theta(n^\ell)$.

\Cref{thm:main-informal-ell-CNF}, moreover, establishes a strict hierarchy of
tree-like semantic proof systems
\begin{equation}
\treesem(\calF_4) \subsetneq \treesem(\calF_6) \subsetneq
\treesem(\calF_8) \subsetneq \cdots
\end{equation}
where~$\treesem(\calF_{\ell})$ refutes \emph{any} $k$-CNF formula $F$
with~$k < \ell$ in length linear in $|F|$
but, at the same time,
cannot refute all~$\ell$-CNF formulas $F$ 
in length polynomial in $|F|$.
This implies that this hierarchy is
strict with respect to
refutations of polynomial length.

Finally, we observe that both theorems above
imply that most minimially unsatifiable constant width CNF formulas 
are dense. In particular,
\cref{thm:main-informal} shows that most minimally unsatisfiable 
$3$-CNF formulas have at least $n^{2-\eps}$ clauses,
and \cref{thm:main-informal-ell-CNF} shows that for even $\ell\geq 4$
most minimally unsatisfiable
$\ell$-CNF
formulas %
are of almost maximum density.
This finding was somewhat surprising to us, although it is
possible that it has already been established previously by other
methods.

The final parameter regime we consider optimizes the number of lines~$\calF$
while maintaining that the lower bound is super-polynomial in the
formula size.

\begin{theorem}[informal]
  \label{thm:main-informal-binary}
  For every function $s(n)$, satisfying 
  $n^{2-\varepsilon} \leq s(n) \leq 2^{n^{1-\varepsilon}}$ we exhibit
  an explicit family~$\calA$ of $n$-variate CNF formulas $A$, each of
  size~$|A| \le s(n)^{1+\eps}$, 
  such that for \emph{any} set~$\calF$ of $n$\nobreakdash-variate
  Boolean functions of
  size~$\setsize{\calF} \leq \exp\bigl(s(n)\bigr)$ the
  following holds. If~$A$ is sampled uniformly from~$\calA$, then \aas
  all tree\nobreakdash-like $\sem(\calF)$ refutations of~$A$ are of length
  superpolynomial in $|A|$.
\end{theorem}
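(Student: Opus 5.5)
The plan is a counting argument: short tree-like proofs over a small $\calF$ can only refute few formulas, while $\calA$ is much larger. The lines can come from an arbitrary $\calF$, so the only usable information is $\setsize{\calF} \le \exp(s(n))$. Any tree-like $\treesemf$ refutation of length $L$ is a binary tree whose internal nodes carry functions of $\calF$ and whose leaves carry axioms, that is, clauses of $A$.

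\textbf{Step 1 (compression).} I would first turn a short refutation into a short certificate of unsatisfiability over $\calF$. Two routes are available.
\begin{itemize}[nosep]
  \item \emph{Balancing.} A tree of size $L$ has a node whose subtree has size between $L/3$ and $2L/3$. Its label $g\in\calF$ splits the refutation into a derivation of $g$ and a refutation of $A\wedge \lnot g$ (the remaining tree with $g$ as a leaf). Recursing gives a decomposition of depth $O(\log L)$.
  \item \emph{Cut of size $k$.} A simpler version: choose a cut of $k$ internal nodes $g_1,\dots,g_k$ such that each $g_i$ is derived from few clauses of $A$ (at most $L$), and the top part certifies $\bigwedge_i \lnot g_i$ unsatisfiable.
\end{itemize}
The key consequence in either case: if $A$ has a tree-like refutation of length $L=\poly(|A|)$, then there exist $g_1,\dots,g_k\in\calF$ with $k$ small (e.g.\ $\polylog$ or $n^{o(1)}$) such that $\bigvee g_i\equiv 1$ and each $g_i$ is semantically implied by the clauses of $A$ at the leaves of its subtree. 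For the cut-based version I would choose the $g_i$ as nodes whose subtrees contain at most $t$ axioms, with $t$ a parameter.

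\textbf{Step 2 (the hard family).} Take $\calA$ to be a family of planted or structured unsatisfiable formulas, for instance $A = B \wedge R$. Here $R$ is a random subset of a structured clause set of size $\approx s(n)^{1+\eps}$, e.g.\ width $\approx\log s$ clauses over blocks. The family must satisfy the following \emph{spreadness} property: for any fixed Boolean function $g$ and any small set $S$ of clauses, the probability over $A$ that $g$ is implied by $S\subseteq A$ while being ``useful'' (not implied by a tiny number of clauses) is at most $\exp(-s(n)^{1+\eps/2})$.

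\textbf{Step 3 (union bound).} The number of candidate certificates is at most $\setsize{\calF}^k\cdot(\text{choices of leaf clause sets})\le \exp(k\,s(n)+ \polylog)$. The probability that a fixed certificate is valid for a random $A$ is $\exp(-\Omega(s(n)^{1+\eps'}))$. So with $k=s(n)^{\eps'/2}$ the union bound goes through \aas, and hence any refutation needs superpolynomial length; the choice of $k$ must fit the decomposition of Step 1.

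\textbf{Main obstacle.} The hardest part is the probabilistic estimate in Step 2. For a fixed $g$, the probability that a few leaves of a proof of length $\poly(|A|)$ imply $g$ must be tiny, even though the lines are arbitrary functions and the subproofs may use polynomially many clauses. Getting the exponent $s(n)^{1+\eps}$ out of a formula of size only $s(n)^{1+\eps}$ is tight. I would first try to show that each $g_i$ must ``cover'' a constant fraction of a random structured core (via a density argument on minimally unsatisfiable subformulas). If one fixes $g$, the implied sets of clauses then have restricted randomness, and I would bound this by a Chernoff-type estimate over the random choice of clauses in each block.
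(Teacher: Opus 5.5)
Your balancing route in Step~1 is the paper's first step as well: Brent--Spira balancing gives an $\calF$-decision tree of depth $\Delta = O(\log L)$ for the falsified-clause search problem, so each leaf is described by at most $\Delta$ functions of $\calF$. The gap is what comes after. Nothing in the proposal turns per-leaf information into a lower bound on the \emph{number} of leaves, and the mechanism you suggest cannot do it. That mechanism is a union bound over certificates, each valid with probability $\exp(-\Omega(s^{1+\eps'}))$. In the cut version, this union bound forces $k \le s^{\eps'/2}$. A refutation of length $L = \poly(\setsize{A})$ then has pieces with about $L/k$ axioms below them, and once $L \ge \setsize{A}\, s^{\eps'/2}$ a piece may contain all of $A$. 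The condition ``$g_i$ is implied by the axioms below it'' then degenerates to ``$g_i$ is implied by $A$''. The trivial cut $g_1 = 1$ satisfies this for every unsatisfiable $A$, so the certificate carries no information. Your count of leaf clause sets as $\exp(\polylog)$ is also inconsistent with pieces of size $L/k$. In the balanced version, a union bound over whole trees costs $\setsize{\calF}^{\Theta(L)}$. Single-leaf events, on the other hand, are not rare: a leaf may rule out one assignment, and that assignment falsifies some clause of any unsatisfiable $A$. So the ``spreadness'' property of Step~2, which you flag as the main obstacle, is not merely unproved. In the stated form it is the wrong kind of event, and it is where the entire content of the theorem lies.

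The missing idea is a \emph{linear} weighting of the leaves, so that the union bound is taken only over single leaf sets. The paper takes $A = \bclique_c(G,k)$ with $G \sim \mathcal{G}(n,k,n^{-2/D})$, where $n$ is the block size of $G$, not the number of variables. It considers only the assignments $\rho_t$ encoding tuples $t \in \calT$ and uses the pseudo-measure $\mu_{G,d}(t) = n^{-k}\sum_{E\subseteq\binom{t}{2},\ \vc(E)\le d}\chi_E(G)$ built from biased edge characters.

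Let $\calQ$ be the family of intersections of at most $\Delta$ sets $Q(f)$, $Q(\lnot f)$ with $f \in \calF$. This family is fixed in advance, independent of $G$, and has size at most $\exp(n^{2-\delta})$. The paper shows that asymptotically almost surely $\mu(\calT) = 1 - o(1)$ and $\mu(Q) \le n^{-\Omega(\eps^2 D)}$ for every $Q \in \calQ$ ruled out by a missing edge $e$.

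For the second bound, $\chi_{H(t)\cup e} = -\chi_{H(t)}$ when $e \notin E(G)$, so paired terms cancel. Only terms with $\vc(H) = d < \vc(H \cup \{i,j\})$ survive, where $i,j$ are the blocks of $e$. These are grouped by cores $F$, and each sum $\sum_{t \in Q}\sum_{H \in \calH_F}\chi_{H(t)}(G)$ obeys a high-moment tail bound of $2^{-\Omega(n^{2-\delta/2})}$, which beats $\setsize{\calQ}$. This is a concentration event for a low-degree polynomial in the edge indicators of $G$, not an event of the form ``these clauses are present''.

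Since the leaf sets partition $\calT$, linearity forces $n^{\Omega(\eps^2 D)}$ leaves. The signed weights are essential: under the uniform measure, a set ruled out by one missing edge can have weight $n^{-2}$, which gives only about $n^2$ leaves.

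Finally, the range $n^{2-\eps} \le s \le 2^{n^{1-\eps}}$ does not come from a new family. It comes from choosing the encoding arity $c$ so that the number of variables lies between about $\log^{1+\eps_0} n$ and $n\log^{\eps_0} n$, with $D = \log^{\eps_0} n$, $k = 2\log^{\eps_0} n$, and $\setsize{A} = O(n^4)$.
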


As before, in order to prove lower bounds for tree-like semantic $\calF$ proof systems for larger families $\calF$, we need to consider CNF formulas $A$ of larger width and over more clauses. In this sense, we can allow for larger families $\calF$
in terms of the number of variables of $A$, but not in terms
of its size. Nevertheless, the lower bound we obtain is a length lower bound,
that is, a lower bound in the number of steps in the refutation, and it is superpolynomial in the number of clauses of $A$.
We note, moreover, that it is even possible to push the size of~$\calF$
to~$\exp\bigl(\exp\bigl(O(n)\bigr)\bigr)$ albeit at the expense of the
lower bound: it becomes polynomial in the formula size instead of
super-polynomial.

We establish
\cref{thm:main-informal,thm:main-informal-ell-CNF,thm:main-informal-binary}
by extending the lower bound approach of constructing a
\emph{pseudo\nobreakdash-measure}~\cite{dRPR23UnarySA}.
This approach %
has %
been successfully employed to show essentially optimal average-case
clique refutation size lower bounds for the Sherali--Adams proof
system with bounded coefficients~\cite{dRPR23UnarySA}.
It has also proven useful to analyse TFNP intersection
classes~\cite{HKT24intersections} and similar ideas were used to
obtain total coefficient size lower bounds on Nullstellensatz
refutations~\cite{PZ24Coefficient}.

The very high-level proof idea of~\cite{dRPR23UnarySA} adapted to our
setting is to construct a linear pseudo-measure~$\mu$ that assigns
contradiction to~$1$ but any weakening of an axiom to a value of small
magnitude.
Since the measure is linear, the measure of all the leaves of a
tree-like proof, that is, weakenings of axioms, need to sum to the
measure of contradiction, which is equal to~$1$.
Since the measure assigns each leaf small value, there must be many
leaves.

In the setting of~\cite{dRPR23UnarySA}, all the weakenings of axioms
are very structured.
This is certainly not the case in our setting as we allow arbitrary
proof lines without any syntactic restriction.
One of our contributions is to extend their proof strategy to any
fixed set of weakenings of axioms regardless of their structure.
At the heart of the argument is a union bound over all possible
weakenings of axioms of any refutation.
Since we have a bound on the number of distinct proof lines we also
have a bound on the number of such weakenings of axioms which allows
us to appeal to the union bound.
This argument is in some sense a delicate counting argument.
Finally, let us mention that we do \emph{not} rely on the
non-negativity argument of~\cite{dRPR23UnarySA}, which does seem to
rely on the precise structure of proof lines.

As mentioned in \cref{sec:intro-sem}, the set~$\calF$ of proof lines
may be chosen such that~$\treesem(\calF)$ refutes \emph{all} standard
benchmark formulas based on simple combinatorial principles.
Hence the approach of constructing a pseudo-measure
may have the potential to yield lower bounds %
for even stronger proof systems.

\begin{corollary}
  \label{thm:pseudo-measure}
  Lower bounds obtained by the pseudo-measure approach do not rely on
  the fact that simple combinatorial principles are hard for a given
  proof system.
\end{corollary}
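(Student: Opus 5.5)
The corollary is a meta-statement, so I would derive it directly from the examples in \cref{sec:intro-sem} together with the informal main theorems. The key observation is that the pseudo-measure argument works for \emph{every} set~$\calF$ of bounded size, with no assumption on its structure. So we can choose~$\calF$ adversarially so that it contains short refutations of all standard simple combinatorial principles, and the argument still applies.

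\textbf{Step 1: build a strong line set.} Take $\calF_0$ to be the union of three families:
\begin{itemize}
\item all linear equations mod~$2$ over the $n$ variables, of which there are $2^{O(n)}$;
\item all linear inequalities with polynomially bounded integer coefficients, of which there are $\exp\bigl(O(n\log n)\bigr)$;
\item all clauses, of which there are $3^n$.
\end{itemize}
Optionally, close this set under permutations of variables, as discussed in \cref{sec:intro-sem}. Then $\setsize{\calF_0}\le \exp\bigl(O(n\log n)\bigr)$. As noted in \cref{sec:intro-sem}, $\treesem(\calF_0)$ is complete and has polynomial-length tree-like refutations of the Tseitin contradictions and of the (graph) pigeonhole formulas. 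Adding the counting inequalities underlying the clique-colouring argument, which are again linear inequalities, covers clique-colouring as well.

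\textbf{Step 2: apply the lower bound.} Since $\exp\bigl(O(n\log n)\bigr)\le \exp(n^{2-\eps})$, \cref{thm:main-informal} applies to $\calF_0$ and to any superset $\calF\supseteq\calF_0$ of size at most $\exp(n^{2-\eps})$. Hence a uniformly random $A\in\calA$ \aas requires super-polynomial length in $\treesem(\calF)$, even though this same system refutes the simple principles efficiently. The same holds for \cref{thm:main-informal-ell-CNF,thm:main-informal-binary}, which allow even larger $\calF$.

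\textbf{Step 3: conclude.} The lower bound therefore cannot be a consequence of hardness of those principles for the proof system. That is exactly the content of \cref{thm:pseudo-measure}, interpreted as saying that the pseudo-measure proofs of these theorems never invoke such hardness.

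\textbf{Main obstacle.} The delicate point is that the corollary is informal. Making it precise requires checking that the theorems really are uniform over all $\calF$ of the stated size, which the union-bound argument over weakenings of axioms provides. It also requires verifying that the polynomial-length refutations of clique-colouring and pigeonhole exist using lines from a family of size $\exp\bigl(O(n\log n)\bigr)$. I expect clique-colouring to be the step needing most care. To handle it, I would explicitly include the threshold functions ``at most $k$ of these variables are true'' for every subset of variables and every $k$; there are only $2^{O(n)}\cdot n$ such functions. With these available, the usual cutting-planes style refutation goes through semantically, in tree-like form, with polynomial length.
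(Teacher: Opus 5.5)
Your overall route matches the paper's. The lower bounds hold uniformly for every line set below the size threshold, while the simple families become easy over suitable line sets of size $\exp\bigl(O(n\log n)\bigr)$ (\cref{prop:php-simple,prop:tseitin-simple,prop:lift-simple}). Steps~2 and~3 are therefore fine.

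The gap is in Step~1. A semantic system over a fixed line set is not closed under re-encoding, and your $\calF_0$ does not refute two of the principles you list. Pigeonhole is fine with polynomially bounded coefficients: summing hole axioms gives $|S|\,x_{q,h}+\sum_{p\in S}x_{p,h}\leq |S|$, which together with $\sum_{p\in S}x_{p,h}\leq 1$ yields $\sum_{p\in S\cup\{q\}}x_{p,h}\leq 1$.

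\emph{Tseitin.} Having all $\F_2$-affine lines does not let you derive the parity line $\sum_{e\ni v}x_e=\alpha_v$ from its $2^{\deg(v)-1}$ clauses. For $\deg(v)\geq 3$ this derivation is in fact impossible over $\calF_0$. A semantic inference rules out nothing its premises do not, so every line of such a derivation rules out only assignments violating the parity at $v$. The members of $\calF_0$ with this property are the parity line itself and lines ruling out at most one assignment to the edges at $v$ for each setting of the remaining variables:
\begin{itemize}
\item For clauses, a subcube leaving an edge variable at $v$ free contains both parities.
\item For affine lines, if the form involves an edge variable at $v$, every slice is a local affine hyperplane, which inside a parity class must be that class, so the line is the parity line. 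Otherwise every slice is empty or the whole local cube, so the line rules out nothing.
\item For threshold lines, suppose a halfspace contains cube points $a\neq b$ of equal parity, and let $c$ be $a$ with one differing coordinate flipped. Then the halfspace contains $c$ or the cube point $a+b-c$, both of opposite parity.
\end{itemize}
Two lines of the second kind cover at most two of the $2^{\deg(v)-1}\geq 4$ violating local assignments. Hence a node carrying the parity line would need a child carrying it again, which is impossible in a finite tree. The natural refutation (derive each parity line, then add them) is therefore unavailable, and you give no other. This is exactly what \cref{lem:encoding-robust} repairs: for each of the $2^{O(n)}$ parity functions $g$, add the prefix lines $C_1\wedge\dots\wedge C_i$ of $\CNF(g)$.

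\emph{Clique-colouring.} There is no ``usual cutting-planes style refutation'' to run. Clique-colouring is the standard family requiring exponential-size cutting planes refutations, by Pudl\'ak's monotone real circuit interpolation, which Filmus, Hrube\v{s} and Lauria extended to semantic cutting planes. The linear ``at most $k$'' thresholds cannot express the map from clique members to colours, which involves products of clique and colouring variables.

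The fix is the observation from \cref{sec:intro-sem}: a fixed formula $C_1\wedge\dots\wedge C_m$ is refuted in length $O(m)$ once the prefix lines $C_1\wedge\dots\wedge C_i$ are present. There are only polynomially many clique-colouring formulas per number of variables (one per parameter choice). So adding these chains and closing under permutations keeps the size at $\exp\bigl(O(n\log n)\bigr)$.
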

While \cref{thm:main-informal} highlights %
a
major strength of the
pseudo-measure approach it at the same time points %
to its
main weakness: in its current form the approach does not depend
on the precise
syntactic rules of a %
proof system.
Whether this approach can be adapted so that it depends on the precise
syntactic derivation rules
is an interesting
direction, which we leave as an open problem.

The family of formulas for which we
exhibit the superpolynomial lower bound of
\cref{thm:main-informal,thm:main-informal-ell-CNF,thm:main-informal-binary}
is the family of \emph{$k$-clique formulas}. For each
graph~$G \in \set{0,1}^{\binom{n}{2}}$ this family contains a formula
claiming that~$G$ contains a $k$-clique.
It is well-known that the maximum
clique size of a graph sampled by including each edge independently
with probability~$1/2$ is bounded by~$2(1+o(1))\log n$. Hence formulas
sampled uniformly from this family are with high probability
unsatisfiable for~$k \geq 3 \log n$.
Note that since this family is of
size~$2^{\binom{n}{2}} = \exp\bigl(O(n^2)\bigr)$ the restriction on
the size of~$F$ in \cref{thm:main-informal} is essentially tight.
To obtain \cref{thm:main-informal-ell-CNF} we consider non-standard
encodings of the clique formula that interpolate between the standard
unary encoding used for \cref{thm:main-informal} and the harder to
refute binary encoding used to obtain \cref{thm:main-informal-binary}.
To get the full range of parameters
in \cref{thm:main-informal-binary},
apart from the 
different encodings, we also need to
consider smaller edge probability 
so that there are no $k$-cliques
of size $\log^{\varepsilon} n$.

\subsection{Lower Bounds for Tree-Like Cook--Reckhow Proof Systems}
\label{sec:results-cr}

From \Cref{thm:main-informal-binary}, we obtain the first
superpolynomial refutation length lower bounds on
tree\nobreakdash-like~$\Th(d)$ refutations, \ie refutations with lines
consisting of degree-$d$ polynomial inequalities, for $d$ polynomial
in the number of variables. The previously strongest lower bounds held
for~$d$ logarithmic in the number of variables \cite{GP18,IR21}, both
building on connections to $d$-party communication
complexity~\cite{bps07}.

\begin{corollary}[informal]
  \label{thm:threshold-k-informal}
  There is a family~$\calA$ of $n$\nobreakdash-variate
  formulas such that if~$A$ is sampled uniformly from~$\calA$, then
  \aas any tree\nobreakdash-like $\Th\bigl(n^{1-\eps}\bigr)$
  refutation of~$A$ is of length superpolynomial in
  $|A|$. %
\end{corollary}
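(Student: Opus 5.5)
We will derive \cref{thm:threshold-k-informal} from \cref{thm:main-informal-binary} by a simulation. The reduction is to exhibit, for $d = n^{1-\eps}$, a set~$\calF$ of $n$-variate Boolean functions with $\setsize{\calF} \le \exp\bigl(s(n)\bigr)$ for an admissible $s(n) \le 2^{n^{1-\eps'}}$, such that every tree-like $\Th(d)$ refutation of~$A$ can be turned into a $\treesemf$ refutation of~$A$ of at most the same length.

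\textbf{Choice of~$\calF$.} We take $\calF$ to be the set of indicator functions of degree-$d$ polynomial inequalities $p(x) \ge 0$ over $\{0,1\}^n$. The first concern is that integer coefficients are a priori unbounded, so we should count the distinct Boolean functions, not syntactic inequalities. Every such function is a polynomial threshold function of degree~$d$ in $n$ Boolean variables. Standard counting (via multilinearization to $N=\sum_{i\le d}\binom{n}{i}$ monomials and the bound on the number of linear threshold functions of $N$ features over $2^n$ points, at most $2^{O(nN)}$) gives
\[
\setsize{\calF} \le \exp\Bigl(O\bigl(n\cdot n^{d}\bigr)\Bigr)
= \exp\bigl(2^{O(d\log n)}\bigr).
\]
We also add the axiom indicators (clauses). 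These are expressible as degree-$\le$ width inequalities anyway, or can simply be added at negligible cost.

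\textbf{Choice of~$s(n)$ and the simulation.} With $d = n^{1-\eps}$ we set $s(n) = 2^{O(n^{1-\eps}\log n)} \le 2^{n^{1-\eps/2}}$. This lies in the allowed range of \cref{thm:main-informal-binary} after renaming~$\eps$, and it also satisfies the lower end $s(n)\ge n^{2-\eps}$. Every line of a $\Th(d)$ refutation is then an element of~$\calF$. Each inference rule of $\Th(d)$ is sound, so each derived line is semantically implied by its at most two premises. The lines that $\Th(d)$ uses to encode axioms are semantically equivalent to the clauses' falsifying-assignment indicators, and the final line $0 \ge 1$ is the constant-$1$ function in our convention. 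Tree-likeness and length are preserved verbatim. (With the paper's convention $f(\alpha)=1$ meaning ``ruled out'', we take the indicator of the violated inequality; this flip is harmless.)

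\textbf{Conclusion and main obstacle.} Applying \cref{thm:main-informal-binary} with this~$s$ gives the family~$\calA$ of $n$-variate CNFs with $|A|\le s(n)^{1+\eps}$. Asymptotically almost surely every tree-like $\sem(\calF)$ refutation, hence every tree-like $\Th(n^{1-\eps})$ refutation, has length superpolynomial in~$|A|$. The only delicate step is the counting bound on degree-$d$ threshold functions, which is needed to pass from unbounded coefficients to a bounded~$\calF$. If a clean citation is not used, we will argue it directly: any such function is determined by its sign pattern on $2^n$ points, and a linear threshold function in $N$ dimensions admits a representation with integer weights of bit-length $O(N\log N)$. This still yields $\log\setsize{\calF} = n^{O(d)}$, which stays within the allowed range after adjusting~$\eps$.
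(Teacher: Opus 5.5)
Your proposal is correct and is essentially the paper's argument. Both bound the number of distinct degree-$d$ threshold lines and feed that bound into the main theorem. The paper gets the bound from the PTF VC-dimension $\sum_{i\le d}\binom{N}{i}$ plus Sauer--Shelah over the tuple assignments; for $\bcliqueb(G,k)$ these are essentially all $2^N$ assignments, so it is the same $\exp\bigl(O(N^{d+1})\bigr)$ count as your Cover/Muroga estimate. The only difference is packaging: the paper proves \cref{thm:threshold-k} by applying \cref{thm:main} directly to $\bcliqueb(G,k)$ with $d=\log n/(2\log\log n)$, then reparametrizes by the number of variables. You instead invoke \cref{thm:main-informal-binary} as a black box with $s(N)=2^{N^{1-\eps/2}}$, which is legitimate since that theorem is itself derived from \cref{thm:main}.
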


Similarly, we obtain length lower bounds for tree-like Frege refutations with
lines of bounded size.

\begin{corollary}[informal]
  \label{thm:frege-informal}
  For every function
  $n^{2-\varepsilon} \le s(n) \le 2^{n^{1-\varepsilon}}$, there is a
  family~$\calA$ of $n$-variate CNF formulas $A$, each of
  size~$|A| \le s(n)^{1+\eps}$, such that the following holds. If~$A$
  is sampled uniformly from~$\calA$, then \aas any
  tree\nobreakdash-like Frege refutation of~$A$ in line-size~$s(n)$ is
  of length super-polynomial in $|A|$.
\end{corollary}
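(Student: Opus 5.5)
The plan is to derive the corollary directly from \cref{thm:main-informal-binary}, by showing that tree-like Frege with line-size~$s$ is simulated by a tree-like semantic system~$\treesem(\calF)$ with a suitably small~$\calF$.

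Fix $\eps>0$ and a function $s(n)$ with $n^{2-\eps}\le s(n)\le 2^{n^{1-\eps}}$. Let $\calF_s$ be the set of $n$-variate Boolean functions computed by Boolean formulas of size at most~$s$. We also add the indicators of falsifying assignments of clauses, so that axioms are covered; in the semantic convention a line rules out the assignments that falsify it.

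\textbf{Step 1 (counting).} A formula of size at most $s$ over $n$ variables and a fixed finite basis is described by a tree with at most $s$ nodes, each labelled by a connective or one of $n$ variables. Hence
\[
\setsize{\calF_s}\le \exp\bigl(O(s\log(s+n))\bigr)=\exp\bigl(O(s\log s)\bigr),
\]
using $s\ge n$. This exceeds $\exp(s)$ only by a logarithmic factor in the exponent. So we apply \cref{thm:main-informal-binary} with a slightly larger parameter, for instance $s'(n)=s(n)^{1+\eps/3}$. Then $s'\ge C s\log s$ for large $n$, so $\setsize{\calF_s}\le\exp(s'(n))$. We also need $s'$ to lie within the range of the theorem for some $\eps'$. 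Since $s'\le 2^{(1+\eps/3)n^{1-\eps}}\le 2^{n^{1-\eps/2}}$ and $s'\ge n^{2-\eps}$, we may take $\eps'=\eps/2$. The resulting family has $|A|\le s'^{1+\eps'}\le s^{1+\eps}$ after rescaling $\eps$ appropriately at the start.

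\textbf{Step 2 (simulation).} Take a tree-like Frege refutation of $A$ in line-size~$s$, with sound inference rules of arity at most two. Map each line $\varphi$ to the function $\neg\varphi$, that is, the set of assignments it rules out, which is computed by a formula of size at most $s+1$; absorb the $+1$ into $s$. Axioms map to clause indicators, and the final line $\bot$ maps to the constant~$1$. Soundness of each rule gives the semantic implication required: if $\varphi_i$ follows from $\varphi_j,\varphi_k$, then every $\alpha$ falsifying $\varphi_i$ falsifies $\varphi_j$ or $\varphi_k$. Unary rules and axiom schemes are handled by also allowing one-premise steps, or by duplicating a premise. Frege axiom instances are tautologies and map to the constant~$0$, which is implied by anything. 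The tree structure and the length are preserved up to a constant factor.

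\textbf{Step 3.} By \cref{thm:main-informal-binary}, asymptotically almost surely every $\treesem(\calF_s)$ refutation of $A$ has length superpolynomial in $|A|$. By Step 2, the same holds for every tree-like Frege refutation of $A$ in line-size~$s$.

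The main obstacle is Step 1, namely reconciling the $\exp(O(s\log s))$ count with the $\exp(s(n))$ hypothesis. This is resolved by the polynomial slack in $|A|\le s^{1+\eps}$ and in the allowed range of $s$, as shown above. A minor additional point is checking that the paper's formal semantic system accepts Frege axioms (tautologies) and unary rules as leaves or steps.
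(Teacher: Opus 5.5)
Your proposal is correct and follows essentially the paper's argument. The paper also simulates tree-like Frege with line-size $s$ by a tree-like semantic system over the $\exp(O(s\log s))$ functions computed by size-$s$ formulas, absorbs the $\log s$ factor into a slightly smaller $\eps$, and invokes the main lower bound for the encodings $\bclique_\ar(G,k)$ with $D=\log^{\eps_0}n$ and $k=2\log^{\eps_0}n$. The only difference is ordering: the paper does the counting in the formal \cref{thm:frege} and re-parametrizes by the number of variables afterwards, whereas you black-box the already re-parametrized \cref{thm:main-informal-binary}.

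One small slip: with $\eps'=\eps/2$ you need $s'\ge n^{2-\eps/2}$, not just $s'\ge n^{2-\eps}$. This does hold, since $(2-\eps)(1+\eps/3)\ge 2-\eps/2$ whenever $\eps\le 1/2$.
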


As discussed earlier, the result holds for semantic
tree\nobreakdash-like Frege with line-size at most $s(n)$ and, in
terms of allowed line-size, our parameters are close to optimal:
allowing line-size $s(n)^{1+\eps} \geq \setsize{A}$ would give trivial
refutations of $A$ in semantic tree\nobreakdash-like Frege.

\Cref{thm:threshold-k-informal,thm:frege-informal} hold for
$k$-clique formulas in the binary encoding. Regarding the unary
encoding, as consequences of \cref{thm:main-informal}, we obtain
essentially optimal average-case clique size lower bounds for
tree-like versions of the well-studied proof systems cutting planes
and resolution over parities.

\begin{corollary}[informal]
  \label{thm:cp-reslin-informal}
  If~$G \sim \calG(n,1/2)$ is an \erdosrenyi random graph, then it
  holds \aas that tree-like cutting planes and tree-like resolution
  over parities require length~$n^{\Omega(\log n)}$ to refute %
  that~$G$ contains a clique of size~$n^{\Omega(1)}$.
\end{corollary}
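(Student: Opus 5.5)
The plan is to derive the corollary directly from \cref{thm:main-informal}, applied to the unary $k$-clique formula on $G \sim \calG(n,1/2)$. The formula has $N = \binom{n}{2} + kn$ variables: edge variables together with $kn$ variables $x_{i,v}$ saying that vertex $v$ is the $i$-th clique member. We regard the graph as part of the formula, so that $\calA$ is the family of clique formulas indexed by $G$. The steps are as follows.

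First, we bound the number of distinct proof lines that each system can use over these $N$ variables. For tree-like cutting planes, a line is a linear inequality $\sum a_j x_j \ge b$. It is standard (Muroga; or the bit bound on threshold functions) that every linear threshold function over $N$ Boolean variables has a representation with integer coefficients of magnitude $2^{O(N\log N)}$. Hence the set $\calF_{\mathrm{CP}}$ of semantically distinct lines has size $\exp\bigl(O(N^2\log N)\bigr)$. For tree-like resolution over parities, a line is a disjunction of linear equations mod $2$. Such a line can always be taken with at most $N+1$ independent affine forms, so the number of distinct lines is at most $\exp\bigl(O(N^2)\bigr)$.

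Second, we account for the parameters. To apply \cref{thm:main-informal} we need $|\calF| \le \exp(n_A^{2-\eps})$, where $n_A$ is the number of variables of the hard formula. With $N$ variables, the bounds above are $\exp(N^{2+o(1)})$, which is slightly too large. This is the main obstacle. I would address it with the structure of the actual argument behind \cref{thm:main-informal}. There, what matters is that $\log|\calF|$ is small compared with the entropy of $G$, namely $\binom{n}{2}$ bits, after restricting to the relevant variables. Setting the edge variables (they are fixed axioms given by $G$) leaves only the $kn$ clique variables. Over these, the number of cutting-plane or parity-disjunction lines is $\exp\bigl(O((kn)^2\log(kn))\bigr)$. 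Taking $k = n^{\delta}$ with $\delta$ small makes this $\exp(n^{2+2\delta+o(1)})$, which is still too big. So instead we should use that semantic lines only matter through their restriction to the clique variables. We choose the pseudo-measure union bound in the proof of \cref{thm:main-informal} to range over lines in the $x_{i,v}$ variables, and each such line depends on at most $k$ blocks of size $n$. Concretely, I would invoke the general theorem in its form with $|\calF| \le \exp(n^{2-\eps})$ measured in the graph's vertex count $n$, using the encoding's $kn$ variables with $k = n^{\epsilon'}$. Then $\exp(O((kn)\log(kn))$-many per-block-restricted threshold or parity patterns suffice, after noting that the union bound is over weakenings of single axioms restricted to few blocks.

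Third, we handle unsatisfiability and the length bound. Since $G \sim \calG(n,1/2)$ has clique number at most $2(1+o(1))\log n$ asymptotically almost surely, the formula is unsatisfiable for $k = n^{\Omega(1)}$. The pseudo-measure argument yields that, with high probability, each leaf has measure at most $n^{-\Omega(\log n)}$. Summation over leaves then gives length $n^{\Omega(\log n)}$ for both systems, which is the claim.
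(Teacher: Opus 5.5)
There is a genuine gap, exactly at the step you flag as the main obstacle. Counting proof lines cannot work. Over the $kn$ clique variables there are $2^{\Theta((kn)^2)}$ threshold functions and $2^{\Theta((kn)^2)}$ affine subspaces of $\F_2^{kn}$. This exceeds $\exp(n^{2-\eps})$ for every $k\ge 1$, so no hypothesis on $\setsize{\calF}$ is ever met. (The graph is also not part of the variable set: $\bcliqueu(G,k)$ is a formula over the variables $x_v$ only, and $G$ only determines which edge axioms are present, so there is nothing to ``set''.)

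Your repair does not close the gap. Every line trivially depends on at most $k$ blocks, so that observation buys nothing. The bound $\exp\bigl(O(kn\log(kn))\bigr)$ on ``per-block-restricted patterns'' is not derived from anything. If it is meant as a count of the traces of lines on tuples, it is false once $k$ grows: adding a block multiplies the number of traces of halfspaces by at least $(n^{k})^{n-1}$, since each new vertex can place its own cut in the sorted order of the sums over the other blocks. This gives $\exp\bigl(\Omega(k^2 n\log n)\bigr)$ traces in total.

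Nor is the union bound over weakenings of single axioms restricted to few blocks. It ranges over the family $\calQ$ of intersections of $O(\log s)$ sets $Q(f)$, $Q(\lnot f)$ along root-to-leaf paths of a balanced decision tree. These are arbitrary subsets of $\calT$ depending on all $k$ coordinates. The per-set failure probability $2^{-n^{2-\delta/2}/4}$ in \cref{clm:bound-Y} is what forces $\setsize{\calQ}\le\exp(n^{2-\delta})$.

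The missing idea is that \cref{thm:main} only needs a bound on $\setsize{Q(F_{kn})}$. This is the number of distinct traces $Q(f)=\setdescr{t\in\calT}{f(\rho_t)=1}$ of lines on the $n^k$ points $\rho_t$ encoding tuples, and it is controlled by VC dimension rather than by counting lines. A set of tuples shattered by $Q(F_{kn})$ gives a set of points of $\R^{kn}$ shattered by affine halfspaces. For $\ResLin$ it gives a set shattered by affine subspaces of $\F_2^{kn}$, which are the falsifying sets of $\F_2$-linear clauses. Hence $\VC\bigl(Q(F_{kn})\bigr)\le kn+1$ by \cref{vc:half}, and likewise $O(kn)$ by \cref{vc:lin}. \Cref{thm:Sauer} over the ground set $\calT$ of size $n^k$ then gives $\setsize{Q(F_{kn})}\le (n^k)^{kn+1}=\exp\bigl(O(k^2n\log n)\bigr)\le\exp(n^{2-\eps})$ with $\eps=61/64$ whenever $k\le n^{1/64}$.

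Choosing $D=2\log n$ makes $p=n^{-2/D}=1/2$, so \cref{thm:main} yields length $n^{\Omega(\log n)}$ for $\bcliqueu(G,k)$ over $G\sim\calG(n,k,1/2)$. Finally, both systems are closed under restrictions, so \cref{lem:switch-distibutions} transfers the bound to the ordinary clique formula over $G\sim\calG(kn,1/2)$, with $k$ polynomial in the number of vertices. Your proposal also skips this reduction.
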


\paragraph{Organization.} In \cref{sec:preliminaries} we recall some
preliminaries followed by \cref{sec:semantic-psys} in which we
formally introduce semantic proof systems. This is followed by
\cref{sec:tree-like-lbs} in which we state our main theorem and prove
the consequences mentioned in the introduction. \Cref{sec:measure} is
devoted to the proof of our main theorem. We provide some concluding
remarks in \cref{sec:conclusion}.

\section{Preliminaries}
\label{sec:preliminaries}

All logarithms are base $2$, for integer~$n \in \N^+$ we introduce the
shorthand~$[n] = \set{1, \ldots, n}$, and sometimes identify
singletons~$\set{u}$ with the element~$u$.  For a set~$S$ denote
by~$\binom{S}{\ell}$ the family of subsets of~$S$ of size~$\ell$ and,
for a random variable~$X$ and an event~$P$, let~$\ind_P(X)$ be the
indicator random variable that is 1 if $P$ holds and $0$ otherwise.

\subsection{Graph Theory}
For $n \in \N$ and $p \in [0,1]$ let $\calG(n,p)$ denote the
\erdosrenyi random graph distribution over $n$-vertex graphs where
each of the $\binom{n}{2}$ edges is included independently with
probability $p$.
It is well-known that for~$p \leq n^{-2/k}$ graphs
$G\sim \mathcal G(n,p)$ \aas do not contain a $k$-clique.

Unless stated otherwise, for the remainder of this paper $G = (V,E)$
always denotes a $k$\nobreakdash-partite graph with
partition~$V = \bigsqcup_{i=1}^k V_i$, where each \emph{block}~$V_i$
is of size~$\setsize{V_i} = n$.
A $k$-partite graph~$G \sim \calG(n,k,p)$ is sampled by first
sampling~$G' \sim \calG(nk,p)$ and intersecting $G'$ with the complete
$k$-partite graph over blocks of size~$n$.

\subsection{Extremal Set Theory}
Denote by~$\calF$ a family of sets, and say that a set $X$ is
\emph{shattered} by $\calF$ if $X \cap \calF = 2^X$, for
$X \cap \mathcal{F} \coloneqq \{F\cap X \mid F \in \mathcal{F}\}$.
The \emph{VC-dimension} of $\calF$, denoted by $\VC(\calF)$, is the
largest integer $d$ such that there is %
a set $X \subseteq \bigcup_{F \in \cal F} F$ of size $d = \setsize{X}$
shattered by $\calF$. We rely on the following theorem relating the
$\VC$-dimension of a family to its size.
\begin{theorem}[Sauer-Shelah~\cite{Sauer72,Shelah72}] \label{thm:Sauer}
  For a family of sets~$\mathcal{F}$
  with $\operatorname{VC}(\mathcal{F})=d$ and
  $n \coloneq \Setsize{\bigcup_{F \in \cal F} F}$ it holds that
  $
    \lvert \mathcal{F}\rvert
    \leq \sum_{i=0}^{d}\binom{n}{i}
    \leq n^d
    $.
\end{theorem}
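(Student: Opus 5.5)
We prove the Sauer--Shelah bound by induction on $n + d$, using the standard ``shifting along one element'' decomposition. Fix a family $\calF$ with $\VC(\calF) = d$ over the ground set $U = \bigcup_{F \in \calF} F$ of size $n$. We show the slightly stronger statement that any family $\calF \subseteq 2^U$ with $\setsize{U} = n$ and VC-dimension at most $d$ satisfies $\setsize{\calF} \le \Phi_d(n) \coloneqq \sum_{i=0}^{d} \binom{n}{i}$. The base cases are immediate:
\begin{itemize}
\item If $n = 0$, then $\setsize{\calF} \le 1 = \Phi_d(0)$.
\item If $d = 0$, then $\calF$ shatters no singleton. Any two distinct sets in $\calF$ would differ on some element $x$, and $\set{x}$ would then be shattered. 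Hence $\setsize{\calF} \le 1$.
\end{itemize}

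For the inductive step, pick $x \in U$ and let $U' = U \setminus \set{x}$. Define two families on $U'$:
\begin{itemize}
\item $\calF_1 = \setdescr{F \setminus \set{x}}{F \in \calF}$;
\item $\calF_2 = \setdescr{F \subseteq U'}{F \in \calF \text{ and } F \cup \set{x} \in \calF}$.
\end{itemize}
Each set of $\calF_1$ has either one or two preimages in $\calF$, and exactly the sets of $\calF_2$ have two. Therefore $\setsize{\calF} = \setsize{\calF_1} + \setsize{\calF_2}$.

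Next we bound the VC-dimensions. Since $\calF_1$ is a projection of $\calF$ onto $U'$, any $X \subseteq U'$ shattered by $\calF_1$ is also shattered by $\calF$, so $\VC(\calF_1) \le d$. For $\calF_2$, suppose $X \subseteq U'$ is shattered by $\calF_2$. Each trace on $X$ is realised by some $F$ with both $F$ and $F \cup \set{x}$ in $\calF$, so $X \cup \set{x}$ is shattered by $\calF$. Hence $\VC(\calF_2) \le d - 1$. Applying the induction hypothesis on the ground set $U'$ of size $n - 1$ gives
\[
\setsize{\calF} \le \Phi_d(n-1) + \Phi_{d-1}(n-1) = \Phi_d(n),
\]
where the last equality is Pascal's rule $\binom{n-1}{i} + \binom{n-1}{i-1} = \binom{n}{i}$ summed over $i$. (If the union of $\calF_1$ or $\calF_2$ is smaller than $U'$, we use monotonicity of $\Phi_d$ in $n$; this is why we stated the hypothesis for arbitrary ground sets containing the union.)

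It remains to show $\Phi_d(n) \le n^d$. This is where the only care is needed, in the degenerate cases:
\begin{itemize}
\item If $d = 0$, then $\Phi_0(n) = 1 = n^0$.
\item If $n = 1$ and $d \ge 1$, then $\Phi_d(1) = 2$ exceeds $1^d = 1$, so the inequality fails. This is harmless: it would require a single element to be shattered, giving $d = 1$ and $\setsize{\calF} = 2$, which is the trivial exception implicit in the statement. The applications in the paper only use large $n$.
\end{itemize}
For $n \ge 2$ and $d \ge 1$, we use $\binom{n}{i} \le n^i / i!$ and bound
\[
\Phi_d(n) \le \sum_{i=0}^{d} \frac{n^i}{i!} \le n^d \sum_{i=0}^{d} \frac{n^{i-d}}{i!} \le n^d .
\]
The last step needs a little checking:
\begin{itemize}
\item For $d = 1$, the sum is $1/n + 1 = 1 + 1/n$, so this crude bound overshoots. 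Directly, $\Phi_1(n) = 1 + n$. So, as in the $n = 1$ case, the clean $\le n^d$ bound is really $\le (n+1)^d$, or $\le n^d$ for $d \ge 2$ and $n \ge 2$.
\item For $d \ge 2$ and $n \ge 2$, a short induction on $d$ works. From $\Phi_d(n) = \Phi_{d-1}(n) + \binom{n}{d}$ we get
\[
\Phi_d(n) \le n^{d-1} + \frac{n^d}{2} \le n^d \quad \text{for } n \ge 2,
\]
with base case $\Phi_2(n) = 1 + n + \binom{n}{2} \le n^2$ for $n \ge 3$, and the $n = 2$ case checked directly.
\end{itemize}

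The main obstacle is thus not the combinatorial induction, which is standard. It is reconciling the clean form $\le n^d$ with these small cases; in the paper's use, the polynomial $n^{O(d)}$ form is all that matters.
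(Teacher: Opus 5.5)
Your proof is correct. The paper gives no proof of this statement and only cites Sauer and Shelah. Your induction is the standard argument for $|\mathcal{F}|\le\sum_{i\le d}\binom{n}{i}$: you split $\mathcal{F}$ into its traces $\mathcal{F}_1$ on $U\setminus\{x\}$ and the family $\mathcal{F}_2$ of sets both of whose extensions lie in $\mathcal{F}$, note that $\mathcal{F}_2$ loses one from the VC-dimension, and finish with Pascal's rule.

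Your caveat about the second inequality corrects the statement itself, not just your estimate. For $d=1$, the chain $\emptyset\subset\{1\}\subset\cdots\subset[n]$ has VC-dimension $1$ and $n+1$ members, so $|\mathcal{F}|\le n^d$ fails. The bound does hold for $d=0$ and for $d\ge 2$ (which forces $n\ge 2$). That is all the paper needs, since it only invokes the bound with large $d$.

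Two clean-ups for the write-up:
\begin{itemize}
\item Drop the displayed chain ending in $\le n^d$. It is false as written, since the factor $\sum_{i=0}^{d} n^{i-d}/i!$ already equals $5/4$ at $d=n=2$. Keep only your induction on $d$, with base case $\Phi_2(n)=(n^2+n+2)/2\le n^2$ for $n\ge 2$.
\item The remark about monotonicity of $\Phi_d$ in $n$ is superfluous. Your strengthened hypothesis already allows any ground set $U$ containing the union, so $\mathcal{F}_1,\mathcal{F}_2\subseteq 2^{U\setminus\{x\}}$ directly.
\end{itemize}
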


\subsection{Proof Complexity}
All variables considered are Boolean over the standard $\set{0, 1}$
basis. A \emph{literal} $\ell$ is either a variable $x$ or its
negation $\bar{x}$, a \emph{clause $C$} is a disjunction of literals
over distinct variables $C \coloneq \ell_1 \lor \cdots \lor \ell_k$,
and a \emph{CNF formula $F \coloneq C_1 \land \cdots \land C_m$} is a
conjunction of clauses. We sometimes refer to the clauses of a
formula~$F$ by \emph{axioms}, write $\mathrm{Vars}(F)$ for the
variables of~$F$, and for a Boolean
function~$f\colon \set{0,1}^n \to \set{0,1}$ let~$\CNF(f)$ denote
the canonical CNF encoding of~$f$.

A \emph{semantic inferential refutation system~$P$} is defined over a set
of proof lines $\calL$, and
a \emph{$P$\nobreakdash-derivation~$\pi$} of~$L \in \calL$ from a set
of axioms~$A \subseteq \calL$ is a sequence
$\pi \coloneq (L_1,\dots,L_s)$ such that~$L_s = L$ and every line
$L_i \in \calL$ is either an axiom, that is, $L_i \in A$, or it is a
semantic consequence over Boolean assignments~$\set{0,1}^n$ of at most
two previous lines of~$\pi$.
A \emph{$P$\nobreakdash-refutation~$\pi$} from a set of
axioms~$A \subseteq \calL$ is a $P$\nobreakdash-derivation of
contradiction, the \emph{length} of a derivation is the number of
lines in it, and
a derivation is \emph{tree-like} if every line is used at most once as
the premise of an inference and otherwise called \emph{dag-like}.

The \emph{semantic cutting planes} $(\operatorname{CP})$ proof
system %
is an inferential refutation system that %
operates over linear inequalities with integer coefficients;
contradiction is represented by~$0 \geq 1$ and  %
a CNF formula is expressed as a system of linear inequalities by
translating each of its clauses
$C\coloneq\bigvee_{i\in I}x_i\lor\bigvee_{j\in J} \bar x_j$ to an
inequality $\sum_{i\in I} x_i + \sum_{j\in J} (1-x_j) \geq 1$.
\emph{Semantic degree-$d$ threshold proof systems} \cite{bps07} are a
natural generalization of cutting planes: for $d \in \N$, the semantic
threshold proof system $\Th(d)$ is the semantic inferential refutation
system whose proof lines are polynomial inequalities $p \geq 0$
for~$p \in \Z[x_1,\dots,x_n]$ of degree at most $d$. %

The \emph{semantic Frege} refutation system is a semantic inferential
refutation system operating over Boolean formulas,
and %
\emph{semantic Frege with bounded line size~$s$} operates over Boolean
formulas of size at most~$s$.
For a family of Boolean functions $\mathcal F$ closed under negation,
that is, if~$f \in \calF$, then~$\lnot f \in \calF$, let
an $\calF$-clause~$f_1 \vee \cdots \vee f_w$ denote a disjunction of
formulas~$f_i \in \calF$.
\emph{Semantic resolution over $\mathcal{F}$} is the semantic
inferential refutation system over $\calF$-clauses.
Ordinary resolution is recovered by considering the set of Boolean
functions %
that depend on single variables
and \emph{resolution over parities}~\cite{IS20ResLin}, denoted
$\ResParity$, is resolution over $\F_2$ affine linear forms.

\subsection{Clique Formulas}
\label{sec:clique-formula}

Before discussing $k$-partite graphs, let us consider an ordinary
graph~$G=(V,E)$ over $n$ vertices. The unary $k$-clique formula
over~$G$, denoted by $\clique(G,k)$, is defined over variables
$\Setdescr{x_{v,i}}{v \in V \text{~and~} i \in [k]}$ where %
a variable~$x_{v,i}$ %
indicates whether $v$ is the $i$th clique member. The axioms
of~$\clique(G,k)$ are
\ifdefined\CONFERENCE%
\begin{align*}
  &\bar x_{u,i}\vee \bar x_{v,j}
  &&\forall\, i \neq j \in [k],\, \forall\, \set{u,v} \notin E
  &&\text{(edge axiom)}\qquad
  &&\bigvee_{v\in V} x_{v,i}
  &&\forall\, i \in [k]
  \eqcomma
  &&\text{(block axiom)}
\end{align*}
\else%
\begin{align*}
  &\bar x_{u,i}\vee \bar x_{v,j}
  &&\forall\, i \neq j \in [k],\, \forall\, \set{u,v} \notin E
  &&\tag{edge axiom}\\
  &\bigvee_{v\in V} x_{v,i}
  &&\forall\, i \in [k]
  \eqcomma
  &&\tag{block axiom}
\end{align*}
\fi%
where the edge axioms ensure that two non-adjacent vertices are not
simultaneously clique members and the block axioms ensure that at
least one vertex is the $i$th clique member. It should be evident that
the formula is satisfiable if and only if $G$ contains a $k$-clique.

\paragraph{Block Encodings} Consider a $k$-partite graph~$G=(V,E)$
with blocks~$V_1, \ldots, V_k$ of size~$n$ each. We want to encode the
claim that~$G$ contains a $k$-clique. Note that if there is a
$k$-clique in~$G$, then it contains exactly one vertex~$v_i$ from each
block~$V_i$. In other words, the only vertex sets~$t \subseteq V$ that
\emph{might} be a $k$-clique satisfy~$t \cap V_i = \set{v_i}$ for
all~$i \in [k]$. Denote the family of these sets
by~$\calT \coloneq \prod_{i\in [k]} V_i$ and call a set~$t \in \calT$
a \emph{tuple}.

Let~$Y$ be a set of Boolean variables of
size~$\setsize{Y} = \binom{k}{2}\, n^2$ and think of these
variables as encoding a $k$-partite graph with blocks of size~$n$
each.
For a set~$X$ of Boolean variables consider any CNF
formula~$\bclique \colon \set{0,1}^X \times \set{0,1}^Y \to \set{0,1}$
mapping $k$-partite graphs~$G$ and assignments to~$X$ such that (1)
each clause $C \in \bclique$ contains at most one literal over $Y$,
(2) if~$G$ does \emph{not} contain a $k$-clique, then for all
assignments~$\rho \in \set{0,1}^X$ it holds
that~$\bclique(G,\rho) = 0$, and (3) for each tuple~$t \in \calT$
there is a ``witnessing'' assignment~$\rho_t \in \set{0,1}^X$ such
that for all graphs $G\in \set{0,1}^Y$ it holds that
\begin{equation}
  t \text{~is a $k$-clique in~} G
  \Leftrightarrow
  \bclique(G,\rho_t) = 1
  \eqperiod
\end{equation}
In other words, there is a one-to-one correspondence between
witnessing assignments~$\rho_t$ and tuples~$t \in \calT$. Our results
hold for any encoding satisfying the above provided~$X$ is not too
large~$\setsize{X} \leq n^{2-\eps}$. As parameters crucially depend on
the size of~$X$ let us discuss three concrete encodings.

\paragraph{Unary Block Encoding} For a $k$-partite graph $G=(V,E)$
over blocks $V_1, \ldots, V_k$ of size~$n$ each let
$\bcliqueu(G, k) \colon \set{0,1}^X \to \set{0,1}$ be the CNF formula
defined over variables
$X \coloneq \setdescr{x_{v}}{v \in \bigcup_i V_i}$ and consisting of
clauses
\ifdefined\CONFERENCE%
\begin{align*}
  &\bar x_{u}\vee \bar x_{v}
  && \forall\, \set{u,v}\notin E
  &&\text{(edge axiom)}\qquad
  &&\bigvee_{v\in V_i} x_{v}
  &&\forall\, i \in [k] \eqperiod
  &&\text{(block axiom)}
\end{align*}
\else%
\begin{align*}
  &\bar x_{u}\vee \bar x_{v}
  && \forall\, \set{u,v}\notin E
  &&\tag{edge axiom}\\
  &\bigvee_{v\in V_i} x_{v}
  &&\forall\, i \in [k] \eqperiod
  &&\tag{block axiom}
\end{align*}
\fi%
The formula $\bcliqueu(G,k)$ is satisfiable if and only if there is a
tuple~$t \in \calT$ such that the vertex induced subgraph~$G[t]$ is a
$k$\nobreakdash-clique.
Refuting the unary block encoding is at least as hard as
refuting the natural (non-block) encoding as summarized by the
next statement. 

\begin{proposition}[\cite{BIS07IndependentSets}]
  \label{lem:switch-distibutions}
  Let $k,n \in \N^+$ be integer and let $G$ be an ordinary graph on
  $kn$ vertices. If a proof system is closed under restrictions, then
  the minimum refutation length to refute the $\clique(G,k)$ formula
  is bounded from below by the minimum length required to refute
  $\bcliqueu(G,k)$ with respect to any %
  $k$-partition of $G$.
\end{proposition}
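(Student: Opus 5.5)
The plan is to apply a restriction to a refutation of $\clique(G,k)$ and read off a refutation of $\bcliqueu(G,k)$ of no greater length. Fix a $k$-partition $V = V_1 \sqcup \dots \sqcup V_k$ of the $kn$ vertices of $G$. Let $\rho$ be the partial assignment to the variables of $\clique(G,k)$ given by
\[
\rho(x_{v,i}) = 0 \quad \text{whenever } v \notin V_i ,
\]
with every $x_{v,i}$ for $v \in V_i$ left free. Rename each surviving variable $x_{v,i}$, $v \in V_i$, to $x_v$. This gives a bijection between the free variables of $\restrict{\clique(G,k)}{\rho}$ and the variables $X$ of $\bcliqueu(G,k)$. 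Since the proof system is closed under restrictions, and renaming variables bijectively is harmless, a refutation $\pi$ of $\clique(G,k)$ becomes, after applying $\rho$ line by line, a refutation of $\restrict{\clique(G,k)}{\rho}$ of length at most that of $\pi$.

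Next I check that every clause of $\restrict{\clique(G,k)}{\rho}$ either is satisfied (and so disappears) or becomes a clause of $\bcliqueu(G,k)$ under the renaming.

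The block axiom $\bigvee_{v \in V} x_{v,i}$ restricts to $\bigvee_{v \in V_i} x_v$. This is exactly the $i$th block axiom.

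Now take an edge axiom $\bar x_{u,i} \vee \bar x_{v,j}$ with $i \neq j$ and $\set{u,v} \notin E$. If $u \notin V_i$ or $v \notin V_j$, some literal is set to $1$, so the clause is satisfied and vanishes. Otherwise $u \in V_i$ and $v \in V_j$, and the clause becomes $\bar x_u \vee \bar x_v$. This is the edge axiom of $\bcliqueu(G,k)$ for the non-edge $\set{u,v}$.

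Every clause of $\bcliqueu(G,k)$ arises this way. For the edge axioms there is one point to settle. In the block encoding one may take $G$ restricted to the complete $k$-partite graph, so non-edges inside a block impose no constraint. If the convention instead includes the axiom $\bar x_u \vee \bar x_v$ for a non-adjacent pair $u,v$ in the same block $V_i$, that clause is absent from the restriction, and I would handle it as follows. Either these intra-block clauses are simply not part of $\bcliqueu(G,k)$, since the paper's setting uses $k$-partite $G$. Or adding axioms only makes refuting easier: a refutation from a subset of the axioms is also a refutation from the full set.

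Hence the restricted refutation is a refutation of a subformula of $\bcliqueu(G,k)$ of no greater length, which gives the inequality. The main point needing care is this bookkeeping of axioms, namely that the restriction yields a subset of the block-encoded clauses. The rest follows directly from closure under restrictions.
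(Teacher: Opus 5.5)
Your restriction argument (set $x_{v,i}\mapsto 0$ for $v\notin V_i$, then use that a refutation from a subset of the axioms is also a refutation of the whole formula) is correct. It is the standard proof behind the result the paper cites from Beame, Impagliazzo and Sabharwal; the paper itself gives no proof beyond the citation. One small remark: under the paper's literal definition, pairs inside a block of a $k$-partite graph are non-edges, so $\bcliqueu(G,k)$ does contain the intra-block clauses $\bar x_u\vee\bar x_v$, and the branch that applies is your monotonicity argument, not your first branch (which says these clauses are absent).
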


In light of \cref{lem:switch-distibutions} we only consider the block
encoding of the $k$-clique formula. For
\cref{thm:main-informal-ell-CNF,thm:main-informal-binary} we need to
consider formulas using fewer variables than the unary clique formula.

\paragraph{Binary Block Encoding} The binary encoding of the
$k$-clique formula~$\bcliqueb(G,k)$ is defined over variables
$X \coloneq \Setdescr{x_{i,b}}{i \in [k] \text{ and }b\in \big[\lceil
  \log n \rceil\big]}$.
For~$a\in [n]$ with binary expansion
$a_1\dots a_{\lceil \log n \rceil}$ denote by
$C_{i,a} \coloneqq \bigvee_{b : a_b=0} x_{i,b} \;\vee\!  \bigvee_{b :
  a_b=1} \bar x_{i,b}$ the clause that is falsified if and only if the
variables of block $i$ encode %
$a$. The formula~$\bcliqueb(G,k)$ consists of axioms
\begin{align}
  &C_{i,a}\vee C_{j,b}
  && \forall\, i\neq j\in [k],\ \forall a, b
  \text{ such that }
  \set{v_{i,a},v_{j,b}} \notin E(G)
  \eqperiod
  \tag{edge axiom}
\end{align}
It should be clear that $\bcliqueb(G,k)$ is satisfiable if and only if
$\bcliqueu(G,k)$ is. The main reason to consider $\bcliqueb(G,k)$ is
that for~$k = O(\log n)$ the number of variables is exponentially
smaller than the size of the formula.

\paragraph{Interpolated Block Encodings} We can interpolate between
the unary and the binary block formulas as follows. Let $m<n$ be a
positive integer and suppose for the sake of exposition that
$n = m^\ar$ for integer $c$ (otherwise take
$c = \lceil \log(n)/\log(m) \rceil$ and pad). For each $i\in[k]$,
write
$
V_i =
\Setdescr{
  v_{i,\alpha}
}{
  \alpha
  =
  (\alpha_1, \dots, \alpha_\ar)
  \in \{0, \dots, m-1\}^\ar
}
$.
The $\ar$-ary blocked $k$-clique formula, denoted $\bclique_\ar(G,k)$,
is defined over variables
\begin{equation}
  X \coloneq
  \Setdescr{
    x_{i,j,a}
  }{
    i\in[k],\,
    j\in[\ar],\,
    a \in \set{0, \dots, m-1}
  }
  \eqcomma
\end{equation}
where $x_{i,j,a}$ indicates that the $j$th coordinate of the vertex
chosen from $V_i$ is $a$. For a vertex $v_{i,\alpha}\in V_i$, define
the
clause~$C_{v_{i,\alpha}} \coloneq \bigvee_{j=1}^\ar \bar
x_{i,j,\alpha_j}$, so that $C_{v_{i,\alpha}}$ is falsified if and only
if the mentioned variables %
encode vertex~$v_{i,\alpha}$. The formula $\bclique_\ar(G,k)$ consists
of clauses 
\begin{align*}
  &C_u \vee C_v
  && \forall \{u,v\}\notin E(G)
  &&\tag{edge axiom}\\
  &\bigvee_{a=0}^{m-1} x_{i,j,a}
  && \forall i\in[k],\ \forall j\in[\ar]
  \eqperiod
  &&\tag{range axiom}
\end{align*}
Note that the formula~$\bclique_1(G,k)$ corresponds to the unary
clique encoding whereas the
formula~$\bclique_{\lceil\log n\rceil}(G,k)$ almost corresponds to the
binary encoding: the only (minor) difference is that the interpolated
encoding has two variables per coordinate of a vertex (of which
precisely one is set to $1$), whereas the binary encoding has a single
variable per coordinate.

\subsection{\texorpdfstring{$\bm{\calF}$}{F}-Decision Trees}

A \emph{decision tree} is a directed tree where each node has
in-degree $1$ except the designated \emph{root node} which has
in-degree $0$. The out-degree of each node is either 2 (an
\emph{internal node}) or 0 (a \emph{leaf node}), internal nodes are
labelled with a variable, and one out-edge of each internal node is
labelled $0$ whereas the other is labelled $1$.
Given a Boolean assignment $\rho$ we evaluate a decision tree $T$ by
starting at the root node $v$, and repeatedly considering the
label~$x$ of the current node and following the edge
labelled~$\rho(x)$ until we reach a leaf.

For a fixed unsatisfiable CNF formula
$A \coloneqq C_1\wedge\cdots\wedge C_m$ the \emph{falsified clause
  search problem}~$\search(A)$ asks for a clause $C \in A$ falsified
by the provided assignment~$\rho\in\{0,1\}^{\operatorname{Vars}(A)}$.
A \emph{decision tree for $\search(A)$} is a decision tree with leaves
labelled by clauses of~$A$ such that any Boolean assignment~$\rho$
ending in a leaf labelled~$C \in A$ falsifies the clause~$C$, that is,
$\restrict{C}{\rho} = 0$.

More generally, we may consider an \emph{$\calF$-decision tree},
for~$\calF$ a set of %
Boolean functions, defined as a decision tree with internal nodes
labelled by functions~$f \in \calF$. Given an assignment~$\rho$ the
evaluation proceeds similarly by repeatedly considering the label~$f$
of the current node and following the edge labelled~$f(\rho)$ until a
leaf is reached. An \emph{$\calF$-decision tree solves $\search(A)$}
if every leaf $\ell$ is labelled by a clause $C \in A$ such that every
assignment $\rho$ reaching $\ell$ falsifies $C$.

\section{Semantic Proof Systems Over a Bounded Set of Lines}
\label{sec:semantic-psys}

This section is %
devoted to a rigorous treatment of the semantic proof systems
mentioned in \cref{sec:intro-sem}. The main difference to the informal
discussion in the introduction is that we consider proof lines as
sequences for a parameter~$n \in \N$.

\begin{definition}[Semantic Proof System]
  \label{def:semantic-psys}
  Let~$\calF \coloneq (F_n \colon n \in \N )$ be a sequence of
  families~$F_n$ of $n$-variate Boolean functions.
  The \emph{semantic~$\calF$ proof system}, denoted by~$\sem(\calF)$,
  is an inferential proof system operating over lines $\calF$; a
  \emph{$\sem(\calF)$ refutation~$\pi$} of an $n$-variate CNF
  formula~$A %
  $ is a sequence~$\pi \coloneq (f_1, \ldots, f_t)$ such
  that~$f_t = 1$ is the constant $1$ function and each $f_i$ is either
  \begin{itemize}
  \item an axiom, that is, there is a clause~$C \in A$ such that for
    all assignments~$\alpha \in \set{0,1}^n$ it holds that~$C$ is not
    satisfied by~$\alpha$ if and only if~$f_i(\alpha) = 1$, or
  \item $f_i \in F_n$ and there are~$f_j,f_k$ with~$j,k < i$ such
    that~$f_i \subseteq f_j \cup f_k$ when viewed as indicators.
  \end{itemize}
  The \emph{length} of the derivation~$\pi$ is~$t$, and $\pi$ is
  \emph{tree-like} if each function~$f_j \in \pi$ is used at most once
  to derive some other function~$f_i$. We denote the tree-like
  semantic~$\calF$ proof system by~$\treesemf$.
\end{definition}

Not every sequence~$\calF$ gives rise to a proof system that is
well-behaved. In particular, the proof system $\sem(\calF)$ is
generally not even complete and is not closed under restrictions.
In the following we show that for each~$\calF$ there is a slightly
larger~$\calF'$ that is complete, closed under permutations of
variables, and monotone.

\begin{definition}[Reasonable Sequence]
  \label{def:reasonable}
  A sequence~$\calF \coloneq (F_n \colon n \in \N )$ is %
  \emph{reasonable} %
  if the proof system~$\sem(\calF)$ is complete 
  and for every $n \in \N$ and every function~$f_n \in F_n$ it holds
  that
  \begin{enumerate}
  \item the function
    $f_n^\sigma(x_1,\dots,x_n) \coloneqq
    f_n(x_{\sigma(1)},\dots,x_{\sigma(n)})$ is in~$F_n$, for any
    permutation $\sigma:[n] \to [n]$, and \hfill(closed under
    permutation)
    \item the function
    $f_{n+1}(x_1,\dots,x_{n},x_{n+1}) \coloneqq f_{n}(x_1,\dots,x_{n})$
    belongs to $F_{n+1}$.
    \hfill(monotonicity)
  \end{enumerate} 
\end{definition}

It is not too hard to see that any sequence~$\calF$ gives rise to a
reasonable sequence~$\calF'$ of similar size as stated next.

\begin{lemma}
  \label{lem:reasonable}
  For any sequence~$( F_n \colon n \in \N )$ there is a reasonable
  sequence~$( F'_n \colon n \in \N )$ such that for all~$n \in \N$ it
  holds that~$F_n \subseteq F'_n$ and
  that~$\setsize{F_n'} \leq \big(1 + \sum_{i \leq n} \setsize{F_i} \big)
  \cdot \exp\bigl(O(n \log n)\bigr)$.
\end{lemma}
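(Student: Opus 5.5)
We construct $F'_n$ explicitly in three layers and then check the three requirements of \cref{def:reasonable} and the size bound.

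\textbf{Construction.} First take the union of all earlier families lifted to $n$ variables:
\[
  G_n \coloneq \Setdescr{f^{\uparrow n}}{f \in F_i,\ i \leq n},
\]
where for $f \in F_i$ the function $f^{\uparrow n}(x_1,\dots,x_n) \coloneq f(x_1,\dots,x_i)$ ignores the last $n-i$ variables. Next close under permutations:
\[
  H_n \coloneq \Setdescr{g^\sigma}{g \in G_n,\ \sigma \in S_n}.
\]
Finally add the indicators of falsifying assignments of all clauses over $n$ variables (including the empty clause, i.e.\ the constant $1$ function):
\[
  F'_n \coloneq H_n \cup \Setdescr{\lnot C}{C \text{ a clause over } x_1,\dots,x_n}.
\]
Clearly $F_n \subseteq G_n \subseteq F'_n$.

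\textbf{Size.} We have $\setsize{G_n} \leq \sum_{i\leq n}\setsize{F_i}$ and $\setsize{H_n} \leq n!\,\setsize{G_n} = \exp\bigl(O(n\log n)\bigr)\setsize{G_n}$. The number of clauses is $3^n = \exp(O(n))$. Hence
\[
  \setsize{F'_n} \leq \Bigl(1+\sum_{i\leq n}\setsize{F_i}\Bigr)\exp\bigl(O(n\log n)\bigr).
\]

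\textbf{Closure under permutation.} The clause family is permutation-invariant, and $(g^\sigma)^\tau = g^{\sigma\circ\tau}$ (up to the composition convention), so $H_n$ is closed under permutations.

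\textbf{Monotonicity.} Take $f\in F'_n$ and let $\tilde f(x_1,\dots,x_{n+1}) = f(x_1,\dots,x_n)$.
\begin{itemize}
  \item If $f$ is a clause indicator, then $\tilde f$ is the same clause viewed over $n+1$ variables, which lies in $F'_{n+1}$.
  \item If $f = (g^{\uparrow n})^\sigma$ with $g\in F_i$, then $\tilde f = (g^{\uparrow n+1})^{\sigma'}$, where $\sigma'$ extends $\sigma$ by fixing $n+1$. Since $g^{\uparrow n+1}$ ignores $x_{n+1}$, extending by a fixed point commutes with padding, so $\tilde f \in H_{n+1}$.
\end{itemize}
This bookkeeping with the padding convention is the only step that needs care. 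It is exactly why $G_n$ includes all $F_i$ for $i\leq n$, not just $F_n$.

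\textbf{Completeness.} Completeness already holds with clauses alone, since semantic resolution is a special case. Resolving $\lnot(C\vee x)$ and $\lnot(D\vee\bar x)$ yields $\lnot(C\vee D)$, and indeed $\lnot(C\vee D)\subseteq \lnot(C\vee x)\cup\lnot(D\vee\bar x)$ as indicators, because any assignment falsifying $C\vee D$ falsifies one of the two premises according to the value of $x$. Every line is a clause indicator and so lies in $F'_n$. Resolution is complete for unsatisfiable CNFs and ends in the empty clause, which is the constant $1$ function. Therefore $\sem(\calF')$ is complete and $\calF'$ is reasonable.
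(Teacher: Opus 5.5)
Your proof is correct and follows the paper's construction exactly: add all $3^n$ clause indicators for completeness, pad each $f \in F_i$ with $i \le n$ to $n$ variables for monotonicity, and close under permutations at the cost of an $n! = \exp(O(n\log n))$ factor. You are in fact more careful than the paper's three-line proof, since you check that padding commutes with the permutation closure and justify completeness via resolution; there the one implicit step is the standard fact that resolution refutations can avoid tautological resolvents, whose indicator is the constant-$0$ function rather than a clause indicator.
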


\begin{proof}
  By adding all $3^n$ clauses over $n$ variables to each~$F'_n$ we
  ensure that~$\sem(F'_n \colon n \in \N)$ is complete. By further
  adding for each~$f \in F_i$ and all~$i \leq n$ the
  formula~$f'(x_1, \ldots, x_n) \coloneq f(x_1, \ldots, x_i)$
  to~$F'_n$ we ensure monotonicity. Finally, we can close the sets of
  functions under permutations by increasing the size by at most a
  multiplicative factor $\exp\bigl(O(n\log n)\bigr)$.
\end{proof}

In the following we study tree-like $\semf$ proof systems where the
sequence~$\calF \coloneqq (F_n \colon n \in \N)$ is only restricted by
the size of the families of Boolean functions~$F_n$.
To ensure that these systems can be chosen to be somewhat natural we
want the families~$F_n$ to be of size at
least~$\setsize{F_n} = \exp\bigl(\Omega(n \log n)\bigr)$.
We say that a family of formulas is ``simple'' or ``based on a simple
combinatorial principle'' if the entire family can be refuted by a
semantic proof system over a small number of lines.

\begin{definition}[Simple]
  A sequence~$(\calA_n \colon n \in \N)$ of families~$\calA_n$ of
  $n$-variate CNF formulas is \emph{simple} if there is a sequence
  $\calF \coloneqq ( F_n \colon n \in \N )$ of families~$F_n$ of
  $n$-ary Boolean functions of
  size~$\setsize{F_n} \leq \exp\bigl(O(n\log n)\bigr)$ such that for
  all~$n \in \N$ tree-like~$\semf$ refutes any formula~$A \in \calA_n$
  in size polynomial in~$\setsize{A}$.
\end{definition}

Before we can discuss concrete examples of simple families we need to
discuss when we can translate between different encodings of these
formulas.

\begin{definition}[Encoding Robust]
  A sequence~$(G_n \colon n \in \N)$ of families~$G_n$ of $n$-ary
  Boolean functions is \emph{encoding robust} if there is a
  sequence~$\calF \coloneqq (F_n \colon n \in \N)$ of families~$F_n$
  of $n$-ary Boolean functions of
  size~$\setsize{F_n} \leq \exp\bigl(O(n\log n)\bigr)$ such that for
  any~$n \in \N$ and any~$g \in G_n$ it holds that there is a
  tree-like $\sem(\calF)$ derivation of $g$ from~$\CNF(g)$ in size
  polynomial in~$\setsize{\CNF(g)}$.
\end{definition}

\begin{lemma}
  \label{lem:encoding-robust}
  Any sequence~$(G_n \colon n \in \N)$ of families~$G_n$ of $n$-ary
  Boolean functions of
  size~$\setsize{G_n} \leq \exp\bigl(O(n\log n)\bigr)$ is encoding
  robust.
\end{lemma}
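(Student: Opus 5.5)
We construct the family $\calF$ explicitly: we let $F_n$ contain every $g \in G_n$ together with every function obtained from the clauses of $\CNF(g)$ in a fixed tree-shaped grouping. The derivation of $g$ from $\CNF(g)$ is then simply iterated semantic combination.

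Fix $g \in G_n$ and write $\CNF(g) = C_1 \land \dots \land C_m$. Viewed as indicators of falsified assignments, $g = \lnot \CNF(g)$ is the union of the indicators $\lnot C_1, \dots, \lnot C_m$. Set $h_i \coloneq \lnot C_1 \lor \dots \lor \lnot C_i$ for $i \in [m]$, so that $h_1$ is an axiom and $h_m = g$. Each $h_{i}$ satisfies $h_i \subseteq h_{i-1} \cup \lnot C_i$, so $h_i$ is semantically implied by $h_{i-1}$ and the axiom $\lnot C_i$.

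The resulting chain $\lnot C_1, \lnot C_2, h_2, \lnot C_3, h_3, \ldots, h_m$ is tree-like, because each line is used at most once as a premise. Its length is $2m-1$, which is polynomial (in fact linear) in $\setsize{\CNF(g)}$.

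We therefore define $F_n$ to consist of all $3^n$ clause-indicators together with all prefix unions $h_i^{(g)}$ for $g \in G_n$ and $i \le m_g$. Strictly, only the $h_i$ with $i \ge 2$ need to lie in $F_n$, since axioms are allowed regardless.

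The one point to check is the size bound. Here we use the canonical CNF: $\CNF(g)$ has one clause per falsifying assignment of $g$, so $m \le 2^n$ for any fixed ordering of the clauses. Hence
\[
\setsize{F_n} \le 3^n + \setsize{G_n}\cdot 2^n \le \exp\bigl(O(n\log n)\bigr),
\]
as required.

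If one wants the derivation to be shallow, a balanced-tree grouping of the clauses also works and gives at most $2m$ intermediate functions per $g$, so the bound is unchanged. We do not need this, since only length matters.

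The only potential obstacle is keeping the number of intermediate functions exponential in $n$ rather than in $\setsize{\CNF(g)}$. This is immediate because each $\CNF(g)$ has at most $2^n$ clauses, and the bound $\setsize{G_n} \le \exp\bigl(O(n\log n)\bigr)$ absorbs the extra factor of $2^n$.
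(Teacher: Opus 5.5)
Your proof is correct and is essentially the paper's argument. The paper also puts the prefix functions $f_g^{(i)} = \bigwedge_{j\le i} C_j$ of the canonical CNF into $F_n$, derives $g$ by a linear-length chain, and bounds $\setsize{F_n}$ by $\setsize{G_n}\cdot 2^n$; your unions $\lnot C_1 \lor \dots \lor \lnot C_i$ of falsified-set indicators are the careful reading of that notation under the rule-out convention, and the extra $3^n$ clause indicators are harmless but unnecessary.
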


\begin{proof}
  Consider a Boolean function~$g \in G_n$ and note that the
  CNF-encoding of~$g$ is of size $\setsize{\CNF(g)} \leq 2^n$.
  Let~$\CNF(g) \coloneq C_1 \land \dots \land C_m$ and
  $F_g \coloneq \Setdescr{f_g^{(i)} \coloneq \bigwedge_{j \leq i}
    C_j}{i \in [m]}$. Define~$F_n \coloneq \bigcup_{g \in G_n} F_g$
  and observe that the tree-like proof system
  $\treesem(F_n \colon n \in \N)$ can derive any~$g \in G_n$
  from~$\CNF(g)$ in size~$O\bigl(\setsize{\CNF(g)}\bigr)$ by
  iteratively deriving the functions~$f_g^{(i)}$ for~$i=1, \ldots, m$.
  Since $\setsize{G_n} \leq \exp(O(n\log n))$ the family~$F_n$ is of
  size~$\setsize{F_n} \leq \exp(O(n\log n))$ and the
  sequence~$(G_n \colon n \in \N)$ thus encoding robust.
\end{proof}

Let~$G = (P,H,E)$ denote a bipartite graph with partitions of
size~$\setsize{P} = m + 1$ and~$\setsize{H} = m$.
The graph pigeonhole principle over~$G$ claims that the set of
pigeons~$P$ fits into holes~$H$ such that each pigeon~$p \in P$ flies
to an adjacent hole~$h \in H$ while each hole contains at most a
single pigeon.
Denote by~$\PHP(G)$ the CNF encoding of the above principle over
variables~$\Setdescr{x_{\set{p,h}}}{\set{p,h} \in E}$, each
variable~$x_{\set{p,h}}$ indicating that pigeon~$p$ flies to hole~$h$,
consisting of clauses
\begin{align*}
  &\bigvee_{h \in N(p)} x_{\set{p,h}}
  && \forall p \in P
  &&\tag{pigeon axiom}\\
  &\bar x_{\set{p,h}} \vee \bar x_{\set{p',h}}
  && \forall h \in H, \forall p \neq p'\in N(h)
  \eqperiod
  &&\tag{hole axiom}
\end{align*}
Consider the family of graph pigeonhole principles over~$n$
variables.
We would like to argue that this family is simple.
Since~$\semf$ is generally not closed under restrictions this is not
immediate.
Additionally, for~$n \in \N$ there
are~$\exp\bigl( O(\binom{n^2}{n}) \bigr)$ many graph pigeonhole
principle formulas over $n$ variables that do not contain an isolated
pigeon and are therefore not immediately refuted.
To argue that this family of formulas is simple we can thus not add a
refutation for each instance to~$F_n$.
However, by introducing linear integer inequalities as proof lines
this entire family can be refuted efficiently.

\begin{proposition}
  \label{prop:php-simple}
  The family of graph pigeonhole principle formulas is simple.
\end{proposition}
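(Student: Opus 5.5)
Following the paragraph before the statement, I will take $F_n$ to consist of linear integer inequalities in the $n$ variables (viewed as Boolean functions) with small integer coefficients, together with all clauses. Two things then need checking: (i) there are at most $\exp(O(n\log n))$ such lines, and (ii) every graph pigeonhole formula $\PHP(G)$ over $n$ variables has a tree-like semantic refutation, of length polynomial in $\setsize{\PHP(G)}$, that only uses lines of this form. For (i), I restrict to inequalities $\sum_i a_i x_i \ge b$ with $a_i, b \in \{-n^2,\dots,n^2\}$. There are $(2n^2+1)^{n+1} = \exp(O(n\log n))$ of them, and adding the $3^n$ clauses keeps the bound.

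For (ii) I follow the standard cutting planes refutation of PHP, which is tree-like and semantically valid. Each line is the indicator of the assignments it rules out, that is, the set of assignments violating the inequality.

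\begin{enumerate}
\item For each hole $h$, derive $\sum_{p\in N(h)} x_{\{p,h\}} \le 1$ from the hole axioms of $h$. I use the usual degree-$2$ induction: introduce the pigeons of $N(h)$ one at a time. To add pigeon $p$ to the partial sum $\sum_{p'\in S} x_{\{p',h\}}\le 1$, first derive the intermediate inequality $x_{\{p,h\}} + \sum_{p'\in S} x_{\{p',h\}} \le 1 + (1 - x_{\{p,h\}})\cdot\ldots$, which in practice means combining with the $\setsize{S}$ hole axioms $\bar x_{p,h}\vee\bar x_{p',h}$ one at a time. Inside a fixed Boolean cube, two premises semantically imply a conclusion whenever every assignment violating the conclusion violates one of the premises. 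So I only need intermediate lines, each implied by two earlier ones, leading from $\sum_{p'\in S} x_{p',h}\le 1$ to $\sum_{p'\in S\cup\{p\}} x_{p',h}\le 1$. A concrete chain: the lines $x_{p,h} + \sum_{p'\in S} x_{p',h} - \sum_{p'\in T}(1 - x_{p',h})\cdot 0 \ldots$ are awkward to write down, so I will instead use the lines
\[
L_T \colon\ \sum_{p'\in S} x_{p',h} + \setsize{S}\, x_{p,h} - \sum_{p'\in T} x_{p',h} \le \setsize{S},
\]
for growing $T\subseteq S$, adding one hole axiom at each step. From $L_S$ and $\sum_{p'\in S} x_{p',h}\le 1$, the target inequality is semantically implied.

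\item Sum all pigeon axioms $\sum_{h\in N(p)} x_{\{p,h\}}\ge 1$ and all hole inequalities into a chain of partial sums, one axiom at a time. Each partial sum has coefficients bounded by $n$, and the final line $0\ge 1$ follows.
\end{enumerate}

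Every line in this refutation is used once, so the refutation is tree-like. Its length is polynomial in the number of clauses. Its coefficients stay at most $n^2$, so every line lies in $F_n$. Each step is a semantically sound inference of a linear inequality from two linear inequalities, or from an inequality and an axiom (a clause equals its inequality translation as an indicator). Completeness of the system is not needed.

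The main obstacle is Step~1. Deriving the hole inequality from pairwise clauses requires rounding, and a semantic rule allows this only because it checks implication over $\{0,1\}^n$. I must make sure every intermediate line has polynomially bounded coefficients and follows from exactly two premises. If the chain $L_T$ is fiddly, the fallback is direct: a semantic step can derive $\sum_{p'\in S\cup\{p\}} x_{p',h}\le 1$ from the line $\sum_{p'\in S} x_{p',h}\le 1$ together with the single line $\setsize{S}\,x_{p,h}+\sum_{p'\in S}x_{p',h}\le \setsize{S}$. That single line can itself be built from the hole axioms by summing them, and every partial sum in that build is valid and has small coefficients.
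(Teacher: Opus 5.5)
Your strategy works, but only through the fallback in your last paragraph. The chain $L_T$ as written is broken. Since $\sum_{p'\in S}x_{p',h}-\sum_{p'\in T}x_{p',h}=\sum_{p'\in S\setminus T}x_{p',h}$, the lines get \emph{weaker} as $T$ grows, and the endpoint $L_S$ is the tautology $\setsize{S}\,x_{p,h}\le\setsize{S}$. So $L_S$ together with $\sum_{p'\in S}x_{p',h}\le 1$ does not imply $\sum_{p'\in S\cup\{p\}}x_{p',h}\le 1$: set $x_{p,h}=1$ and exactly one $x_{q,h}=1$ with $q\in S$. The line you actually need is $L_\emptyset$, that is, $\setsize{S}\,x_{p,h}+\sum_{q\in S}x_{q,h}\le\setsize{S}$. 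You get it from the partial sums $\setsize{T}\,x_{p,h}+\sum_{q\in T}x_{q,h}\le\setsize{T}$ of the hole axioms $x_{p,h}+x_{q,h}\le 1$. Combined with $\sum_{q\in S}x_{q,h}\le 1$ it yields the extended hole inequality: if $x_{p,h}=1$ it forces every $x_{q,h}=0$, and if $x_{p,h}=0$ the second premise suffices. This is exactly your fallback, which is correct and should be the main argument. (Minor points: the induction starts from a single hole axiom. A hole of degree one gives only the tautology $x_{p,h}\le 1$, which one semantic step can absorb into a partial sum in Step~2.)

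With that fix your proof is valid, but it takes a different route from the paper. The paper takes $F_n$ to be only the cardinality constraints $\sum_{i\in I}x_i\ge j$ and $\sum_{i\in I}x_i\le j$ with $0/1$ coefficients. It uses no rounding step at all. Instead it invokes \cref{lem:encoding-robust}: the prefix conjunctions of the hole clauses of $h$ are added as auxiliary lines, and accumulating the hole axioms one at a time ends at a function that \emph{is} $\sum_{p\in N(h)}x_{p,h}\le 1$. The final summation is left implicit there, and your Step~2 fills it in. The paper's route is shorter and generic, since the same lemma supplies any small family of target constraints, and it needs no weighted coefficients. The cost is non-threshold auxiliary lines. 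Your route is self-contained and explicit and avoids that lemma. It also shows that a natural syntactic line class already suffices, namely integer inequalities with polynomially bounded coefficients, a small-coefficient semantic cutting planes. The price is the bookkeeping on coefficient sizes and on the soundness of each two-premise step.
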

\begin{proof}
  Consider the sequence~$( F_n \colon n \in \N )$ where each~$F_n$ is
  the set of threshold functions that can be either written
  as~$\sum_{i \in I} x_i \geq j$ or $\sum_{i \in I} x_i \leq j$
  for~$I \subseteq [n]$ and~$j \in [n]$.
  Since $\setsize{F_n} \leq \exp\bigl(O(n\log n)\bigr)$, by
  \cref{lem:encoding-robust} these functions are encoding robust.
  Thus by an irrelevant increase in the size of each~$F_n$ we may
  derive the graph pigeonhole principle expressed with inequalities to
  then refute the formula.
\end{proof}

For an (ordinary) graph~$G = (V,E)$ and a \emph{charge
  vector}~$\alpha \in \set{0,1}^V$ denote by~$\tseitin(G,\alpha)$ the
CNF formula over variables~$\setdescr{x_e}{e \in E}$ consisting of
constraints
\begin{align}
  \sum_{e \ni v} x_e
  &=
  \alpha_v \mod 2
  &&\forall v \in V
\end{align}
each encoded as a CNF formula with~$2^{\deg(v) - 1}$ clauses.
Consider the family of Tseitin formulas over~$n$ variables.
This family is of size~$\exp\bigl(O(\binom{n^2}{n})\bigr)$ if we
ignore formulas that contain empty clauses and are thus trivial to
refute.
While this family is large it is easily seen that it can be
efficiently refuted by proof systems with access to affine linear
forms over $\F_2$. 

\begin{proposition}
  \label{prop:tseitin-simple}
  The family of Tseitin contradictions is simple.
\end{proposition}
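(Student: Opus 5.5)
Following the template of the proof of \cref{prop:php-simple}, I would let $F_n$ consist of all affine linear forms over $\F_2$ in $n$ variables, that is, the indicator functions of the constraints $\sum_{i \in I} x_i = b \mod 2$ for $I \subseteq [n]$ and $b \in \set{0,1}$. Each such function rules out the assignments on which the corresponding parity is violated. There are only $2^{n+1} \leq \exp\bigl(O(n\log n)\bigr)$ such functions, so by \cref{lem:encoding-robust} this family is encoding robust. Consequently, after enlarging $F_n$ by an irrelevant amount, for every vertex $v$ the system can derive, from the $2^{\deg(v)-1}$ clauses encoding the constraint at $v$, the single function indicating that $\sum_{e \ni v} x_e \neq \alpha_v \mod 2$. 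This derivation is tree-like and has size $O(2^{\deg(v)})$, which is linear in the size of the clauses of $\tseitin(G,\alpha)$ belonging to $v$.

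Next I would sum the parity constraints. Fix an unsatisfiable instance; then some connected component $K$ has odd total charge $\sum_{v \in K} \alpha_v = 1$. Enumerate the vertices of $K$ as $v_1,\dots,v_r$. For $j=1,\dots,r$, derive the function $g_j$ indicating that $\sum_{i \leq j}\sum_{e \ni v_i} x_e \neq \sum_{i \leq j} \alpha_{v_i} \mod 2$. Each $g_j$ is again an affine form, since edges counted twice cancel mod 2, so $g_j \in F_n$. The inference is sound: if $g_j(\rho)=1$, then the two parities at level $j-1$ and at $v_j$ cannot both hold, so $g_{j-1}(\rho)=1$ or $f_{v_j}(\rho)=1$. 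Each intermediate line is used exactly once, so the derivation is tree-like.

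The final line $g_r$ states $0 \neq 1 \mod 2$. This is the constant $1$ function, i.e.\ contradiction, so the refutation is complete. Its total length is $O(\setsize{A})$ for the part that converts clauses to parities plus $O(n)$ for the chaining. Vertices outside $K$, and isolated vertices, can simply be ignored. Instances that contain an empty clause, such as a degree-$0$ vertex with charge $1$, are handled trivially, because the empty clause is itself the constant $1$ function.

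The only point needing care is the appeal to \cref{lem:encoding-robust}. The encoding-robust family is indexed by the $n$ variables of the formula. A parity over the $\deg(v)$ edges at $v$ should be regarded as an $n$-variate function, and the clause set $\CNF(g)$ must then match the Tseitin clauses at $v$. These clauses do coincide, because the canonical CNF of a parity on $\deg(v)$ variables consists of exactly the $2^{\deg(v)-1}$ clauses over those variables. The added prefix-conjunction functions therefore suffice, and the size bound $\exp\bigl(O(n\log n)\bigr)$ is preserved.
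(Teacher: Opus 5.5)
Your proposal is correct and follows the paper's proof: take $\F_2$-affine forms as proof lines, use \cref{lem:encoding-robust} to derive each vertex's parity constraint from its Tseitin clauses, and then refute the resulting parity system. The only difference is that you spell out the step the paper dismisses as ``readily refuted'': you chain the constraints of an odd-charge component, and every intermediate sum is again an affine form, so the derivation stays within $F_n$ and is tree-like.
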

\begin{proof}
  Note that the sequence~$\calF \coloneqq (F_n \colon n \in \N)$ of
  $\F_2$-affine linear forms is small and thus encoding robust
  according to \cref{lem:encoding-robust}.
  Add the necessary proof lines to each $F_n$ so that from the CNF
  encoding of the Tseitin contradiction $\treesemf$ can derive the
  corresponding system of $\F_2$-affine linear forms which may then be
  readily refuted.
\end{proof}

Finally, we consider lifted families of formulas. For a
function~$g \colon \set{0,1}^m \to \set{0,1}$ and a $n$-variate CNF
formula~$A$, let~$A \circ g^n$ denote the lifted CNF formula where we
replace every variable~$x \in \vars(A)$ by a formula computing~$g$
over a fresh set of variables, and writing it out as a CNF formula.

\begin{proposition}
  \label{prop:lift-simple}
  For any simple sequence~$(\calA_n \colon n \in \N)$ of
  families~$\calA_n$ of $n$-variate CNF formulas and any
  gadget~$g \colon \set{0,1}^m \to \set{0,1}$ it holds that the lifted
  sequence~$\bigl(\setdescr{A \circ g^n}{A \in \calA_n} \colon n \in
  \N\bigr)$ is also simple.
\end{proposition}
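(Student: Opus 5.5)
Given the sequence $\calF = (F_n \colon n \in \N)$ witnessing that $(\calA_n)$ is simple, I will build a new sequence $\calF' = (F'_N \colon N \in \N)$, where $N = nm$ is the number of variables of the lifted formulas. The set $F'_N$ will contain two kinds of lines. First, the composed lines $f \circ g^n$ for all $f \in F_n$, viewed as $N$-variate functions over the gadget variables. Second, the auxiliary lines needed to translate each lifted axiom into its composed version.

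For the bound, note that $\setsize{F_n} \leq \exp\bigl(O(n\log n)\bigr) \leq \exp\bigl(O(N \log N)\bigr)$, so the composed lines are few enough. To keep the notion of $n$ well defined from $N$, I will take the union over all $n$ dividing $N$ with $n \le N$. This costs at most a factor $N$ and stays within $\exp\bigl(O(N\log N)\bigr)$.

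The simulation runs as follows. Take a $\treesemf$ refutation $(f_1,\dots,f_t)$ of $A \in \calA_n$, and map each line $f_i$ to $f_i \circ g^n$. Semantic inference is preserved under composition. Indeed, if $f_i \subseteq f_j \cup f_k$ as indicators, then $f_i\circ g^n \subseteq (f_j\circ g^n) \cup (f_k \circ g^n)$, since each assignment $\beta$ to the gadget variables maps to the assignment $g^n(\beta)$ of $A$'s variables. The constant $1$ function stays constant $1$, and tree-likeness is kept because the proof structure is identical. So the only remaining issue is the leaves.

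Each leaf of the original refutation is the indicator of a falsified clause $C \in A$. In the lifted proof, the leaf must become $C \circ g^n$, that is, the indicator of falsifying $C$ with every variable replaced by $g$. This line has to be derived from the clauses of $\CNF(C\circ g^n)$, which are exactly the axioms of $A \circ g^n$ coming from $C$.

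This axiom-translation step is the main obstacle. I will handle it as in \cref{lem:encoding-robust}. Let $G_N$ be the set of functions $C \circ g^n$ over all clauses $C$ on $n$ variables; this set has size at most $3^n$ for each $n$, hence at most $\exp\bigl(O(N\log N)\bigr)$. By \cref{lem:encoding-robust}, there is a family of size $\exp\bigl(O(N\log N)\bigr)$ (the prefix conjunctions of $\CNF(C\circ g^n)$) over which each $C\circ g^n$ is derived tree-like from $\CNF(C\circ g^n)$ in $O\bigl(\setsize{\CNF(C\circ g^n)}\bigr)$ steps. I add all these lines to $F'_N$.

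The lifted refutation therefore has length at most $t\cdot O\bigl(\setsize{A\circ g^n}\bigr)$. Here $t \le \poly(\setsize{A})$ and $\setsize{A} \le \setsize{A \circ g^n}$, so the length is polynomial in $\setsize{A\circ g^n}$. Since $g$ is fixed, $m$ is constant, so all size bounds hold with the constants absorbed. This shows the lifted sequence is simple.
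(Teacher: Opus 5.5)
Your proof is correct and follows the same strategy as the paper: compose every line of the original refutation with $g^n$, and use \cref{lem:encoding-robust} to recover the composed leaves from the lifted axioms. The one real difference is the family you feed into that lemma. The paper applies it to the permuted copies of $g$ on $m$ of the $mn$ variables. You apply it to the at most $3^n$ lifted clause functions $C\circ g^n$.

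Your choice is the one that delivers exactly what is needed. The leaves of the composed refutation are the falsification indicators of $C\circ g^n$, and the available axioms are the clauses of $\CNF(C\circ g^n)$. Derivations of single copies of $g$ from $\CNF(g)$ do not by themselves yield these lines. Your argument therefore makes explicit a step that the paper's sketch leaves implicit.
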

\begin{proof}
  Denote by~$G_{mn}$ the family of $mn$-variate Boolean functions
  obtained by viewing~$g$ as an $mn$-variate function depending on the
  first $m$ variables and applying any
  permutation~$\sigma \colon [mn] \to [mn]$ to~$g$.
  Note that this family is not too
  large~$\setsize{G_{mn}} \leq \exp\bigl(O(mn \log mn)\bigr)$ and thus
  encoding robust by \cref{lem:encoding-robust}.
  Denote by~$(H_n \colon n \in \N)$ the sequence certifying
  that~$(G_n \colon n \in \N)$ is encoding robust and
  let~$(F_n \colon n \in \N )$ be the sequence certifying
  that~$(\calA_n \colon n \in \N)$ is simple.
  By composing each~$f \in F_n$ with~$g$ we obtain the
  sequence~$\big( H_{mn} \cup \setdescr{f \circ g^n}{f \in F_n} \colon
  n \in \N \big)$ with which the lifted formulas can be efficiently
  refuted.
\end{proof}

\section{Lower Bounds for Tree-Like Cook--Reckhow Proof Systems}
\label{sec:tree-like-lbs}

In this section we state our main result on tree-like semantic
refutations. We then proceed to prove the corollaries mentioned in the
introduction assuming our main theorem. The proof of the main theorem
is deferred to \cref{sec:measure}.

To obtain the desired consequences %
mentioned in \cref{sec:results-cr}
we cannot simply measure a set of proof lines~$F_n$ by its
size~$\setsize{F_n}$.
Instead, we have to resort to a more refined notion taking the
structure of the clique formula into consideration.
Recall that if a $k$-partite graph~$G$ with
partition~$V_1, \ldots, V_k$ %
contains a $k$-clique, then this $k$-clique is a
tuple~$t \in \calT \coloneq \prod_{i \in [k]} V_i$.
For any CNF
formula~$\bclique\colon \set{0,1}^X \times \set{0,1}^Y \to \set{0,1}$,
as introduced in \cref{sec:clique-formula}, there is by definition a
one\nobreakdash-to\nobreakdash-one correspondence between
tuples~$t \in \calT$ and assignments~$\rho_t \in \set{0,1}^X$ such
that for any graph~$G \in \set{0,1}^Y$ it holds
that~$\bclique(\rho_t,G) = 1$ if and only if $t$ is a $k$-clique in
$G$.
In what follows, instead of considering \emph{all}
assignments~$\set{0,1}^{X}$, we only consider
assignments~$\setdescr{\rho_t}{t \in \calT}
 \subseteq \set{0,1}^{X}
$ corresponding to tuples.
Since these assignments are in one-to-one correspondence with tuples
we
simply discuss tuples
and it is understood that we really mean to discuss the corresponding
assignments.

Associate each proof line~$f \in F_{\setsize{X}}$ with the tuples it
rules out: let~$Q \colon F_{\setsize{X}} \rightarrow \set{0,1}^\calT$
denote the map defined
by~$Q(f) \coloneqq \setdescr{ t \in \calT }{ f(\rho_t)=1 }$
and extend this notation to sets~$F \subseteq F_{\setsize{X}}$
by~$Q(F) \coloneqq \setdescr{Q(f)}{f \in F}$.
With this %
notation at hand we can state our main theorem.

\begin{restatable}{theorem}{maintheorem}
  \label{thm:main}
  For any CNF formula~$\bclique(X,Y)$ as introduced in
  \cref{sec:clique-formula}, any $\eps \in \R^+$, and
  integer~$D , k, n \in \N$ such that~$k < n^{\eps/61}$
  and $D \leq 2\log n$ the following holds. If
  $\calF \coloneq (F_t \mid t \in \N)$ is a sequence of proof lines
  satisfying~$\setsize{Q(F_{\setsize{X}})} \leq \exp(n^{2-\eps})$,
  then for $G\sim \mathcal{G}(n,k,n^{-2/D})$ \aas any tree-like
  $\sem(\calF)$ refutation of $\bclique(X,G)$ is of length
  $n^{\Omega(\eps^2D)}$.
\end{restatable}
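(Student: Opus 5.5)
Our plan is to construct, for a typical graph $G\sim\calG(n,k,n^{-2/D})$, a linear \emph{pseudo-measure} $\mu_G$ on subsets of tuples $\calT$, in the spirit of~\cite{dRPR23UnarySA}. It will satisfy $\mu_G(\calT)=1$ and $\setsize{\mu_G(Q)}\le n^{-\Omega(\eps^2 D)}$ for every set $Q$ that is simultaneously of the form $Q(f)$, $f\in F_{\setsize{X}}$, and contained in the union of tuples ruled out by a single axiom. That is, $Q$ is a \emph{weakening of an axiom} restricted to tuples. The proof rests on three reductions.

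\textbf{Reduction to leaves.} We view a tree-like refutation as a tree rooted at the constant-$1$ function, whose restriction to tuples is $\calT$. Every inference $f_i\subseteq f_j\cup f_k$ gives $Q(f_i)\subseteq Q(f_j)\cup Q(f_k)$. Walking top down, we partition $\calT$ greedily: assign each tuple of $Q(f_i)$ to one child that rules it out. The tuples that reach a leaf then form a subset of $Q(\text{leaf})$, and the leaf is an axiom. This subset is not itself a proof line, but it is determined by a path in the tree, that is, by a Boolean combination of at most depth-many lines. To keep this manageable we would instead use that, since $G$ has no $k$-clique, each tuple $t$ misses some edge. We therefore take the pieces to be $Q(f)\cap\{t : t\text{ falsifies a fixed edge axiom}\}$. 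By linearity, $1=\mu_G(\calT)=\sum_{\text{leaves}}\mu_G(\text{piece})$, and a uniform per-piece bound gives length $\ge n^{\Omega(\eps^2D)}$. The cleaner route, which we would try first, is to define $\mu_G(Q)=\sum_{t\in Q} w_G(t)$ as a signed weighting of tuples, so that linearity is automatic. Then only sets $Q(f)$ for proof lines $f$ lying below an axiom matter, and there are at most $\exp(n^{2-\eps})$ of them times $\poly$ many axioms.

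\textbf{Construction of $w_G$.} We would take $w_G(t)=\Expectation$-normalised inclusion–exclusion over subsets of the non-edges in $t$, following the Sherali--Adams pseudo-measure: $w_G(t)=\sum_{S\subseteq \binom{t}{2},\,|\,\text{core}|\le D'}\prod_{e\in S}(\ind_{e\in G}-p)/p$, truncated to small cores of size $\approx \eps D$. The truncation is chosen so that $\sum_t w_G(t)=1$ holds \aas and $w_G$ vanishes, in expectation, on tuples containing a non-edge whose endpoints lie outside the core. The key step is a concentration statement: for each fixed set $Q\subseteq\calT$ contained in the tuples falsifying one edge axiom, $\Pr_G[\setsize{\mu_G(Q)}>n^{-\delta D}]\le \exp(-n^{2-\eps/2})$.

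\textbf{Union bound and the main obstacle.} Since the number of candidate sets is at most $\exp(n^{2-\eps})\cdot\poly(n)$, a union bound makes the bound hold simultaneously for all of them. The concentration statement is the expected main obstacle. $\mu_G(Q)$ is a low-degree polynomial in the $\binom{k}{2}n^2$ edge indicators, and we need tail bounds of order $\exp(-n^{2-\eps/2})$, far beyond what Chebyshev gives. We would try a bounded-differences or hypercontractive moment argument, splitting by core size. Terms whose core size is small have small variance because $p=n^{-2/D}$, and the $k<n^{\eps/61}$ condition controls how many cores interact. The delicate part is to obtain sufficiently strong concentration for an arbitrary, unstructured $Q$. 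For this we would bound the influence of each edge on $\mu_G(Q)$ uniformly over $Q$, using only $\setsize{Q}\le\setsize{\calT}$ and the structure of the truncated polynomial, and then apply a polynomial concentration inequality such as Kim--Vu.
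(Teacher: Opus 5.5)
The pseudo-measure you propose is essentially the paper's. The factors $(\ind_{e\in G}-p)/p$ are exactly the characters $\chi_e$, with normalisation $n^{-k}$ and truncation at vertex cover $d=\eps D$. The gap is in the reduction from a refutation to measure bounds.

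Because $\mu_G$ is signed, it is neither monotone nor subadditive. Linearity therefore only helps through an identity $\mu_G(\calT)=\sum_\ell \mu_G(P_\ell)$ for a \emph{partition} $\{P_\ell\}$ of $\calT$, whose parts must come from a family fixed before $G$ is sampled. The axiom sets at the leaves of a tree-like refutation cover $\calT$ but overlap: a tuple contains many non-edges, and the same axiom may label many leaves. So your ``cleaner route'', in which only single sets $Q(f)$ below an axiom matter, gives nothing. The pieces $Q(f)\cap\{t : t \text{ falsifies a fixed edge axiom}\}$ do not partition $\calT$ either.

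Your greedy top-down split does give a partition. However, each piece is an intersection of sets $Q(f)$ and complements along a root-to-leaf path of length up to $\setsize{\pi}$. The pieces therefore range over about $\setsize{Q(F_{\setsize{X}})}^{\setsize{\pi}}$ candidates, and the union bound collapses.

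The missing idea is Brent--Spira balancing. If $\setsize{\pi}\le n^{\lambda D}$, one extracts an $F_{\setsize{X}}$-decision tree solving $\search(\bclique(X,G))$ of depth $\Delta=O(\log\setsize{\pi})=O(\lambda D\log n)$. Its leaf sets $Q_\ell$ have three properties:
\begin{itemize}
\item they partition $\calT$;
\item each is an intersection of at most $\Delta$ sets $Q(f)$ or $Q(\lnot f)$, so it lies in a $G$-independent family of size at most $(2\setsize{Q(F_{\setsize{X}})})^{\Delta}\le\exp(n^{2-\delta})$ with $\delta=60\eps/61$ (this loss is why the hypothesis reads $k<n^{\eps/61}$);
\item each is ruled out by one common missing edge. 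For a general encoding this needs the argument via the graphs $G_t$, not just ``falsifies an edge axiom''.
\end{itemize}
Then $n^{\Omega(\eps^2 D)}$ leaves force depth $\Omega(\eps^2 D\log n)$, which contradicts $\Delta=O(\lambda D\log n)$ for small $\lambda$.

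Your concentration step, as planned, would also not go through. What makes $\mu_G(Q)$ small is a cancellation you do not use. Since $\chi_e(G)=-1$ for the missing edge $e$ between blocks $i$ and $j$, pairing $H$ with $H\cup\set{i,j}$ kills every term except graphs in the $\set{i,j}$-boundary with $\vc(H)=d$. This pairing is also what removes the mean $n^{-k}\setsize{Q}$, so your tail bound must be stated on the event $e\notin E(G)$.

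Grouping the surviving terms by their core $F$, each group sum must be at most $n^{2\setsize{E(F)}/D-\delta\vc(F)/10}$ except with probability about $\exp(-n^{2-\delta/2})$. Generic tools do not reach this. Kim--Vu requires nonnegative coefficients, and such polynomial inequalities degrade with the degree, which here grows with $n$. Worst-case influence bounds fail because products of characters reach $p^{-\setsize{E(H)}}$.

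The paper instead conditions on the edges outside $V_A$, with $A=V(E(F))$. This turns a group into $\sum_{t_A}\chi_{F(t_A)}(G)\,\xi(t_A)$ with $\setsize{\xi}\le 3n^{k-\setsize{A}}$, using concentration of common neighbourhoods. It then applies a dedicated $m$-th moment bound with $m\approx n^{2-\delta/2}$, whose gain $(m/n^2)^{\vc(F)/4}$ comes from a matching of size $\vc(F)/2$ in $F$.
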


\cref{thm:main-informal,thm:main-informal-ell-CNF,thm:main-informal-binary}
follow from \autoref{thm:main} as explained
next. %
Note that since $k\leq n$ the number of clauses in the different
encodings of clique considered, \ie $\bcliqueu(G,k)$,
$\bclique_\ar(G,k)$, and $\bcliqueb(G,k)$, is bounded by $O(n^4)$.
This is \emph{independent} of the number of variables of the formulas
and thus %
implies that for any of these encodings \cref{thm:main} yields a
superpolynomial lower bound in the formula size, provided
$D = \omega(1)$.
For \cref{thm:main-informal} let $k = 4 \log n$, $D = 2\log n$, and
consider the unary encoding $\bcliqueu(G,k)$ with the range axioms
expressed as a conjunction of width 3 clauses by introducing~$O(kn)$
extension variables. Similarly, \autoref{thm:main-informal-ell-CNF}
follows for the identical parameter setting with $\ell \coloneq 2\ar$
but by considering $\bclique_\ar(G,k)$ with range axioms expressed as
3-CNF formulas by introducing extension variables.
Finally, \autoref{thm:main-informal-binary} follows from
\cref{thm:main} for $D = \log^{\varepsilon_0}n$ and
$k = 2\log^{\varepsilon_0}n$ %
and noting that the number of variables $\bclique_\ar(G,k)$ is defined
over ranges from~$n\log^{\varepsilon_0}n$ to
$2\log^{1+\varepsilon_0}n$ as $\ar$ ranges from $1$ to
$\lceil\log n\rceil$.

We defer the proof of \cref{thm:main} to \cref{sec:measure}, and first
show how our results for concrete proof systems mentioned in
\cref{sec:results-cr} follow.

Let us first consider tree-like cutting planes (CP) refutations of
$\bcliqueu(G,k)$. Each line of a CP refutation is a linear integer
inequality, which may be viewed as an affine halfspace
in~$\R^{kn}$. Intuitively, since a halfspace is determined by~$kn$
points and we are only interested in the points~$\rho_t \in \R^{kn}$
for tuples~$t \in \calT$ we see that there are
approximately~$\binom{n^k}{kn} \leq n^{k^2n}$ many distinct proof
lines. Since~$k \ll \sqrt{n}$ we should be able to appeal to
\cref{thm:main}.

Instead of formalizing the above intuition into a rigorous argument it
is more convenient to appeal to bounds on the $\VC$-dimension of the
family of affine halfspaces to then invoke \cref{thm:main} by
appealing to \cref{thm:Sauer}.

\begin{theorem}[\cite{Shai-al-2014}]\label{vc:half}
  The $\VC$-dimension of the family of affine halfspaces in
  $\R^d$ is $d+1$.
\end{theorem}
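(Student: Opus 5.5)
We show the two inequalities $\VC \geq d+1$ and $\VC \leq d+1$ separately. Throughout, an affine halfspace is a set $H_{w,b} \coloneqq \setdescr{x \in \R^d}{\langle w, x\rangle \geq b}$ for $w \in \R^d$ and $b \in \R$. Since the family is a family of subsets of $\R^d$, a finite point set $X$ is shattered exactly when every labelling of $X$ is realized by some halfspace.

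For the lower bound we exhibit a shattered set of size $d+1$, namely $X \coloneqq \set{0, e_1, \ldots, e_d}$ with $e_i$ the standard basis vectors. Fix any subset $S \subseteq X$ that should lie inside the halfspace. Set $w_i \coloneqq 1$ if $e_i \in S$ and $w_i \coloneqq -1$ otherwise. Put $b \coloneqq -1/2$ if $0 \in S$ and $b \coloneqq 1/2$ otherwise. Then $\langle w, e_i\rangle = \pm 1$, so $e_i$ lies in $H_{w,b}$ exactly when $e_i \in S$, because $b \in \set{\pm 1/2}$ separates $-1$ from $1$. Likewise $\langle w, 0\rangle = 0 \geq b$ holds exactly when $0 \in S$. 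Hence $X \cap H_{w,b} = S$, and $X$ is shattered.

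For the upper bound we show that no set $X$ of $d+2$ points is shattered. The key tool is Radon's theorem, which I would prove inline. Given $x_1, \ldots, x_{d+2} \in \R^d$, the $d+2$ vectors $(x_i, 1) \in \R^{d+1}$ are linearly dependent. So there are $\lambda_1, \ldots, \lambda_{d+2}$, not all zero, with $\sum_i \lambda_i x_i = 0$ and $\sum_i \lambda_i = 0$. Let $P \coloneqq \setdescr{i}{\lambda_i > 0}$ and $N \coloneqq \setdescr{i}{\lambda_i \leq 0}$. The condition $\sum_i \lambda_i = 0$ forces $P$ to be nonempty, and with $\Lambda \coloneqq \sum_{i\in P}\lambda_i > 0$ the point
\[
p \coloneqq \sum_{i \in P} \frac{\lambda_i}{\Lambda} x_i = \sum_{i \in N} \frac{-\lambda_i}{\Lambda} x_i
\]
lies in both $\operatorname{conv}\setdescr{x_i}{i\in P}$ and $\operatorname{conv}\setdescr{x_i}{i\in N}$.

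Now suppose some halfspace $H_{w,b}$ contains exactly the points $\setdescr{x_i}{i \in P}$. Halfspaces are convex, so $p \in H_{w,b}$, that is, $\langle w,p\rangle \geq b$. On the other hand, each $x_i$ with $i \in N$ satisfies $\langle w, x_i\rangle < b$, and $p$ is a convex combination of these points, so $\langle w, p\rangle < b$. This is a contradiction, so the labelling given by $P$ is not realized and $X$ is not shattered.

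If the points of $X$ are not distinct the claim is immediate, so this case needs no separate treatment. The only step needing care is the Radon argument, specifically keeping the strict inequality on the $N$ side and checking that $P \neq \emptyset$. Both follow from the displayed construction, so I do not expect a genuine obstacle.
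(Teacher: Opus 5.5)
Your proof is correct and is essentially the standard argument from the source the paper cites, \cite{Shai-al-2014}; the paper itself gives no proof. That argument shatters $\{0,e_1,\dots,e_d\}$, and for $d+2$ points uses the linear dependence of the lifted vectors $(x_i,1)\in\R^{d+1}$ to obtain a sign split (your Radon partition) that no halfspace can realize. The only step you leave implicit is that subsets of shattered sets are shattered, which is what makes ruling out sets of size $d+2$ enough to bound the VC-dimension by $d+1$.
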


Our main result on tree-like cutting planes refutations follows.

\begin{corollary}
  \label{cor:cp}\label{thm:cp}
  For integer $D,k,n \in \N$ such that $k \leq n^{1/64}$ and
  $D \leq 2\log(n)$ it holds that tree-like cutting planes requires
  for $G\sim \mathcal{G}(n,k,n^{-2/D})$ \aas refutations of length
  $n^{\Omega(D)}$ to refute $\bcliqueu(G,k)$.
\end{corollary}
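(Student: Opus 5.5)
The corollary follows from \cref{thm:main} applied to the unary block encoding $\bcliqueu(G,k)$, with $\eps$ fixed so that $k \le n^{1/64} < n^{\eps/61}$; for instance $\eps = 61/64$ works, or any $\eps$ slightly below $1$. Since the lower bound in \cref{thm:main} is $n^{\Omega(\eps^2 D)}$ and $\eps$ is a constant, this gives $n^{\Omega(D)}$. Tree-like cutting planes is simulated by $\treesemf$, where $F_{kn}$ is the family of indicator functions of the complements of integer halfspaces. A line $\sum_i a_i x_i \ge b$ rules out exactly the assignments satisfying $\sum_i a_i x_i < b$, which is equivalent to $\sum_i a_i x_i \le b-1$. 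The contradiction line $0\ge 1$ rules out every assignment, so it corresponds to the constant-$1$ function. Each axiom's inequality rules out exactly the falsifying assignments of its clause. Any sound cutting planes step is also a semantic implication. So a tree-like CP refutation of length $L$ yields a tree-like $\sem(\calF)$ refutation of the same length. The only hypothesis of \cref{thm:main} left to verify is $\setsize{Q(F_{kn})} \le \exp(n^{2-\eps})$.

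To bound $\setsize{Q(F_{kn})}$, I view each $\rho_t$ as a point of $\{0,1\}^{kn} \subseteq \R^{kn}$. This point is the indicator vector of the tuple $t$, and distinct tuples give distinct points. For each line $f$, the set $Q(f)$ equals $P \cap H$, where $P \coloneq \setdescr{\rho_t}{t\in\calT}$ and $H$ is an affine (closed) halfspace. So $Q(F_{kn})$ is contained in the trace of the family of affine halfspaces of $\R^{kn}$ on the ground set $P$. We have $\setsize{P} = n^k$. By \cref{vc:half} the VC-dimension of this trace is at most $kn+1$, because a set shattered by the trace is also shattered by halfspaces. Applying \cref{thm:Sauer} on the ground set $P$ then gives
\begin{equation*}
  \setsize{Q(F_{kn})} \le \setsize{P}^{kn+1} = n^{k(kn+1)} = \exp\bigl(O(k^2 n \log n)\bigr)\eqperiod
\end{equation*}
Strict and non-strict halfspaces define the same traces on integer points, so the VC bound applies unchanged.

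It remains to check that $k^2 n \log n \le n^{2-\eps}$. With $k \le n^{1/64}$, the left side is at most $n^{1+1/32}\log n$, so it suffices to take $\eps$ with $2-\eps > 1+1/32$. Together with the requirement $\eps > 61/64$ from the first paragraph, this pins $\eps$ to the window $(61/64,\, 31/32)$, which is nonempty; I take $\eps = 123/128$. Then $k < n^{\eps/61}$, since $\eps/61 > 1/64$, and also $k^2n\log n \ll n^{2-\eps}$. \Cref{thm:main} therefore applies with $D \le 2\log n$ and gives length $n^{\Omega(\eps^2 D)} = n^{\Omega(D)}$ asymptotically almost surely.

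The main obstacle is this compatibility of parameters: the constraint $k < n^{\eps/61}$ needs $\eps$ large, while the size constraint needs $\eps < 1 - 2\log_n k$. I handle it by choosing the exponent explicitly as above. A smaller point to state carefully is that the Sauer--Shelah bound should be applied with ground set $P$ and not with $\R^{kn}$.
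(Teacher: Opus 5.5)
Your proposal is correct and follows the paper's proof essentially verbatim. Both arguments simulate tree-like CP by $\treesem(\calF)$ over threshold functions, bound $\VC\bigl(Q(F_{kn})\bigr) \le kn+1$ via \cref{vc:half}, apply \cref{thm:Sauer} to get $\setsize{Q(F_{kn})} \le n^{k(kn+1)}$, and invoke \cref{thm:main}. The only difference is your choice $\eps = 123/128$ in place of the paper's $\eps = 61/64$. This is a slight improvement: $\eps = 61/64$ gives only $k \le n^{\eps/61}$ when $k = n^{1/64}$, not the strict inequality $k < n^{\eps/61}$ that \cref{thm:main} requires. For the same reason, drop your opening remark that $\eps = 61/64$ works.
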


\begin{proof}
  Denote by $( F_\ell \mid \ell \in \N )$ the sequence of families of
  linear threshold functions in $\ell$~dimensions. To bound the
  cardinality of~$Q(F_{kn})$ view any linear threshold function as an
  affine halfspace partitioning the vertices of the
  $kn$-dimensional %
  hypercube. %
  
  Note that if a subset of tuples $X \subseteq \mathcal{T}$ is
  shattered by the family~$Q(F_{kn})$, then the corresponding set of
  assignments $\setdescr{\rho_t}{t \in X}$ is shattered by the family
  of affine halfspaces. %
  By appealing to \cref{vc:half} we thus get that
  $\VC\bigl(Q(F_{kn})\bigr) \leq nk + 1$. 
  The result follows by appealing to \cref{thm:main} with the unary
  $k$-clique formula~$\bcliqueu(G,k)$, $\eps \coloneqq 61/64$, and the
  bound~$\setsize{Q(F_{kn})} \leq (n^k)^{nk + 1}$ obtained from
  \autoref{thm:Sauer}. %
\end{proof}

Note that \autoref{thm:cp} above can be slightly generalized to hold
for inferential proof systems operating over low-width disjunctions of
linear thresholds, \ie bounded-width tree-like resolution over cutting
planes, or equivalently bounded-depth \emph{stabbing planes}. This
follows from the fact that the set $F_{nk}^{(s)}$ of disjunctions of
$s$ linear threshold functions over $nk$ Boolean variables satisfies
$\Setsize{ Q\bigl(F_{nk}^{(s)} \bigr)} \leq
\Setsize{Q\bigl(F_{nk}^{(1)}\bigr)}^s \leq \exp\bigl(O(snk^2\log
n)\bigr)$. Hence as long as $s$ is a small power of $n$ we obtain
optimal average-case clique refutation length lower bounds.

The second instantiation of \autoref{thm:main} %
is for tree-like $\res(\oplus)$.

\begin{corollary}
  \label{cor:reslin}
  For integer $D,k,n \in \N$ such that $k \leq n^{1/64}$ and
  $D \leq 2\log(n)$ it holds that tree-like $\res(\oplus)$ requires
  for $G\sim \mathcal{G}(n,k,n^{-2/D})$ \aas refutations of length
  $n^{\Omega(D)}$ to refute $\bcliqueu(G,k)$.
\end{corollary}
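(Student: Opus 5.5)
We will mirror the proof of \cref{cor:cp}, replacing halfspaces by the proof lines of tree-like $\res(\oplus)$. Such lines are $\F_2$-affine clauses, i.e.\ disjunctions of affine linear forms over the $kn$ variables of $\bcliqueu(G,k)$. Viewed as a Boolean function ruling out assignments, each line is the indicator of an affine subspace of $\F_2^{kn}$: the set of assignments falsifying a disjunction of affine forms is the solution set of a system of affine equations. Let $F_{kn}$ be the family of these indicators, the empty set included. We then need to bound $\setsize{Q(F_{kn})}$ by $\exp(n^{2-\eps})$ for a suitable constant $\eps$, and apply \cref{thm:main}.

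\textbf{Step 1 (counting).} We want to avoid the crude count of affine subspaces of $\F_2^{kn}$, which is $2^{O(k^2n^2)}$ and too large. We bound the VC-dimension of $Q(F_{kn})$ instead. Suppose a set $S\subseteq\calT$ of tuples is shattered, so the points $\rho_t$, $t\in S$, are shattered by affine subspaces of $\F_2^{kn}$. Take a maximal affinely independent subset $B\subseteq S$; it has size at most $kn+1$. Any affine subspace containing $B$ contains the affine hull of $B$, which contains every point of $S$. So if $\setsize{S}>\setsize{B}$, the subset $B$ cannot be cut out, contradicting shattering. Hence $S$ is affinely independent and $\VC\bigl(Q(F_{kn})\bigr)\le kn+1$.

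\textbf{Step 2 (Sauer–Shelah).} By \cref{thm:Sauer}, with ground set $\calT$ of size $n^k$, we get
\[\setsize{Q(F_{kn})}\le (n^k)^{kn+1}=\exp\bigl(O(k^2 n\log n)\bigr).\]
With $k\le n^{1/64}$ this is at most $\exp(n^{2-\eps})$ for $\eps\coloneqq 61/64$, as in \cref{cor:cp}; the margin is ample.

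\textbf{Step 3 (applying \cref{thm:main}).} Tree-like $\res(\oplus)$ is simulated by tree-like $\sem(\calF)$ with the same length. Each resolution or weakening step is sound, so the derived line is semantically implied by at most two premises. The axioms of $\bcliqueu(G,k)$ are clauses, which are themselves affine clauses. Contradiction is the empty affine clause, which is the constant~$1$ function. \cref{thm:main} with $k<n^{\eps/61}=n^{1/64}$ and $D\le 2\log n$ then gives a length lower bound $n^{\Omega(\eps^2 D)}=n^{\Omega(D)}$ \aas

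The main point to check is Step 1, namely that the falsifying set of an affine clause is an affine subspace and that shattering forces affine independence; the remaining steps are bookkeeping identical to \cref{cor:cp}.
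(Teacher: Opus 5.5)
Your proposal is correct and follows the paper's proof: bound the VC-dimension of the family of falsifying sets of affine clauses, apply \cref{thm:Sauer} to get $\setsize{Q(F_{kn}^\vee)} \leq (n^k)^{O(kn)}$, and invoke \cref{thm:main} with $\eps \coloneqq 61/64$. The only difference is that you prove the VC bound yourself via a maximal affinely independent subset, getting $kn+1$, whereas the paper cites \cref{vc:lin} for $kn$; the extra $+1$ is immaterial for the parameters.
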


The proof of this result follows along the same lines as the proof of
\cref{thm:cp}.
Let us recall the bound on the $\VC$-dimension of linear systems of
equations over $\F_2$.

\begin{theorem}[\cite{CHEN2020}]\label{vc:lin}
  The $\VC$-dimension of solutions of linear systems of equations
  over~$\F_2$ in $d$ variables is $d$.
\end{theorem}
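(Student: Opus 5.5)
The plan is a direct two-sided argument. We read ``solutions of linear systems of equations over~$\F_2$ in $d$ variables'' as the family of solution sets $\setdescr{x \in \F_2^d}{Mx = 0}$ of homogeneous systems, that is, the family of linear subspaces of~$\F_2^d$. At the end we explain how the affine (inhomogeneous) variant changes the count by one.

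For the lower bound we exhibit a shattered set of size~$d$, namely the standard basis vectors $e_1, \ldots, e_d$. Given any $S \subseteq [d]$, take the system $\setdescr{x_j = 0}{j \notin S}$. Its solution set is the coordinate subspace spanned by $\setdescr{e_i}{i \in S}$, which contains $e_i$ if and only if $i \in S$. Hence every subset of $\set{e_1,\dots,e_d}$ is cut out, and the $\VC$-dimension is at least~$d$.

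For the upper bound, suppose a set $X \subseteq \F_2^d$ is shattered by the family of subspaces. We argue in two steps.
\begin{itemize}
\item First, $0 \notin X$: every solution set of a homogeneous system contains~$0$, so the subset $\emptyset \subseteq X$ could not be realized if $0 \in X$.
\item Second, $X$ is linearly independent. Suppose instead that some $x \in X$ lies in the span of $X \setminus \set{x}$. Any subspace containing $X \setminus \set{x}$ contains its span and hence also~$x$, so the subset $X \setminus \set{x}$ is not realized, contradicting shattering.
\end{itemize}
A linearly independent subset of~$\F_2^d$ has at most $d$ elements, so $\setsize{X} \leq d$. Combined with the lower bound, the $\VC$-dimension is exactly~$d$.

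If one instead allows inhomogeneous systems $Mx = b$, the solution sets are affine subspaces together with~$\emptyset$ (from inconsistent systems). The same reasoning, with ``span'' replaced by ``affine hull'', shows that a shattered set must be affinely independent, hence of size at most $d+1$. Conversely, $\set{0, e_1, \ldots, e_d}$ is shattered: for a nonempty subset, the affine hull of that subset meets the set exactly in the subset, by affine independence, and the empty subset is realized by an inconsistent system. So the affine variant has $\VC$-dimension~$d+1$. This off-by-one is harmless for the intended application, since with \cref{thm:Sauer} it only changes $\setsize{Q(F_{kn})}$ by a factor of at most~$n^k$.

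The only step needing care is the upper-bound observation that a shattered set must be (linearly or affinely) independent; it is elementary, and I do not expect any real obstacle.
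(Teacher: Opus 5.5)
The paper does not prove this statement. It imports it as a black box from \cite{CHEN2020} and uses it only in the proof of \cref{cor:reslin}. Your argument is therefore a genuinely different, self-contained route, and it is correct. The coordinate subspaces shatter $\set{e_1,\dots,e_d}$. A set shattered by subspaces must be linearly independent, since any subspace containing $X\setminus\set{x}$ contains its span. Your first bullet is in fact subsumed by the second, because $0$ lies in the span of the empty set.

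The citation keeps the paper short but leaves the reading of ``linear systems'' ambiguous. Your proof pins it down: the value $d$ is right only for homogeneous systems. The family that actually arises in \cref{cor:reslin} is the affine one. For an $F_{kn}$-clause $f$, the assignments with $f=0$ form the solution set of an inhomogeneous system, and $Q(f)$ is its complement among the tuples; taking complements does not change shattering. For this affine family the $\VC$-dimension is $d+1$, as you show.

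Your remark that the off-by-one only costs a factor $n^k$ in the Sauer--Shelah bound is right. In fact the paper's bound $\VC\bigl(Q(F^\vee_{kn})\bigr)\leq kn$ survives as stated. In the unary encoding each $\rho_t$ has exactly one $1$ per block. So all $\rho_t$ lie in the affine subspace $\setdescr{x}{\sum_{v\in V_i}x_v=1 \text{ for all } i\in[k]}$ of dimension $kn-k$. By your affine-independence argument, a shattered subset of it has at most $kn-k+1\leq kn$ elements.
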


With this bound at hand the proof of \cref{thm:cp} can be essentially
repeated to obtain \cref{cor:reslin}. For completeness we provide the
proof next.

\begin{proof}[Proof of \cref{cor:reslin}]
  Let~$F_{nk}$ denote the set of $\F_2$-affine linear forms over the
  $kn$ variables of $\bcliqueu(G,k)$, and denote by~$F_{kn}^\vee$ the
  set of~$F_{kn}$-clauses, that is,
  $F_{kn}^\vee \coloneqq \Setdescr{\bigvee_{f\in{I}} f}{I\subseteq
    F_{kn}}$.
  It remains to bound the $\operatorname{VC}$-dimension of
  $Q(F^\vee_{kn})$.
  Since every set of tuples shattered by $Q(F_{kn}^\vee)$
  is equivalently shattered by the set of solutions of linear systems
  of equations over $\F_2$ we may appeal to \autoref{vc:lin} to obtain
  that $\VC\bigl(Q(F_{kn}^\vee)\bigr) \leq kn$.
  \Cref{thm:main} thus yields the claimed bound if instantiated with
  the formula $\bcliqueu(G,k)$, $\eps \coloneqq 61/64$, and the bound
  $\setsize{Q(F_{kn}^\vee)} \leq n^{k^2n}$ obtained by appealing to
  \autoref{thm:Sauer} with $\VC\bigl(Q(F_{kn}^\vee)\bigr) \leq kn$.
\end{proof}

The penultimate application of \autoref{thm:main} is for tree-like
threshold $\Th(d)$ proof systems, formally establishing
\autoref{thm:threshold-k-informal}. 

\begin{corollary}
  \label{thm:threshold-k}
  For %
  integer~$D, k, n \in \N$ such that~$k = \Theta(D)$
  and~$D \leq 2 \log n$, it holds that
  tree\nobreakdash-like~$\Th\bigl( \log n / 2 \log\log n \bigr)$
  requires for $G\sim \mathcal{G}(n, k, n^{-2/D})$ \aas refutations of
  length~$n^{\Omega(D)}$ to refute $\bcliqueb(G,k)$.
\end{corollary}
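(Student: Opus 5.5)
We plan to deduce the corollary from \cref{thm:main}, instantiated with the binary encoding $\bcliqueb(G,k)$, along the same lines as the proofs of \cref{thm:cp} and \cref{cor:reslin}. The only real work is to bound $\setsize{Q(F_{\setsize{X}})}$, where $F_{\setsize{X}}$ is the family of degree-$d$ polynomial threshold functions over the $\setsize{X} = k\lceil\log n\rceil$ variables of $\bcliqueb(G,k)$, and $d = \log n/(2\log\log n)$.

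\textbf{Step 1: bound the VC-dimension.} Each degree-$d$ polynomial inequality $p \geq 0$ is a linear inequality in the monomials of degree at most $d$. Since variables are Boolean, we may take these monomials to be multilinear. Let $N$ be the number of multilinear monomials of degree at most $d$ over $\setsize{X}$ variables. Mapping each assignment $\rho_t$ to its vector of monomial evaluations in $\R^N$, every line $p \geq 0$ becomes an affine halfspace in $\R^N$. As in the proof of \cref{thm:cp}, any set of tuples shattered by $Q(F_{\setsize{X}})$ yields a set of points in $\R^N$ shattered by affine halfspaces. By \cref{vc:half} this gives $\VC\bigl(Q(F_{\setsize{X}})\bigr) \leq N+1$. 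Note also that the complements (the inequalities $p<0$, should the convention require them) change nothing up to a factor $2$.

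\textbf{Step 2: count monomials and apply Sauer--Shelah.} We have
\[
N \leq \sum_{i \leq d}\binom{k\lceil\log n\rceil}{i} \leq (k\lceil\log n\rceil+1)^d .
\]
With $k = \Theta(D) = O(\log n)$ this is $(O(\log^2 n))^{d} = 2^{(2+o(1)) d\log\log n}$. Substituting $d = \log n/(2\log\log n)$ gives $N \leq n^{1+o(1)}$.

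This is the delicate step. The constants must leave room for $\eps$, so that the final count is at most $\exp(n^{2-\eps})$ with, say, $\eps$ a small constant. If the $o(1)$ terms are too tight, we would shrink $d$ by a constant factor; the statement's $d$ seems chosen precisely so that $N \approx n$.

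Next apply \cref{thm:Sauer} to the ground set $\calT$, which has size $n^k$:
\[
\setsize{Q(F_{\setsize{X}})} \leq (n^k)^{N+1} = \exp\bigl(O(k\log n \cdot n^{1+o(1)})\bigr) = \exp\bigl(n^{1+o(1)}\bigr) \leq \exp(n^{2-\eps})
\]
for, e.g., $\eps = 1/2$ and large $n$.

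\textbf{Step 3: invoke \cref{thm:main}.} Take $\bcliqueb(G,k)$, which satisfies the three requirements on $\bclique$ by construction, together with this $\eps$. The condition $k < n^{\eps/61}$ holds because $k = O(\log n)$, and $D \leq 2\log n$ is given. \cref{thm:main} then yields that \aas every tree-like $\Th(d)$ refutation, viewed as a tree-like $\sem(\calF)$ refutation, has length $n^{\Omega(\eps^2 D)} = n^{\Omega(D)}$.

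\textbf{Main obstacle.} The only step that needs care is the bookkeeping in Step 2. Counting multilinear monomials, rather than all monomials, is what makes $N$ small enough, and the chosen degree $d$ keeps $N$ near $n$ rather than above $n^{2-\eps}$.
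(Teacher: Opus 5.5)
Your proposal is correct and matches the paper's proof: bound $\VC\bigl(Q(F_{\setsize{X}})\bigr)$ by $(1+k\log n)^d = n^{1+o(1)}$, apply \cref{thm:Sauer} over the $n^k$ tuples to get $\setsize{Q(F_{\setsize{X}})} \leq n^{k n^{1+o(1)}} \leq \exp(n^{2-\eps})$, and invoke \cref{thm:main} for $\bcliqueb(G,k)$ with a constant $\eps$. The only difference is cosmetic: you rederive the PTF VC bound by linearizing into monomials and applying \cref{vc:half}, whereas the paper cites \cref{thm:vcthresh}. Also, contrary to your closing remark, multilinearity is not essential, since even all monomials of degree at most $d$ in $t$ variables number $\binom{t+d}{d} \leq (t+1)^d$.
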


\Cref{thm:threshold-k-informal} readily follows from
\cref{thm:threshold-k} for~$D = \log^{\varepsilon} n$,
$k = 2 \log^{\varepsilon} n$, and a re\nobreakdash-parametrization by
the number of variables~$O(\log^{1+\varepsilon}n)$ the formula
$\bcliqueb(G, 2 \log^{\varepsilon} n)$ is defined over.

The proof of \cref{thm:threshold-k} follows the same template as the
proofs of \cref{thm:cp,cor:reslin}.
The following bound on the $\VC$-dimension of polynomial threshold
functions (PTFs) over Boolean variables can be deduced from
\cite{anthony2009neural} and appears explicitly in, e.g.,
\cite{Blais21}.

\begin{theorem}[\cite{anthony2009neural,Blais21}]\label{thm:vcthresh}
  The $\operatorname{VC}$-dimension of PTFs over $t$ Boolean variables
  and degree at most $d$ is $\sum_{i=0}^{\min(t,d)} \binom{t}{i}$.
\end{theorem}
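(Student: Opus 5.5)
We prove the two inequalities separately, after linearising the problem. Over Boolean inputs we have $x_i^2 = x_i$, so every polynomial of degree at most $d$ agrees on $\set{0,1}^t$ with a multilinear polynomial of degree at most $\min(t,d)$. Let $\mathcal{M}$ be the set of monomials $x_S \coloneqq \prod_{i\in S} x_i$ for $S \subseteq [t]$ with $\setsize{S} \le \min(t,d)$, including the constant monomial $x_\emptyset = 1$. Then $N \coloneqq \setsize{\mathcal{M}} = \sum_{i=0}^{\min(t,d)}\binom{t}{i}$.

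Define the feature map $\phi\colon \set{0,1}^t \to \R^{N}$ by $\phi(x) \coloneqq (x_S(x))_{S \in \mathcal{M}}$. Every PTF of degree at most $d$ has the form $x \mapsto \ind_{\langle w, \phi(x)\rangle \ge 0}$ for some $w \in \R^N$, and every $w \in \R^N$ gives such a PTF. The same holds with the reversed inequality. So the class is exactly the class of homogeneous halfspaces in $\R^N$, restricted to the image $\phi(\set{0,1}^t)$.

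\textbf{Upper bound.} We give a Radon-type argument. Take any $N+1$ distinct points $x^{(1)}, \ldots, x^{(N+1)}$. Their images $\phi(x^{(j)})$ are linearly dependent in $\R^N$, so there is a nonzero $c \in \R^{N+1}$ with $\sum_j c_j \phi(x^{(j)}) = 0$. Hence $\sum_j c_j\, p(x^{(j)}) = 0$ for every $p = \langle w, \phi(\cdot)\rangle$.

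Let $P \coloneqq \setdescr{j}{c_j > 0}$ and $M \coloneqq \setdescr{j}{c_j < 0}$.
\begin{itemize}
\item If $M \neq \emptyset$, consider the labelling that puts $P$ on the side $p \ge 0$ and all other points on the side $p < 0$. Then $\sum_{j\in P} c_j\, p(x^{(j)}) \ge 0$, while $\sum_{j \in M} c_j\, p(x^{(j)}) > 0$. The total is therefore strictly positive, contradicting $\sum_j c_j\, p(x^{(j)}) = 0$.
\item If $M = \emptyset$, then $P \neq \emptyset$ since $c \neq 0$. Label every point on the side $p<0$. Then $\sum_j c_j\, p(x^{(j)}) < 0$, again a contradiction.
\end{itemize}
In either case some labelling is not realised, so no set of $N+1$ points is shattered, and the VC-dimension is at most $N$.

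\textbf{Lower bound.} Exhibit the shattered set $X \coloneqq \setdescr{\ind_T}{T \subseteq [t],\ \setsize{T} \le \min(t,d)}$, which has exactly $N$ points. Consider the $N\times N$ evaluation matrix with rows indexed by $T$ and columns by $S$:
\begin{equation*}
  A_{T,S} \coloneqq x_S(\ind_T) = \ind_{S \subseteq T} \eqperiod
\end{equation*}
Order both rows and columns by set size, breaking ties consistently. Then $A$ is lower unitriangular, because $S \subseteq T$ with $S \neq T$ forces $\setsize{S} < \setsize{T}$. So $A$ is invertible. Given any labelling $\lambda\colon X \to \set{0,1}$, solve $A w = b$ with $b_T \coloneqq 1$ if $\lambda(\ind_T) = 1$ and $b_T \coloneqq -1$ otherwise. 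The polynomial $p = \langle w, \phi(\cdot)\rangle$ then satisfies $p(\ind_T) = b_T$, so $p \ge 0$ holds exactly on the points labelled $1$. Hence $X$ is shattered, and the VC-dimension is at least $N$.

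The only delicate point is the convention for strict versus non-strict inequalities in the upper bound. The two-case argument above handles it, because strictness appears only on the side $p<0$. Everything else is linear algebra.
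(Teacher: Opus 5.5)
Your proof is correct, and it proves the statement directly where the paper does not: the paper gives no argument of its own and only cites the literature. The standard derivation there combines two facts. First, the sign patterns of an $N$-dimensional vector space of real-valued functions have VC-dimension exactly $N$. Second, the multilinear monomials of degree at most $\min(t,d)$ form a basis of the degree-$\le d$ polynomial functions on $\set{0,1}^t$. Your upper bound is the linear-dependence argument behind the first fact, specialised to $\phi$, and your case split on whether $M$ is empty correctly handles the non-strict threshold convention. Your lower bound is more explicit than the abstract route. Rather than appeal to the existence of $N$ points whose evaluation functionals form a basis, you exhibit them: the Hamming ball of radius $\min(t,d)$ around $0$. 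The unitriangular matrix $A_{T,S} = \ind_{S\subseteq T}$ also proves the linear independence of the monomials at the same time. One point you did not state: $A$ is an integer matrix of determinant $1$, so $w = A^{-1}b$ is integral. The set is therefore shattered even by integer-coefficient polynomials, which is the convention for $\Th(d)$ lines in this paper; your upper bound holds over the reals and so a fortiori over the integers. Finally, only the upper bound is used in the paper, in the proof of \cref{thm:threshold-k}, where restricting to the points $\rho_t$ can only lower the VC-dimension. So your first half is the direction the application needs.
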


With this bound at hand the proof of \cref{thm:threshold-k} is easily
established as follows.

\begin{proof}[Proof of \cref{thm:threshold-k}]
  Let $k \coloneq \alpha D$ and $d \coloneq \log n/2\log \log n$.
  Since $\Th(d)$ operates over PTFs of degree at most $d$ by
  \cref{thm:vcthresh} it holds that the set~$F_{k \log n}$ of PTFs
  over $k \log n$ variables satisfies
  \begin{align}
    \VC\bigl(Q(F_{k \log n})\bigr)
    \leq
    (1 + k \log n)^d
    \leq
    (4 \alpha \log n)^{2d}
    \leq
    n^{1 + \littleo(1)}
    \eqperiod
  \end{align}
  We may thus appeal to \cref{thm:main} with $\bcliqueb(G,k)$ and the
  bound~$\setsize{Q(F_{k \log n})} \leq n^{kn^{1+\littleo(1)}}$
    obtained from \autoref{thm:Sauer}.
\end{proof}

Last but not least we get our corollary for tree-like Frege over
formulas of bounded size.

\begin{corollary}
  \label{thm:frege}
  For any $\eps \in \R^+$ and integer~$\ar, D, k, n \in \N$ such
  that~$D \leq 2 \log n$, $0 < c\leq \lceil\log n\rceil$
  and~$k = O(D)$, it holds that tree-like Frege with line
  size~$n^{2-\eps}$ requires for $G\sim \mathcal{G}(n, k, n^{-2/D})$
  \aas refutations of length~$n^{\Omega(D)}$ to refute
  $\bclique_\ar(G,k)$.
\end{corollary}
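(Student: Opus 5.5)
We follow the same template as \cref{thm:cp,cor:reslin,thm:threshold-k}: a counting bound on $\setsize{Q(F_{\setsize{X}})}$, fed into \cref{thm:main}. Let $F_N$ be the family of $N$-variate Boolean functions computed by Boolean formulas of size at most $s \coloneqq n^{2-\eps}$, with $N \coloneqq \setsize{X}$ the number of variables of $\bclique_\ar(G,k)$, i.e.\ $N = k\ar m$ where $m = \lceil n^{1/\ar}\rceil$. Semantic tree-like Frege with line size $s$ is simulated by $\treesem(\calF)$ for this sequence, since every line is a formula of size at most $s$ and every sound inference is a semantic implication. It therefore suffices to show that $\setsize{Q(F_N)} \leq \exp(n^{2-\eps'})$ for some constant $\eps' > 0$ depending only on $\eps$. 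We then invoke \cref{thm:main} with $\eps'$; its hypothesis $k < n^{\eps'/61}$ holds because $k = O(D) = O(\log n)$.

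The count is direct, and we do not need VC-dimension here. Since $\setsize{Q(F_N)} \leq \setsize{F_N}$, it is enough to bound the number of formulas. A formula of size at most $s$ is a tree with at most $s$ nodes. Each node carries a gate from a constant-size basis, or one of $N$ variables, or a negation. So the number of formulas is at most $\bigl(O(N)\bigr)^{O(s)} = \exp\bigl(O(s \log N)\bigr)$. Next bound $N$: for $1 \le \ar \le \lceil \log n\rceil$ we have $\ar m \le 2n$, because $m = n^{1/\ar}$ up to padding, the case $\ar=1$ gives $m = n$, and in general $\ar n^{1/\ar} \le n$ for $n \ge 4$. Hence $N \le 2kn = O(n\log n)$ and $\log N = O(\log n)$. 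Consequently
\begin{equation*}
  \setsize{Q(F_N)} \leq \exp\bigl(O(n^{2-\eps}\log n)\bigr) \leq \exp\bigl(n^{2-\eps/2}\bigr)
\end{equation*}
for all sufficiently large $n$. We thus take $\eps' \coloneqq \eps/2$.

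With this choice, \cref{thm:main} gives that \aas every tree-like $\sem(\calF)$ refutation of $\bclique_\ar(G,k)$ has length $n^{\Omega(\eps'^2 D)} = n^{\Omega(D)}$. Here $\eps$ is fixed, so the factor $\eps'^2$ is absorbed into the constant of $\Omega$. This bound also covers tree-like Frege with line size $n^{2-\eps}$, because such refutations are $\treesem(\calF)$ refutations, and tree-likeness is preserved since every inference uses at most two premises.

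Two points need care. First, we must check that $\bclique_\ar(G,k)$ meets the requirements of \cref{sec:clique-formula}: each clause contains at most one $Y$-literal, there are witnessing assignments $\rho_t$, and $\setsize{X} \le n^{2-\eps}$, which holds since $N = O(n \log n)$. Second, the formula size measure and gate basis should be fixed, but any reasonable convention changes the count only by a constant factor in the exponent. The main subtlety is the dependence of the constant in $\Omega$ on $\eps$; we handle it by treating $\eps$ as a fixed constant, as in the statement. If the axioms must themselves be proof lines, note that the semantic system admits axioms regardless of whether they lie in $F_N$, by \cref{def:semantic-psys}.
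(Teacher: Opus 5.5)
Your proposal is correct and follows the paper's proof: you bound the number of formulas of size $n^{2-\eps}$ by $\exp\bigl(O(n^{2-\eps}\log n)\bigr)\le\exp(n^{2-\eps/2})$ and invoke \cref{thm:main} with $\eps/2$. Your count $\exp(O(s\log N))$ is slightly more explicit than the paper's $\exp(O(s\log s))$, and the one slip is harmless: $\ar n^{1/\ar}\le n$ fails for small cases such as $n=5$, $\ar=3$, but you only use $\log N=O(\log n)$, which holds regardless.
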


\Cref{thm:frege-informal} follows from \cref{thm:frege}
for %
$D = \log^{\eps_0}n$, $k = 2\log^{\eps_0}n$, and observing that the
number of variables $\bclique_\ar(G,k)$ is defined over ranges
between~$2\log^{1+\eps_0}n$ and $n\log^{\eps_0}n$ for
$\ar \in [1,\lceil\log n\rceil]$.

\begin{proof}
  The number of formulas of size~$s$ is bounded
  by~$\exp\bigl(O(s\log s)\bigr)$. Hence the number of proof
  lines~$F_{\lceil k \ar n^{1/\ar} \rceil}$ is bounded
  by~$\setsize{F_{\lceil k \ar n^{1/\ar} \rceil}} \leq
  \exp(n^{2-\eps}(2-\eps)\log n) \leq \exp(n^{2-\eps/2})$. We may thus
  appeal to \autoref{thm:main} with the $\bclique_\ar(G,k)$ encoding
  and~$\eps/2$.
\end{proof}

\section{Length Lower Bounds on Tree-Like Semantic Refutations}
\label{sec:measure}

This section is devoted to arguing that
for~$G\sim \calG(n,k,n^{-2/D})$ with $D \leq 2 \log n$ and any
family~$\calQ \subseteq 2^{\calT}$ of
size~$\setsize{\calQ} \leq \exp(n^{2-\delta})$
\aas there exists a
\emph{pseudo\nobreakdash-measure} $\mu \colon 2^\calT \to \R$, linear
over the set of tuples~$\calT$, such that
\begin{enumerate}%
\item $\mu(\calT) \geq 1/2$, and
  \label[prop]{prop:mu-1}
\item if~$Q \in \calQ$ is \emph{ruled out by a missing edge}, that is,
  there is an edge~$e \not\in E(G)$ such that~$e \subseteq t$ for
  all~$t \in Q$, then~$\mu(Q) \leq n^{-\Omega(D)}$.
  \label[prop]{prop:mu-small}
\end{enumerate}
Let us sketch the proof of \cref{thm:main} assuming there exists a
pseudo-measure~$\mu$ with the above properties.

Consider a tree-like~$\sem(\calF)$
refutation~$\pi \coloneq (f_1, \ldots, f_s)$ of a formula~$A$. Towards
contradiction let us assume that~$s \leq n^{\lambda D}$ for some small
enough~$\lambda \in \R^+$.
By standard arguments we can extract from~$\pi$ a
balanced~$F_{\setsize{X}}$-decision tree~$T$ solving~$\search(A)$ of
size~$\setsize{T} = O(s)$.
Consider a node~$v \in V(T)$, denote the unique root-to-$v$ path
by~$p_v \coloneq (u_1, u_2, \ldots, u_{\tau} = v)$, and let
$C_v \coloneq \bigwedge_{u \in p_v} (f_u = b_u)$ be the conjunction of
the answers to the queried functions~$f_u$ on $p_v$.
Since~$T$ is balanced each~$C_v$ is a small conjunction of
size~$O(\log s)$.
Hence the family
\begin{align}
  \Setdescr{Q(C_v)}{v \in V(T)}
  \subseteq
  \calQ
  \coloneqq
  \Big\{
  \bigcap_{f\in I}
  Q(f) \cap
  \bigcap_{g\in J} Q(\lnot g)
  ~\Big\vert~
  I,J \subseteq F_{\lvert X \rvert}
  ,\
  \setsize{I \sqcup J}
  \leq
  O(\log s)
  \Big\}
\end{align}
is of size at most~$\setsize{Q(F_{\setsize{X}})}^{O(\log s)}$.
Since by assumption
$\setsize{Q(F_{\setsize{X}})} \leq \exp(n^{2-\delta})$ we obtain
that~$\setsize{\calQ} \leq \exp(n^{2-\delta/2})$ for small
enough~$\lambda > 0$.

We may thus consider the pseudo-measure~$\mu$ for the
family~$\calQ$.
Associate each node~$u \in V(T)$ with~$\mu\bigl(Q(C_u)\bigr)$.
Note that by linearity of the measure it holds that the values
associated with the two out-neighbours~$v_0, v_1$ of~$u$ sum to the
value associated with~$u$, that is, it holds
that~$\mu\bigl(Q(C_{u})\bigr) = \mu\bigl(Q(C_{v_0})\bigr) +
\mu\bigl(Q(C_{v_1})\bigr)$.
By \cref{prop:mu-1} of the measure the root node~$r$ of~$T$ is
associated with $\mu\bigl(Q(C_r)\bigr) = \mu(\calT) \geq 1/2$.
If we can further argue that each leaf of the refutation is associated
with a set~$Q \in \calQ$ ruled out by a missing edge, then we may
conclude by \cref{prop:mu-small} of~$\mu$ that~$T$ has $n^{\Omega(D)}$ leaves and
we thus obtain the claimed length lower bound on tree-like~$\semf$
refutations.
This completes the proof sketch of \cref{thm:main} assuming the
existence of a pseudo-measure~$\mu$ with the above properties.

We rely on the same construction for the pseudo-measure~$\mu$ as in
\cite{dRPR23UnarySA}.
The main difference lies in its analysis: whereas \cite{dRPR23UnarySA}
could guarantee \cref{prop:mu-small} for structured~$Q$ only, in our
setting there is no restriction on the nature of~$Q$---each $Q$ is an
arbitrary set of tuples that are all ruled out by the same missing edge.
The only guarantee we have is that there are not too many such sets; that is,
that~$\calQ$ is somewhat small.
We show that the analysis of \cite{dRPR23UnarySA} is flexible enough to
be adapted to our setting.

\paragraph{Organization.}
In \cref{sec:proof-main-thm} we formalize the above proof outline. %
We define the notion of \emph{cores} %
from~\cite{dRPR23UnarySA} 
in
\cref{sec:cores} %
and define the property of random graphs we need for the lower bound
to hold.
In \cref{sec:measure-proof} we establish the required properties of
the pseudo-measure and end
with \cref{sec:pseudorandom} arguing that random graphs are
pseudorandom.

\subsection{The Pseudo-Measure and Proof of the Main Theorem}
\label{sec:pseudo-measure}
\label{sec:proof-main-thm}

Denote by~$\vc(E)$ the \emph{minimum vertex cover} of a set of
edges~$E$, let $G \sim \calG(n,k,p)$ be a $k$-partite graph, and denote
by~$e$ a \emph{potential edge}, that is, an edge that has non-zero
probability of being sampled by $\calG(n,k,p)$.
Consider the biased characters
\begin{align}
  \chi_{e}(G)
  &\coloneqq
  \begin{cases}
    \frac{1-p}{p}
    &\text{if }e\in E(G)\\
    -1
    &\text{otherwise,}
  \end{cases}
\end{align}
and for a set of potential edges~$E$ define
\begin{equation}
  \chi_{E}(G) \coloneqq \prod_{e\in E}\chi_e(G)
  \eqperiod
\end{equation}
For $d \in \N$ the \emph{pseudo-measure}
$\mu_{G,d} \colon 2^{\calT} \to \R$ is defined on tuples~$t \in \calT$
by
\begin{equation}
  \mu_{G,d}(t)
  \coloneqq
  n^{-k}
  \sum_{\substack{E \subseteq \binom{t}{2}\colon\\\vc(E)\leq d}}
  \chi_{E}(G)
  \eqcomma
\end{equation}
and extended linearly to sets of tuples~$Q \subseteq \calT$ 
\begin{equation}
  \mu_{G,d}(Q)
  \coloneqq
  \sum_{t \in Q}
  \mu_{G,d}(Q)
  \eqperiod
\end{equation}

Next we state the properties required of the pseudo-measure to deduce
\cref{thm:main}.

\begin{theorem}
  \label{thm:measure}
  There is a constant~$c \in \R^+$ such that for small enough
  $\eps \in \R^+$ and for all~$d, D, \delta \in \R^+$ and
  integer~$k, n \in \N$ satisfying $\delta > c\eps$,
  $k<n^{\delta/60}$, $D \leq 2 \log n$, and~$d = \eps D$ the following
  holds for
  any family~$\calQ \subseteq 2^\calT$ of
  size~$\setsize{\mathcal{Q}} \leq \exp(n^{2-\delta})$.
  If $G\sim\mathcal G(n,k,n^{-2/D})$, then \aas
  \begin{enumerate}
  \item $\mu_{G,d}$ is linear over $\calT$,
    \label[prop]{prop:linearity}
  \item $\mu_{G,d}(\mathcal{T}) = 1-o(1)$, and
    \label[prop]{prop:whole-space}
  \item for all %
    $Q \in \mathcal{Q}$ ruled out by a missing edge it holds that
    $\mu_{G,d}(Q) \leq n^{-\Omega(\eps d)}$.
    \label[prop]{prop:edge-axiom}
  \end{enumerate}
\end{theorem}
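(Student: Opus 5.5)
Linearity (Property~\ref{prop:linearity}) holds by definition: $\mu_{G,d}$ is defined on single tuples and extended additively to sets of tuples, so nothing needs to be shown for any fixed $G$. The work lies in Properties~\ref{prop:whole-space} and~\ref{prop:edge-axiom}. I follow the analysis of~\cite{dRPR23UnarySA} for the first and replace their structural argument by a first/second moment bound plus a union bound over $\calQ$ for the second.

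For Property~\ref{prop:whole-space}, expand
\[
\mu_{G,d}(\calT)=n^{-k}\sum_{t\in\calT}\sum_{E\subseteq\binom{t}{2},\,\vc(E)\le d}\chi_E(G).
\]
The term $E=\emptyset$ contributes exactly $1$. For $E\neq\emptyset$ we have $\Expectation[\chi_E]=0$, and
\[
\Expectation[\chi_E\chi_{E'}]=\ind_{E=E'}\Bigl(\tfrac{1-p}{p}\Bigr)^{|E|}.
\]
So I bound the variance of the error term $\sum_{E\neq\emptyset}\chi_E\cdot\#\{t\supseteq V(E)\}\,n^{-k}$ by
\[
\sum_{E\ne\emptyset,\ \vc(E)\le d} n^{-2|V(E)|}\,p^{-|E|}.
\]
Here $p^{-|E|}=n^{2|E|/D}$. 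Since $\vc(E)\le d=\eps D$, the graph $E$ is sparse relative to its vertex count: $|E|\le d\,|V(E)|$. Thus each term is at most $n^{-|V(E)|(2-2\eps)}$. Counting the graphs $E$ on $v$ vertices inside $k$ blocks gives at most $(kn)^v 2^{dv}$ choices, so the sum is $\sum_v n^{-v(1-2\eps)}k^v2^{dv}=o(1)$, using $k<n^{\delta/60}$ and $2^d\le n^{2\eps}$. Chebyshev then gives $\mu(\calT)=1-o(1)$ \aas.

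For Property~\ref{prop:edge-axiom}, fix $Q\in\calQ$ that is ruled out by a missing edge $e$. I decompose $\mu(Q)$ by the "core" of each character relative to $e$: write $\chi_E=\chi_{E\cap\set{e}}\chi_{E\setminus e}$ and use $\chi_e=-1$ when $e\notin E(G)$. Conditioning on $e$ being absent, the sum over $E$ reorganizes into $\chi$-sums whose expectation (over the other edges) is small. For the mean, $\Expectation[\mu(Q)\mid e\notin G]\le |Q|n^{-k}\cdot n^{-\Omega(\eps d)}$ essentially because the sums over $E\ni e$ and $E\not\ni e$ nearly cancel, up to the truncation $\vc\le d$; this is where the cores from~\cite{dRPR23UnarySA} enter. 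I then need a tail bound strong enough to survive a union bound over $\exp(n^{2-\delta})$ sets times $n^2k^2$ edges. A second moment is far too weak, so I plan to use a high-moment/hypercontractivity bound. Since $\mu(Q)$ is a polynomial of degree at most $dk$ in roughly $n^2k^2$ independent biased bits, I would apply a $q$-th moment with $q\approx n^{2-\delta/2}$ and bound $\Pr[|\mu(Q)-\Expectation|>n^{-\Omega(\eps d)}]\le\exp(-n^{2-\delta/2})$. Equivalently, I would show a pseudorandomness property of $G$: for every set $S$ of vertices, the counts of extensions of $S$ into cores concentrate, simultaneously for all such structures, \aas.

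The main obstacle is this concentration step. The characters have huge range ($p^{-1}=n^{2/D}$), so naive hypercontractivity loses factors $p^{-dk}$. My first attempt is to group $E$ by vertex cover $S$ with $|S|\le d$ and write $\mu(Q)$ as a sum over $S$ of a quantity determined by the common neighbourhood structure of $S$ in $G$. Bounding that quantity requires only that all sets $S$ of $\le d$ vertices have common-neighbourhood sizes close to their expectation, which holds for all $S$ at once \aas by Chernoff. Arbitrary $Q$ then interacts only through counting: a fixed $Q$ can be controlled by an averaging argument over $S$, and the union over $\calQ$ is paid from the $n^{-k}$ normalization slack (needing $k<n^{\delta/60}$).
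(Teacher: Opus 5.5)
Properties~1 and~2 are fine. For Property~2 you take a different and more elementary route than the paper. The paper derives $\mu_{G,d}(\calT)\ge 1-n^{-\Omega(1)}$ from the same goodness property it uses for Property~3, after putting $\calT$ into $\calQ$. Your direct variance computation (orthogonality of the $\chi_E$, $\setsize{E}\le d\,\setsize{V(E)}$, Chebyshev) needs no union bound and is valid. The only slip is a missing factor $2^v$ for choosing the cover when you count edge sets, which is harmless.

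The gap is in Property~3. After the pairing $H\leftrightarrow H\cup\set{i,j}$, which you gesture at, $\mu_{G,d}(Q)$ reduces deterministically to $n^{-k}\sum_{t\in Q}\sum_H\chi_{H(t)}(G)$ over graphs $H$ in the $\set{i,j}$-boundary with $\vc(H)=d$. Conditioned on $e\notin E(G)$ this has mean exactly $0$, so cores are needed for the fluctuation, not for the mean. Bounding that fluctuation is the one thing your proposal leaves open.

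Your suggested fix cannot work. If the bound for a fixed $Q$ followed from common-neighbourhood concentration of all small $S$ (a $Q$-independent event), it would hold for every $Q$ ruled out by a missing edge, and that is false. For instance, take $D=2\log n$ and $k=4\log n$, so $G$ a.a.s. has no $k$-clique. List the missing edges $e_1,e_2,\dots$ and let $Q_r$ be the tuples containing $e_r$ but no earlier $e_{r'}$. This partitions $\calT$ into $O(k^2n^2)$ sets, each ruled out by a missing edge. By linearity and Property~2, some $Q_r$ has $\mu_{G,d}(Q_r)=\Omega(1/k^2n^2)$, far above $n^{-\Omega(\eps d)}$.

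So the bound on $\setsize{\calQ}$ can enter only through a union bound against a per-$Q$ failure probability of $\exp(-\omega(n^{2-\delta}))$. ``Normalisation slack'' in $n^{-k}$ is not a probability. You correctly note that this needs moments of order about $n^{2-\delta/2}$ and that naive hypercontractivity loses $p^{-dk}$, but you do not supply the idea that removes that loss.

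The missing idea is the core factorisation together with decoupling.

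By \cref{lem:cores,prop:core-boundary}, the boundary graphs split into families $\calH_F=\setdescr{F\sqcup E}{E\subseteq\CoreExtension_F}$. Here $F$ lives on $A=V(E(F))$ with $\setsize{A}\le 3d$, and $\CoreExtension_F$ contains only edges leaving $A$. Since $\sum_{E\subseteq\CoreExtension_F}\chi_{E(t)}=p^{-\setsize{\CoreExtension_F}}\,\ind_{\set{\CoreExtension_F(t)\text{ present}}}$, the sum for each $F$ becomes $\sum_{t_A}\chi_{F(t_A)}(G[V_A])\,\xi(t_A)$. The weights $\xi$ depend only on edges outside $G[V_A]$.

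Common-neighbourhood concentration (\cref{lemma:bounded-common-neighborhoods}) is used only to bound these weights, $\setsize{\xi}\le 3n^{k-\setsize{A}}$; that is where your Chernoff ingredient belongs.

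Conditioning on the outside edges, \cref{lemma:bounded-sum} with $m=n^{2-\delta/2}/4$ uses a matching of size $\vc(F)/2$ in $F$. It gains a factor $(m/n^2)^{\vc(F)/4}=n^{-\Omega(\delta\vc(F))}$ with failure probability $2^{-m}$. This survives the union bound over $\calQ$ and over the $2^{O(D(D+\log k))}$ cores.

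The loss is only $p^{-\setsize{E(F)}}\le n^{6\eps d}$, since $\setsize{E(F)}\le 3d^2$, rather than $p^{-dk}$. Summing over cores with $\vc(F)=d$ then gives $\mu_{G,d}(Q)\le n^{-d(\delta/10-12\eps-3\log k/\log n)}$, which is where $\delta>c\eps$ and $k<n^{\delta/60}$ are used.
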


Note that the first property of \cref{thm:measure} is immediate by the
definition of the pseudo\nobreakdash-measure.
The second property follows from the analysis of \cite{dRPR23UnarySA}.
All that remains is to argue the third property.
Before proving \autoref{thm:measure} let us show that
\autoref{thm:main} follows from \autoref{thm:measure}.
For convenience we restate \cref{thm:main}.

\maintheorem*

\begin{proof}%
  Let $\pi$ be a $\treesemf$ refutation of $\bclique_\ar(G,k)$ and
  suppose that~$\setsize{\pi} \leq n^{\lambda D}$ for a small
  enough~$\lambda \coloneqq \lambda(\eps) \in \R^+$.
  The first step in the proof is to construct an
  $O\bigl(\log \lvert \pi \rvert\bigr)$\nobreakdash-depth
  $F_{\setsize{X}}$\nobreakdash-decision tree $T$ that solves the
  falsified clause search
  problem~$\search\bigl(\bclique(X,G)\bigr)$. In a second step we then
  argue that the tree~$T$ must have at least~$n^{\Omega(\eps d)}$
  leaves, thereby establishing the claimed length lower bound on the
  refutation~$\pi$.
  
  Since the proof $\pi$ is tree-like, we can construct $T$ inductively
  via a Brent-Spira balancing argument: find a line $f \in \pi$ such
  that the subtree $\pi_f$ of $\pi$ rooted at $f$ is of size
  $\setsize{\pi}/3 < \setsize{\pi_f} <
  2\setsize{\pi}/3$. %
  Query $f$ and consider two cases. If $f = \false$, then one of the
  leaves of $\pi_f$ is a solution to the falsified clause search
  problem, since the leaves of $\pi_f$ imply $f$. Otherwise, if
  $f = \true$, then the pruned proof tree $\pi\setminus \pi_f$ will
  contain an axiom falsified by any assignment agreeing with the
  query. In both cases, we can recurse on a tree-like proof of size at
  most $2\setsize{\pi}/3$. By induction we thus obtain a tree~$T$
  solving the falsified clause search problem in depth
  $\depth(T) = O\bigl(\log \setsize{\pi}\bigr) \leq \Delta \coloneq c
  \lambda D \log n $ for a large enough constant~$c \in \R^+$. This
  establishes the first step of the argument. It remains to argue
  that~$T$ has many leaves.

  Associate with each leaf~$\ell \in T$ the set of tuples $Q_\ell$
  containing~$t \in Q_\ell$ whose corresponding assignment~$\rho_t$ is
  consistent with the set of queries on the root-to-$\ell$ path.
  Every such set $Q_\ell$ is the intersection of at most
  $\depth(T) \leq \Delta$ sets of the form $Q(f)$ or $Q(\neg f)$ for
  $f \in F_{\lvert X \rvert}$.
  Consider the set
  \begin{align}
    \calQ \coloneqq
    \Big\{
      \bigcap_{f\in I}
      Q(f) \cap
      \bigcap_{g\in J} Q(\lnot g)
      ~\Big\vert~
      I,J \subseteq F_{\lvert X \rvert},\
      \setsize{I \sqcup J} \leq \Delta
    \Big\}
    \eqperiod
  \end{align}
  It holds that
  $\setdescr{Q_\ell}{\ell \in \leaves(T)} \subseteq \calQ$ and
  $\setsize{\mathcal{Q}} \leq \big(2\setsize{Q(F_{\lvert X
      \rvert})}\big)^{\Delta} < 2^{\Delta(n^{2-\eps}+1)}$ by the
  assumption on the size of~$Q\bigl(F_{\lvert X \rvert}\bigr)$.
  Since~$\Delta = c \lambda D \log n$ and~$D \leq 2 \log n$, for large
  enough $n$ and $\delta \coloneq 60\eps/61$ it thus holds
  that~$\setsize{\calQ}\leq \exp(n^{2-\delta})$ and
  $k < n^{\delta/60}$.
  This allows us to consider the
  pseudo-measure~$\mu \coloneq \mu_{G, \eps_0 D}$ as in
  \autoref{thm:measure} for the set~$\calQ$
  and~$\eps_0 \coloneqq \delta/c_0 = \eps \cdot 60/61c_0$ for a large
  enough %
  constant~$c_0 \in \R^+$.
  
  Because $T$ solves the falsified clause search problem there is a
  clause $C \in \bclique(X,G)$ falsified by all tuples~$t \in Q_\ell$.
  In other words, for all~$t \in Q_\ell$ it holds that the
  assignment~$\rho_t$ falsifies~$C$.
  Fix a leaf~$\ell \in \leaves(T)$, a tuple~$t \in Q_\ell$, and denote
  by~$C$ the associated falsified clause.
  Recall that the formula $\bclique(X,G)$ is obtained by restricting
  the $\bclique(X,Y)$ formula by~$G \in \set{0,1}^{Y}$,
  where each variable~$y_e \in Y$ is the indicator variable of a
  potential edge~$e$.
  
  Consider the formula~$F_C \subseteq \bclique(X,Y)$ consisting of all
  clauses~$C' \supseteq C$. We claim that every clause~$C' \in F_C$
  contains a positive literal~$y_e$ with~$e \subseteq t$:
  recall that every clause contains at most one literal over~$Y$ and
  consider the graph~$G_t \colon Y \mapsto \set{0,1}$ with
  edges~$y_e \mapsto 1$ if and only if~$e \subseteq t$.
  Since~$\bclique(\rho_t,G_t) = 1$ but $\restrict{C}{\rho_t} = 0$, it
  holds that every clause~$C' \in F_C$ either contains a positive
  literal~$y_e$ for~$e \subseteq t$, 
  or a negative literal~$\bar y_e$
  for an edge~$e \not\subseteq t$.
  Suppose some clause~$C'$ contains a negative literal~$\bar y_e$
  for~$e \not\subseteq t$.
  Since~$\restrict{C'}{y_e \mapsto 1} = C$ which maps to false
  under~$\rho_t$ we see
  that~$\bclique(\rho_t,G_t \vee \set{y_e \mapsto 1}) = 0$. But this
  cannot be since~$t$ is a clique in $G_t \vee \set{y_e \mapsto 1}$
  and by definition~$\rho_t$ is thus a satisfying assignment of
  $\bclique(X,G_t \vee \set{y_e \mapsto 1})$.

  We conclude that~$F_C = \big(\bigwedge_{e \in E} y_e\big) \vee C$
  for some non-empty edge set~$E \subseteq \binom{t}{2}$.
  Since the above holds for every~$t \in Q_\ell$ it holds
  that~$E \subseteq \bigcap_{t \in Q_\ell} \binom{t}{2}$.
  At least one edge~$e \in E$ is not present in~$G$ as
  otherwise~$C \not \in \bclique(X,G)$. 
  Thus by the third property of the pseudo-measure
  \begin{equation}
    \mu_{G,\eps_0 D}(Q_\ell)
    =
    n^{-\Omega(\eps_0^2D)}
    =
    n^{-\Omega(\eps^2 D)}
  \end{equation}
  for all leaves~$\ell \in \leaves(T)$%
  . Since the sets~$\setdescr{Q_\ell}{\ell \in \leaves(T)}$ partition
  the set of tuples~$\mathcal{T}$ and the measure is linear over
  $\calT$ it holds that
  \begin{equation}
    \sum_{\ell \in \leaves(T)} \mu(Q_\ell) = \mu(\calT) \eqperiod
  \end{equation}
  By the first property we have that $\mu(\mathcal{T}) = 1-o(1)$ and
  $T$ thus has $n^{\Omega(\eps^2D)}$ many leaves; it is of
  depth~$\depth(T) = \Omega\bigl(\eps^2D\log(n)\bigr)$.
  Since~$\depth(T) = O\bigl(\log\setsize{\pi}\bigr)$ this contradicts
  the original assumption~$\setsize{\pi} \leq n^{\lambda D}$.
  This
  completes the proof of \cref{thm:main}.
\end{proof}

\subsection{On Cores and Pseudorandom Graphs}
\label{sec:cores}

This section recalls the notion of a core crucial to the analysis as
originally defined in \cite{dRPR23UnarySA} to then define our notion
of pseudorandom graphs.

Consider a graph~$H$ with vertex set $[k]$.
In what follows we often identify~$H$ as a graph over the vertex
set~$[k]$ and at the same time treat it as a set of
edges~$H \subseteq \binom{[k]}{2}$.
Slightly non-standard, let us say that a graph~$F$ is a
\emph{vertex-induced subgraph} of~$H$ if~$V(F) = V(H) = [k]$ and there
is a set~$S \subseteq [k]$ such that $e \in E(F)$ if and only
if~$e \in E(H)$ and~$e \subseteq S$. A \emph{core} of a graph~$H$ is a
vertex-induced subgraph~$F$ of~$H$ such that any minimum vertex cover
of~$F$ is also a vertex cover of~$H$.

\begin{lemma}[\protect{\cite[Theorem 4.4]{drpr24arxiv}}]
  \label{lem:cores}
  There is a map~$\core$ that maps graphs to one of its cores with the
  following property. For every graph~$F$ in the image of~$\core$ it
  holds that $\setsize{V(E(F))} \leq 3\vc(F)$ and that there exists an
  edge set
  $\CoreExtension_F \subseteq V\bigl(E(F)\bigr) \times
  \big([k]\setminus V\bigl(E(F)\bigr)\big)$ such that $\core(H) = F$
  if and only if $E(H) = E(F) \sqcup E$ for
  $E\subseteq \CoreExtension_F$.
\end{lemma}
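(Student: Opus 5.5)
We would prove \cref{lem:cores} by fixing an explicit canonical procedure that produces a core, and then reading the set $\CoreExtension_F$ off from the procedure. The plan has three parts: define the map $\core$, check the size bound $\setsize{V(E(F))}\le 3\vc(F)$, and prove the ``if and only if'' description of the preimages.

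\emph{Step 1 (the map).} Fix a total order on $\binom{[k]}{2}$ and let $M$ be the lexicographically first maximal matching of $H$. Put $S_0\coloneqq V(M)$. Every edge of $H$ then has an endpoint in $S_0$, and $\setsize{S_0}=2\setsize{M}\le 2\vc(H)$. Write $F_i$ for the vertex-induced subgraph of $H$ on $S_i$. We iterate: as long as some minimum vertex cover of $F_i$ fails to cover $H$, take the lexicographically first such cover $C$ and the first edge $uw\in E(H)$ missed by $C$. Since $S_0$ already touches every edge of $H$, at most one endpoint, say $w$, lies outside $S_i$, so we set $S_{i+1}\coloneqq S_i\cup\set{w}$. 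The procedure stops when every minimum vertex cover of the current $F_i$ covers $H$, and we define $\core(H)$ to be that final $F_i$, which is a core by construction.

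\emph{Step 2 (size bound).} We would show that each added vertex raises $\vc$ in a controlled way: on average, at most three vertices of $V(E(F))$ are spent per unit of $\vc(F)$. Combined with $\setsize{S_0}\le 2\setsize{M}\le 2\vc(F)$ (note $M\subseteq E(F)$), this gives $\setsize{V(E(F))}\le 3\vc(F)$. The intended argument is an exchange argument. When $C$ misses $uw$, any minimum cover of $F_{i+1}$ must contain $u$ or $w$. If $\vc(F_{i+1})=\vc(F_i)$, then the new minimum covers are forced into a strictly smaller family. We would charge such steps against the matching $M$, allowing at most one non-increasing step per matching edge, and this is exactly where the constant $3$ comes from. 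Vertices that end up isolated in $F$ are not counted in $V(E(F))$.

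\emph{Step 3 (preimage characterization).} Since $F$ is a core, every edge of $H$ outside $F$ joins a vertex $u$ that lies in every minimum vertex cover of $F$ to some $w\notin V(E(F))$. We define
\[
  \CoreExtension_F \coloneqq \Setdescr{uw}{u\in V(E(F)),\ w\notin V(E(F)),\ u \text{ in every minimum vertex cover of } F}
\]
and then prune $\CoreExtension_F$ further if necessary. The ``only if'' direction follows from the core property. For ``if'', we must show that running the procedure on $F\sqcup E$ with $E\subseteq\CoreExtension_F$ follows exactly the same steps as on $F$. First, the added edges are never chosen as missed edges, because their endpoint $u$ lies in every relevant cover. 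Second, the maximal matching does not change.

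This invariance is the main obstacle. An edge $uw\in E$ could appear earlier in the edge order than some matching edge of $F$ and thereby alter $M$. To handle this, we would first try defining $M$ greedily only over edges whose endpoints both lie in later cover-forced sets. If that fails, we would instead define $\core(H)$ as the lexicographically least vertex-induced subgraph of $H$ satisfying the core and size properties, chosen to be inclusion-minimal. We would then verify the downward-closed/upward-stable property directly: adding or removing edges from $\CoreExtension_F$ neither creates a smaller admissible core nor destroys $F$.
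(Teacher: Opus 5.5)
\paragraph{Step~2: the size bound.} The paper imports this lemma from \cite{drpr24arxiv} without proof, so your proposal has to stand on its own. Its quantitative core does not. In Step~2 you claim that steps which do not raise $\vc$ can be charged to matching edges, at most one per edge. This is false.

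\emph{Counterexample.} Take $H$ to be the edge $ab$ together with pendant vertices $w_1,w_2$ at $a$ and $x_1,x_2$ at $b$, with vertex order $a<b<w_1<x_1<x_2<w_2$. Then $M=\{ab\}$, and your procedure:
\begin{itemize}
\item adds $x_1$, and $\vc$ stays $1$;
\item adds $w_1$, and $\vc$ becomes $2$; the minimum covers are now $\{a,b\},\{a,x_1\},\{b,w_1\}$;
\item adds $x_2$ and then $w_2$, and $\vc$ stays $2$.
\end{itemize}
That is three non-increasing steps against a single matching edge.

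\emph{Why no local charging works.} The observation that the family of minimum covers shrinks gives no local control. A single increasing step can unforce many vertices at once. For example, take a hub $v$ adjacent to $u_1,\dots,u_m$, with legs $u_iy_i$. Its unique minimum cover is $\{u_1,\dots,u_m\}$. Adding a pendant at $v$ produces minimum covers with empty intersection, which leaves room for about $m$ further non-increasing steps.

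\emph{What must be shown.} With $N$ non-increasing and $I$ increasing steps you have $\setsize{S}=2\setsize{M}+N+I$. So you must prove the amortized inequality $N\le 3\vc(F_0)-2\setsize{M}+2I$. Nothing in the proposal does this, and it is the real content of $\setsize{V(E(F))}\le 3\vc(F)$.

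\paragraph{Step~3: the preimage characterization.} You identify the right obstacle but leave it unresolved.
\begin{itemize}
\item Your first $\CoreExtension_F$ is too large. Let $F$ have edges $ab,bc$ and let $d$ be ordered before $a$. Then $\core(F)=F$, but adding $bd$ makes the greedy matching $\{bd\}$, and the output becomes the path with edges $db,ba$.
\item Your first fallback is circular: it defines $M$ using cover-forced sets, which depend on the output.
\item Your second fallback presupposes the unproven size bound, and its closure properties are only asserted.
\end{itemize}

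The missing idea is that you never need to describe $\CoreExtension_F$ explicitly. Set $\mathcal{E}_F\coloneqq\{E:\core(F\sqcup E)=F\}$ and show that it is closed under subsets and under unions. Then $\mathcal{E}_F=2^{\CoreExtension_F}$ with $\CoreExtension_F\coloneqq\bigcup\mathcal{E}_F$.

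For your lexicographic procedure both closures follow directly. Every object it selects lies in the final $F$: the greedy matching edges and the missed edges. Every intermediate $S_i$ is contained in $S=V(E(F))$, on which $F\sqcup E$ induces exactly $F$.
\begin{itemize}
\item \emph{Subsets.} Deleting edges of $E$ only removes candidates that were never the lexicographic minimum, so every choice and the stopping point are unchanged.
\item \emph{Unions.} The candidate set for $F\sqcup(E_1\cup E_2)$ is the union of those for $F\sqcup E_1$ and $F\sqcup E_2$. By subset-closure, both of these runs coincide with the run on $F$, so their minima agree. Likewise an edge of $E_j$ is rejected by the greedy matching, because the matched set at that moment is the one for $F$.
\end{itemize}
Finally, $\CoreExtension_F\subseteq V(E(F))\times\bigl([k]\setminus V(E(F))\bigr)$, because maximality of $M$ and inducedness of $F$ force every edge of $H$ outside $F$ to have exactly one endpoint in $S$.

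So your original map already has the required preimage structure. What is genuinely missing is a proof of the size bound, for this map or for another one with the same closure properties.
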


Going forward we fix the mapping~$\core$ as exhibited in
\cref{lem:cores} and refer to~$\core(H)$ as \emph{the} core of~$H$.
Let us say that a graph~$H$ is in the \emph{$e$-boundary} for an
edge~$e \in \binom{V(H)}{2}$ if and only if
$\vc(H \cup e) > \vc(H)$.
The notions of an $e$-boundary and a core interact nicely as stated
next.

\begin{proposition}[\protect{\cite[Proposition 4.3]{drpr24arxiv}}]
  \label{prop:core-boundary}
  A core of a graph~$H$ is in the $e$-boundary if and only if~$H$ is.
\end{proposition}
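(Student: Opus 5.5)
The plan is to compare minimum vertex covers of~$F$, $H$, $F \cup e$ and $H \cup e$ directly. Everything reduces to the identity $\vc(F) = \vc(H)$ for a core~$F$ of~$H$, together with the monotonicity of $\vc$ under taking subgraphs. Here $F$ is any core of~$H$ and $e \in \binom{[k]}{2}$ is arbitrary. Since all graphs live on vertex set~$[k]$, adding~$e$ makes sense for both~$F$ and~$H$.

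First I will establish $\vc(F) = \vc(H)$. Since $F$ is a vertex-induced subgraph of~$H$, we have $E(F) \subseteq E(H)$, so every vertex cover of~$H$ covers~$F$ and $\vc(F) \le \vc(H)$. Conversely, pick a minimum vertex cover~$C$ of~$F$. By the definition of a core, $C$ is also a vertex cover of~$H$, hence $\vc(H) \le \setsize{C} = \vc(F)$.

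Next I show that if $H$ is not in the $e$-boundary then neither is~$F$. Assume $\vc(H \cup e) = \vc(H)$; note that $\vc(H \cup e) \ge \vc(H)$ always holds. Since $F \cup e \subseteq H \cup e$, monotonicity gives
\[
  \vc(F \cup e) \le \vc(H \cup e) = \vc(H) = \vc(F) \eqcomma
\]
so $\vc(F \cup e) = \vc(F)$.

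Finally I show that if $F$ is not in the $e$-boundary then neither is~$H$. Assume $\vc(F \cup e) = \vc(F)$ and let $C$ be a minimum vertex cover of $F \cup e$. Then $C$ covers~$F$ and $\setsize{C} = \vc(F)$, so $C$ is a \emph{minimum} vertex cover of~$F$. By the core property $C$ covers~$H$, and it covers~$e$ by choice. Hence $\vc(H \cup e) \le \setsize{C} = \vc(F) = \vc(H)$.

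Combining the two directions yields the equivalence. The only step requiring any care is this last one: one must observe that a minimum cover of $F \cup e$ of size $\vc(F)$ is automatically a minimum cover of~$F$, which is what allows the core property to be invoked. Beyond that, I expect no real obstacle.
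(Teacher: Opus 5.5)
Your proof is correct and complete: $\vc(F)=\vc(H)$ combined with monotonicity of $\vc$ (via $F \cup e \subseteq H \cup e$) gives one direction. For the other, you observe that a minimum vertex cover of $F \cup e$ of size $\vc(F)$ is a minimum cover of $F$, so the core property applies; your note that this step is the only delicate one is accurate. The paper does not reprove this statement and only cites Proposition~4.3 of de~Rezende, Potechin and Risse, so there is no in-paper proof to compare against. Your argument is the natural one, and it uses only $E(F) \subseteq E(H)$, not that $F$ is vertex-induced.
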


In other words \cref{prop:core-boundary} guarantees that either the
entire
family~$\calH_F \coloneq \setdescr{F \sqcup E}{E \subseteq
  \CoreExtension_F}$ is in the $e$-boundary or no
graph~$H \in \calH_F$ is in the $e$-boundary.
The next statement gives a simple bound on the number of core
graphs~$\calF_d \coloneqq \Setdescr{\core(H)}{H \subseteq \binom{[k]}{2}
  \text{ with } \vc(H) \leq d}$ for graphs over $k$ vertices and
minimum vertex cover bounded by~$d$.

\begin{lemma}[\protect{\cite[Lemma~2.4]{drpr24arxiv}}]
  \label{lem:number-cores}
  There are at most $2^{b(a+\log k)}$ graphs $H$ over $k$ vertices
  with a vertex cover of size $a$ and $\setsize{ V(E(H)) } \leq b$.
\end{lemma}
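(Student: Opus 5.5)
The plan is a direct encoding argument. A graph $H$ on vertex set $[k]$ with vertex cover of size $a$ and $\setsize{V(E(H))} \leq b$ is determined by two pieces of data: the set $S \coloneqq V(E(H))$ of non-isolated vertices, and a vertex cover $C \subseteq S$ of size $a$ together with the edges incident to $C$. We will count the ways to choose each piece and multiply.

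First, choose $S$. Since $\setsize{S} \leq b$, we may describe $S$ as a list of at most $b$ vertices, each named with $\lceil \log k\rceil$ bits. That gives at most $k^{b}$ choices, or more crudely $\sum_{i\leq b}\binom{k}{i} \leq 2^{b\log k}$ after padding to length exactly $b$ with a repeated vertex.

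Second, choose the edges. Every edge of $H$ has at least one endpoint in $C$ and lies inside $S$. So $H$ is determined by $C \subseteq S$ together with, for each $u\in C$, its neighbourhood inside $S$, a subset of $S$. Choosing $C$ costs at most $2^{b}$, since it is a subset of $S$. Specifying the neighbourhoods costs at most $2^{ab}$, one bit for each pair in $C\times S$.

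Multiplying, the total is at most $2^{b\log k}\cdot 2^{b}\cdot 2^{ab} = 2^{b(a+1+\log k)}$. This misses the target $2^{b(a+\log k)}$ by the extra factor $2^b$.

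The only delicate step is absorbing that factor. The natural fix is to encode $C$ inside the description of $S$: list $S$ as an ordered list whose first $a$ entries form the cover. This costs no more than $b\lceil\log k\rceil$ bits and removes the separate $2^b$ for choosing $C$. We then assume $\log k$ is an integer, or else absorb the rounding using the slack in the trivial cases.

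The degenerate cases are handled separately. When $a=0$, $H$ is the empty graph and the bound $2^{b\log k}\ge 1$ holds trivially. When $b\ge 1$, listing $S$ this way costs at most $k^b = 2^{b\log k}$, and the adjacency bits cost $2^{ab}$, giving exactly $2^{b(a+\log k)}$.
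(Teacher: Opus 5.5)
Your final argument is correct and is essentially the standard encoding proof behind the cited \cite[Lemma~2.4]{drpr24arxiv} (the paper itself gives no proof). You take a length-$b$ list containing $S$ with a cover in the first $a$ slots ($k^b$ choices) plus $ab$ adjacency bits, giving $k^b2^{ab}=2^{b(a+\log k)}$; you should also say that w.l.o.g.\ the cover is $C\cap S$ enlarged inside $S$ to $\min(a,\setsize{S})$ vertices, so that everything fits. Since counting the $k^b$ lists directly is exact, drop the $\lceil\log k\rceil$-bit detour and the ``absorb the rounding'' remark (no such slack exists in general), and also the intermediate bound $\sum_{i\le b}\binom{k}{i}\le 2^{b\log k}$, which is false for $b=1$.
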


For a tuple~$t \in \calT$ write~$t_i \coloneq t \cap V_i$ for the
vertex in the $i$th block~$V_i$,
consider a graph~$H \subseteq \binom{[k]}{2}$, and denote by~$H(t)$ the
graph obtained by replacing each vertex $i \in [k]$ of~$H$ by $t_i$.
In other words, the graph $H(t)$ is defined over vertices
$\set{t_1,\dots,t_k}$ and
edges~$\Setdescr{\set{t_i,t_j}}{\set{i,j} \in E(H)}$.

\begin{definition}[($\calQ, D, \delta$)-good]
  Let~$s \in \N^+$, denote by~$Q \subseteq \calT$ a set of tuples, and
  consider a core~$F$. A $k$-partite graph~$G$ with
  partition~$V_1, \ldots, V_k$ of size $\setsize{V_i} = n$ each is
  \emph{$s$-bounded over~$Q$ and $F$} if it holds that
  \begin{displaymath}
    n^{-k}
    \,
    \Big\lvert
    \sum_{t\in Q}
    \sum_{H \in \calH_F}
    \chi_{H(t)}(G)
    \Big\rvert
    \leq
    s
    \eqperiod
  \end{displaymath}
  For $D \in \R^+$ and for a family~$\calQ \subseteq 2^\calT$ of sets of
  tuples
  the graph $G$ is \emph{($\calQ, D, \delta$)-good} if $G$ is
  $s$\nobreakdash-bounded over~$Q$ and~$F$ for all~$Q \in \calQ$, all
  non-empty~$F \in \calF_{D/4}$, and
  $s \coloneqq n^{2\setsize{E(F)}/D - \delta\vc(F)/10}$.
\end{definition}

\begin{restatable}{theorem}{randomness}
  \label{clm:bnd-sum}
  \label{thm:pseudorandom}
  The following holds for $k, n \in \N$ and~$D, \delta \in \R^+$
  satisfying~$D \leq 2 \log n$ and $k \leq n^{1/5}$.
  For a family~$\calQ \subseteq 2^\calT$ of
  size~$\setsize{\calQ} \leq \exp(n^{2-\delta})$ it holds that
  $G \sim \calG(n,k,n^{-2/D})$ is \aas
  $(\calQ, D, \delta)$-good.
\end{restatable}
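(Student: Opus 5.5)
We prove \cref{thm:pseudorandom} by a union bound over all pairs $(Q,F)$ with $Q \in \calQ$ and $F \in \calF_{D/4}$ non-empty, combined with a strong concentration inequality for the random variable
\begin{displaymath}
  Z_{Q,F}(G) \coloneqq n^{-k}\sum_{t\in Q}\sum_{H\in\calH_F}\chi_{H(t)}(G)
  \eqperiod
\end{displaymath}
By \cref{lem:number-cores} together with $\setsize{V(E(F))}\le 3\vc(F)\le 3D/4$ and $D\le 2\log n$, there are at most $n^{O(\log n\cdot\log k)} = \exp(n^{o(1)})$ cores. The total number of pairs is therefore at most $\exp(n^{2-\delta}+n^{o(1)})$. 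It thus suffices to show that for each fixed pair,
\begin{displaymath}
  \Pr\bigl[\setsize{Z_{Q,F}} > n^{2\setsize{E(F)}/D-\delta\vc(F)/10}\bigr] \le \exp\bigl(-n^{2-\delta/2}\bigr)
  \eqperiod
\end{displaymath}

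First we use the structure of $\calH_F$ from \cref{lem:cores}. Summing over $E\subseteq\CoreExtension_F$ factorizes: $\sum_{H\in\calH_F}\chi_{H(t)} = \chi_{F(t)}\prod_{e\in\CoreExtension_F}(1+\chi_{e(t)})$, and $1+\chi_e = \ind_{e\in G}/p$. So $Z_{Q,F}$ is $n^{-k}\sum_{t\in Q}\chi_{F(t)}(G)\,p^{-\setsize{\CoreExtension_F}}\ind[\CoreExtension_F(t)\subseteq G]$. Grouping tuples by their restriction $t_S$ to the core vertices $S = V(E(F))$ (at most $3\vc(F)$ blocks), the inner factor becomes, for each fixed $t_S$, a count of extensions of $t_S$ into $Q$ whose cross edges to $S$ are all present. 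The absolute value of that count, normalized by $n^{-(k-|S|)}p^{-|\CoreExtension_F|}$, is at most about $1$ times a typical-degree factor. I would first establish an auxiliary a.a.s.\ event, independent of $\calQ$: all common neighbourhoods of at most $3D/4$ vertices into a block have size $(1\pm o(1))np^{j}$, or at least at most $n^{o(1)}np^j$. This holds since $np^{j} \ge n^{1-3/2\cdot 2/4\cdot\ldots}$ remains polynomially large when $j\le 3D/4$ (using $p=n^{-2/D}$, $np^{3D/4}=n^{-1/2}$; so I will need to be careful and instead bound by $\max(np^j, \log n)$). On this event, the extension weights $w(t_S)$ are bounded, so $Z_{Q,F}$ is essentially a weighted sum $n^{-|S|}\sum_{t_S} w(t_S)\chi_{F(t_S)}(G)$ with $|w|\le n^{o(1)}$.

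The main obstacle is the concentration step, since $w$ itself depends on $G$ (through the $\CoreExtension_F$ edges) and $Q$ is arbitrary. I would handle this by exposing edges in two rounds. The core edges $F(t_S)$ lie inside $\binom{S\text{-blocks}}{2}$, while the extension edges go from $S$-blocks to non-$S$ blocks, so they are disjoint sets of random variables. Conditioned on the extension edges, $w$ is fixed, and $\sum_{t_S} w(t_S)\chi_{F(t_S)}$ is a degree-$\setsize{E(F)}$ polynomial in independent centered biased characters. Each character has magnitude $p^{-1}=n^{2/D}$, giving the main term $n^{2|E(F)|/D}$. Since $n^{-|S|}\sum_{t_S}$ is an average over $n^{|S|}\ge n^{2}$ many points, a moment bound of order $q\approx n^{2-\delta/2}$ (hypercontractivity for biased product measures, or Kim–Vu/Janson-type polynomial concentration) should give the deviation $n^{2|E(F)|/D}\cdot n^{-\Omega(\vc(F))}$ with failure probability $\exp(-n^{2-\delta/2})$. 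The vertex cover enters because every edge of $F$ touches a minimum cover. Fixing the cover vertices leaves the characters independent across the remaining choices, which yields a saving of roughly $n^{-\vc(F)/2}$ per cover vertex, beating $n^{-\delta\vc(F)/10}$ even after losing $n^{o(1)}$ factors and using $k\le n^{1/5}$ to absorb $k^{O(D)}$ combinatorial terms.

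Finally I would take the union bound over all pairs and intersect with the auxiliary degree event, concluding that $G$ is $(\calQ,D,\delta)$-good \aas. If the tail bound at level $q\approx n^{2-\delta/2}$ proves too weak for cores with $\vc(F)=1$ (a star), I would treat stars separately. In that case $Z$ is an average over a single centre vertex of products of independent edge characters, so a direct Chernoff bound over the $\approx n^2$ independent edge variables between two blocks gives exactly the $\exp(-n^{2-\delta/2})$ failure probability required.
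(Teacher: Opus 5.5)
Your skeleton matches the paper's. You factor $\sum_{H\in\calH_F}\chi_{H(t)}=\chi_{F(t)}\,p^{-\setsize{\CoreExtension_F}}\ind[\CoreExtension_F(t)\subseteq E(G)]$, and you condition on all edges outside $\binom{V_A}{2}$ (with $A=V(E(F))$, your $S$) so that the weight $\xi(t_A)$ is frozen. You bound $\xi$ through an a.a.s.\ common-neighbourhood event, and you take a union bound over $\calF_{D/4}\times\calQ$ that needs per-pair failure probability $\exp(-n^{2-\delta/2})$.

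The gap is the step that carries all the content: a tail bound for $\sum_{t_A}\chi_{F(t_A)}(G)\,\xi(t_A)$ with \emph{arbitrary} bounded weights at moment order $m\approx n^{2-\delta/2}$. The tools you name do not give it.
\begin{itemize}
\item Biased hypercontractivity yields at best $\|f\|_m\lesssim (m/p)^{\setsize{E(F)}/2}\|f\|_2$. After your normalization, $\|f\|_2$ can be as large as $n^{-\setsize{A}/2}p^{-\setsize{E(F)}/2}$, while the target is $p^{-\setsize{E(F)}}n^{-\delta\vc(F)/10}$. So you would need roughly $(2-\delta/2)\setsize{E(F)}\le\setsize{A}-\delta\vc(F)/5$. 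For small $\delta$ this already fails for the star with two leaves ($\setsize{E(F)}=2$, $\setsize{A}=3$, $\vc(F)=1$), and more generally when $\setsize{E(F)}$ is well above $\setsize{A}/2$.
\item Kim--Vu type bounds lose a factor $\lambda^{\setsize{E(F)}}$ with $\lambda\approx n^{2-\delta/2}$, and the degree can be $\Theta(\log^2 n)$.
\item Fixing a vertex cover factorizes nothing once $\xi$ comes from an arbitrary $Q$. The saving you attribute to it, about $n^{-1/2}$ per cover vertex, is unavailable at this failure probability. Already for a single edge with constant weights Chernoff is tight, and at failure probability $\exp(-n^{2-\delta/2})$ only $n^{-\Theta(\delta)}$ is attainable; this is why the target is $n^{-\delta\vc(F)/10}$.
\item Your star fallback also fails. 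A star with at least two leaves gives a polynomial of degree at least two with arbitrary coefficients, not a sum of independent terms.
\end{itemize}

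The missing idea is a direct trace-type moment computation, which is exactly what the paper imports as \cref{lemma:bounded-sum}. Expand $\Expectation\bigl[(\sum_{t_A}\chi_{F(t_A)}\xi(t_A))^m\bigr]$. Bound every term crudely by $\bigl\lvert\Expectation[\prod_{i\le m}\chi_{F(t^i)}]\bigr\rvert\le p^{-m\setsize{E(F)}}$, using $\Expectation[\chi_e]=0$ and $\lvert\Expectation[\chi_e^j]\rvert\le p^{1-j}$ for $j\ge 2$; this pays the main factor $p^{-\setsize{E(F)}}$ exactly once per copy. Then count the $m$-tuples $(t^1,\dots,t^m)$ with non-vanishing expectation using only a matching $M\subseteq E(F)$ with $\setsize{M}\ge\vc(F)/2$. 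For each $\{a,b\}\in M$, the projections $(t^i_a,t^i_b)_{i\le m}$ range over an alphabet of size $n^2$ and must have every value repeated. This costs a factor of order $(m/n^2)^{m/2}$, independently over the disjoint matching edges. Markov's inequality then gives $\bigl(r\,p^{-\setsize{E(F)}}(m/n^2)^{\setsize{M}/2}n^{\setsize{A}}/s\bigr)^m$. Taking $m=n^{2-\delta/2}/4$ yields the saving $(m/n^2)^{\vc(F)/4}\approx n^{-\delta\vc(F)/8}$ with failure probability $2^{-m}$.

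Two smaller corrections to your weight bound:
\begin{itemize}
\item The common neighbourhoods involve at most $\vc(F)\le D/4$ vertices, because $\CoreExtension_F$ joins a vertex of $B$ only to vertices lying in every minimum vertex cover of $F$. Hence $np^j\ge n^{1/2}$, and your $\max(np^j,\log n)$ fallback, which would lose polynomial factors, is unnecessary.
\item You need per-block accuracy $1\pm 1/k$ as in \cref{lemma:bounded-common-neighborhoods}, not $1\pm o(1)$ or $n^{o(1)}$. The weight is a product over up to $k\le n^{1/5}$ blocks.
\end{itemize}
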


We defer the proof of \cref{clm:bnd-sum} to
\cref{sec:weighted-sums}. In the following we verify that
\cref{clm:bnd-sum} suffices to prove \cref{thm:measure}.

\subsection{The Pseudo-Measure Is Well-Behaved on
  Good Graphs}
\label{sec:measure-proof}

As mentioned earlier it was already shown in \cite{dRPR23UnarySA} that
the measure is large on the entire set of tuples~$\calT$.
Since we are using a different notion of pseudorandomness we reprove
it for completeness.

\begin{lemma}
  \label{lem:whole-space}
  There is a constant~$c \in \R^+$ such that for small
  enough~$\eps \in \R^+$ and for all~$d, D, \delta \in \R^+$ and
  integer~$k, n \in\N$ satisfying~$\delta > c\eps$,
  $k \leq n^{\delta/60}$, $D \leq 2\log n$, and $d = \eps D$ the
  following holds for any family~$\calQ \subseteq 2^\calT$ of
  size~$\setsize{\calQ} \leq \exp(n^{2-\delta})$.
  If $\calT \in \calQ$ and $G$ is $(\calQ, D, \delta)$-good, then
  $\mu_{G,d}( \calT ) \geq 1-n^{-\Omega(1)}$.
\end{lemma}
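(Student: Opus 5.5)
We expand $\mu_{G,d}(\calT)$ by grouping the edge sets $E\subseteq\binom{t}{2}$ according to the pattern graph $H\subseteq\binom{[k]}{2}$ with $E=H(t)$, and then grouping the graphs $H$ by their core. Concretely,
\begin{equation*}
  \mu_{G,d}(\calT)
  =
  n^{-k}\sum_{t\in\calT}\sum_{\substack{H\subseteq\binom{[k]}{2}\colon\\ \vc(H)\le d}}\chi_{H(t)}(G)
  =
  \sum_{F\in\calF_d} n^{-k}\sum_{t\in\calT}\sum_{H\in\calH_F}\chi_{H(t)}(G)
  \eqcomma
\end{equation*}
where we use that $\vc(\core(H))=\vc(H)$, since a minimum vertex cover of the core covers $H$ and the core is a subgraph. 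By \cref{lem:cores}, $\calH_F$ is exactly the preimage of $F$ under $\core$. The empty core $F=\emptyset$ has $\calH_\emptyset=\{\emptyset\}$, because the empty extension set forces $H=\emptyset$. It therefore contributes exactly $n^{-k}\cdot\setsize{\calT}\cdot\chi_\emptyset=1$. The task reduces to showing that the total contribution of all non-empty cores $F\in\calF_d$ is $n^{-\Omega(1)}$ in absolute value.

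For a non-empty core, $d=\eps D\le D/4$ for small $\eps$, so $F\in\calF_{D/4}$. Since $\calT\in\calQ$ and $G$ is $(\calQ,D,\delta)$-good, the inner sum is bounded in absolute value by $n^{2\setsize{E(F)}/D-\delta\vc(F)/10}$. We then need $\setsize{E(F)}$ to be small relative to $D\vc(F)$. Every edge of $F$ lies within $V(E(F))$, which has size at most $3\vc(F)$ by \cref{lem:cores}. Each edge touches the vertex cover, so $\setsize{E(F)}\le \vc(F)\cdot 3\vc(F)\le 3d\,\vc(F)=3\eps D\vc(F)$. Hence the exponent is at most $\vc(F)\,(6\eps-\delta/10)\le-\vc(F)\delta/20$ once $\delta>c\eps$ with $c=120$.

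Finally, we sum over cores by the value $a=\vc(F)\ge1$. By \cref{lem:number-cores} with $b=3a$, there are at most $2^{3a(a+\log k)}$ cores with vertex cover $a$. Since $a\le d\le 2\eps\log n$, we get $2^{3a^2}\le n^{6\eps a}$, and $k^{3a}\le n^{3a\delta/60}=n^{a\delta/20}$. The total over vertex cover $a$ is then at most $n^{a(6\eps+\delta/20-\delta/20)}$. This is where the balance is tight, and I would tighten the constants: the goodness bound gives $-\delta\vc/10$ and the edge term gives $6\eps a$. Keeping the full $\delta/10$ before absorbing the edge term, the per-$a$ total is at most $n^{a(12\eps+\delta/20-\delta/10)}\le n^{-a\delta/40}$ for $c$ large. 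Summing this geometric series over $a\ge1$ yields $n^{-\Omega(\delta)}=n^{-\Omega(1)}$, which gives $\mu_{G,d}(\calT)\ge1-n^{-\Omega(1)}$. The main obstacle is this constant bookkeeping: the $k^{3a}$ count of cores competes directly with the $\delta/10$ gain from goodness, which is exactly why the hypothesis $k\le n^{\delta/60}$ is imposed.
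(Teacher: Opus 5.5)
Your proposal is correct and matches the paper's proof: you split off the empty core, which contributes exactly $1$, and bound each non-empty core via goodness with $Q=\calT$ (your explicit check that $\calF_d\subseteq\calF_{D/4}$ is left implicit in the paper). You then use $\setsize{E(F)}\le 3d\,\vc(F)$ and \cref{lem:number-cores} to reach the paper's per-level exponent $-a(\delta/10-12\eps-3\log k/\log n)$. The one flaw is your first pass, which absorbs $6\eps$ into $\delta/20$ too early and leaves a non-decaying $n^{6\eps a}$; drop it and keep your corrected bookkeeping, which works with $c\ge 480$.
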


\begin{proof}
  By the triangle inequality and \cref{lem:cores} we may write
  \begin{align}
    \mu_{G,d}(\calT)
    &=
    n^{-k}
    \sum_{t \in \calT}
    \sum_{
      \substack{
        H \subseteq \binom{[k]}{2} \colon\\
        \vc(H) \leq d
      }
    }
    \chi_{H(t)}(G)\\[.25em]
    &=
    n^{-k}\sum_{t \in \calT} \chi_{\emptyset(t)}(G)
    +
    n^{-k}
    \sum_{
      \substack{
        F \in \calF_d\colon \\
        F \neq \emptyset
      }
    }
    \sum_{t \in \calT}
    \sum_{
      H \in \calH_F
    }
    \chi_{H(t)}(G)\\
    &\geq
    1
    -
    \sum_{
      \substack{
        F \in \calF_d\colon \\
        F \neq \emptyset
      }
    }
    n^{-k}
    \Big\vert
    \sum_{t \in \calT}
    \sum_{
      H \in \calH_F
    }
    \chi_{H(t)}(G)
    \Big\vert\\
    &\geq
    1
    -
    \sum_{
      \substack{
        F \in \calF_d\colon \\
        F \neq \emptyset
      }
    }
    n^{2\setsize{E(F)}/D - \delta \vc(F)/10}
    \eqcomma
    \label{eq:whole-space-bound}
  \end{align}
  where for the final inequality we relied on $G$
  being~$(\calQ,D,\delta)$-good and the fact that~$\calT \in \calQ$.
  Since according to \cref{lem:cores} cores~$F \in \calF_d$ have few
  non-isolated vertices~$\setsize{V(E(F))} \leq 3\vc(F)$ it holds
  that~$\setsize{E(F)} \leq 3d\vc(F)$, where we used
  that~$\vc(F) \leq d$.
  Summing over~$i = \vc(F)$ and combining \eqref{eq:whole-space-bound}
  with the above bound and the bound on the number of cores from
  \cref{lem:number-cores} we obtain %
  \begin{align}
    \mu_{G,d}(\calT)
    &\geq
    1
    -
    \sum_{i\in [d]}
    2^{3i(d+\log k)}
    n^{-i(\delta/10 - 6d/D)}\\
    &\geq
    1
    -
    \sum_{i\in[d]}
    n^{-i(\delta/10 - 12\eps - 3\log k/\log n)}
    \eqcomma
  \end{align}
  using that~$i \leq d = \eps D \leq 2\eps \log n$.
  Since~$k \leq n^{\delta/60}$ and we may assume
  that~$\delta > 240 \eps$ we conclude
  that~$\mu_{G,d}(\calT) \geq 1 - n^{-\Omega(1)}$.
\end{proof}

To prove \autoref{thm:measure} it thus remains to establish that the
pseudo-measure is small on sets of tuples~$Q \subseteq \calT$
ruled out by a missing edge as summarized in the following statement.

\begin{lemma}
  \label{lemma:bounded-measure}
  There is a constant~$c \in \R^+$ such that for small enough
  $\eps \in \R^+$ and for all $d, D, \delta \in \R^+$ and
  integer~$k,n \in \N$ satisfying $\delta > c\eps$,
  $k < n^{\delta/60}$, $D \leq 2\log n$, and $d = \eps D$ the
  following holds
  for any family~$\mathcal{Q} \subseteq 2^\calT$ of
  size~$\setsize{\calQ} \leq \exp(n^{2-\delta})$.
  If a graph~$G$ is $(\calQ, D, \delta)$-good, then any $Q \in \calQ$
  ruled out by a missing edge
  satisfies~$\mu_{G,d}(Q) \leq n^{-\Omega(\eps d)}$.
\end{lemma}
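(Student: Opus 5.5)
The plan is to show that, because every tuple in $Q$ contains the same missing edge, almost all terms of $\mu_{G,d}(Q)$ cancel in pairs. The surviving terms are then grouped by their core and bounded using $(\calQ,D,\delta)$-goodness, exactly as in the proof of \cref{lem:whole-space}.

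\textbf{Step 1: pairing.} Fix $Q \in \calQ$ ruled out by a missing edge $e=\set{u,v}\notin E(G)$ with $u \in V_a$, $v\in V_b$. Then every $t \in Q$ has $t_a = u$ and $t_b = v$, so $\chi_{\set{a,b}(t)}(G) = \chi_e(G) = -1$ for every $t \in Q$. Write
\begin{equation*}
  \mu_{G,d}(Q) = n^{-k}\sum_{t\in Q}\ \sum_{H\subseteq\binom{[k]}{2}\colon \vc(H)\leq d}\chi_{H(t)}(G) \eqperiod
\end{equation*}
Call $H$ \emph{paired} if both $H\setminus\set{a,b}$ and $H\cup\set{a,b}$ have vertex cover at most $d$. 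The map $H \mapsto H \triangle \set{a,b}$ is an involution on paired graphs. For every $t\in Q$ it satisfies
\begin{equation*}
  \chi_{(H\cup\set{a,b})(t)}(G) = -\chi_{(H\setminus\set{a,b})(t)}(G) \eqcomma
\end{equation*}
so all paired terms cancel. A graph $H$ with $\vc(H)\le d$ is unpaired only if $\set{a,b}\notin H$ and $\vc(H\cup\set{a,b}) > d$. This forces $\vc(H)=\lfloor d\rfloor$ and $\vc(H \cup \set{a,b}) > \vc(H)$, that is, $H$ lies in the $\set{a,b}$-boundary. Conversely, a graph in the $\set{a,b}$-boundary cannot contain $\set{a,b}$. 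Hence the surviving sum ranges exactly over the graphs $H$ in the $\set{a,b}$-boundary with $\vc(H)=\lfloor d\rfloor$.

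\textbf{Step 2: grouping by cores.} By \cref{lem:cores}, every $H$ lies in exactly one class $\calH_F$ with $F=\core(H)$. Since a minimum vertex cover of a core covers $H$, we have $\vc(F)=\vc(H)$. By \cref{prop:core-boundary}, either the whole class $\calH_F$ lies in the $\set{a,b}$-boundary or none of it does. Therefore
\begin{equation*}
  \mu_{G,d}(Q) = \sum_{F} n^{-k}\sum_{t\in Q}\sum_{H\in\calH_F}\chi_{H(t)}(G) \eqcomma
\end{equation*}
where $F$ ranges over the cores $F\in\calF_d$ with $\vc(F)=\lfloor d\rfloor$ that lie in the $\set{a,b}$-boundary.

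\textbf{Step 3: bounding each class.} If $\lfloor d\rfloor = 0$, then $H = \emptyset$ would be the only candidate, but $\vc(\set{a,b}) = 1 > d$ means the sum is over such $H$ only when $d<1$. This degenerate regime is excluded by taking $d \geq 1$ (for $d = \eps D$ bounded by a constant the claimed bound is trivially of constant order anyway). Otherwise every $F$ in the sum is non-empty. Since $d=\eps D\le D/4$ for small $\eps$, each such $F$ lies in $\calF_{D/4}$, and $(\calQ,D,\delta)$-goodness applies because $Q\in\calQ$. Each term therefore has absolute value at most $n^{2|E(F)|/D-\delta\vc(F)/10}$.

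As in \cref{lem:whole-space}, \cref{lem:cores} gives $|E(F)|\le 3d\vc(F)$, so $2|E(F)|/D\le 6\eps\vc(F)$. By \cref{lem:number-cores}, there are at most $2^{3i(d+\log k)}\le n^{6\eps i+3i\log k/\log n}$ cores with $\vc(F)=i$. Substituting $i=\lfloor d\rfloor$ yields
\begin{equation*}
  \mu_{G,d}(Q)\le n^{-\lfloor d\rfloor(\delta/10-12\eps-\delta/20)} \eqperiod
\end{equation*}
With $\delta>c\eps$ for a large constant $c$, this is $n^{-\Omega(\delta d)}\le n^{-\Omega(\eps d)}$.

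\textbf{Main obstacle.} The delicate part is Step 1. One must verify that the toggle involution maps paired graphs to paired graphs, which it does, since pairedness depends only on $H\setminus\set{a,b}$. One must also check that the unpaired graphs are precisely the boundary graphs at the top level $\lfloor d\rfloor$; otherwise the surviving sum could include low-vertex-cover cores, for which the goodness bound is not strong enough. Everything after that is the same bookkeeping as in \cref{lem:whole-space}, now with the stronger exponent $\vc(F)=\lfloor d\rfloor$ instead of $\vc(F)\ge 1$.
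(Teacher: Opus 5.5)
Your proposal is correct and follows essentially the same route as the paper's proof. You cancel terms in pairs $H \leftrightarrow H\cup\set{a,b}$ using $\chi_e(G)=-1$, identify the survivors as the $\set{a,b}$-boundary graphs at the top vertex-cover level, split them into classes $\calH_F$ via \cref{prop:core-boundary}, and bound each class by goodness together with the core count of \cref{lem:number-cores}. You are in fact somewhat more careful than the paper, treating non-integer $d$ and checking the involution. The degenerate case $d<1$, which you only wave away, is settled cleanly by noting $\mu_{G,d}(Q)=n^{-k}\setsize{Q}\le n^{-2}$, since every tuple in $Q$ contains both endpoints of $e$.
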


Note that \cref{thm:measure} follows from
\cref{lemma:bounded-measure,lem:whole-space} in combination with
\cref{thm:pseudorandom}.

\begin{proof}
  Fix $Q\in \mathcal{Q}$ ruled out by a missing edge, let $e$ denote
  the corresponding edge, and denote by $i \neq j$ the indices of the
  blocks of the endpoints of~$e$. %
  Observe that for a tuple~$t \in \calT$ and a
  graph~$H \subseteq \binom{[k]}{2}$ %
  the identity $\chi_{H(t)\cup e} + \chi_{H(t)} = 0$ %
  holds whenever $e\not\in E(G)$ and $e\notin H(t)$.
  This identity allows us to pair a graph $H$ that does not contain
  the edge~$\set{i,j}$ with the graph $H \cup \set{i,j}$ to obtain
  that
  \begin{align}
    \mu_{G,d}(Q)
    &=
    n^{-k}
    \sum_{t\in Q}
    \sum_{
      \substack{
        H\colon\\
        \vc(H)\leq d
      }
    }
    \chi_{H(t)}(G)\\
    &=
    n^{-k}
    \sum_{t\in Q}
    \Big(
    \sum_{
      \substack{
        H\colon e\in H(t)\\
        \vc(H)\leq d
      }
    }
    \chi_{H(t)}(G) +
    \sum_{
      \substack{
        H\colon e\not\in H(t)\\
        \vc(H)\leq d
      }
    }
    \chi_{H(t)}(G)\Big)\\
    &=
    n^{-k}
    \sum_{t\in Q}
    \sum_{
      \substack{
        H\colon \vc(H) = d\\
        \vc(H \cup \set{i,j}) > d
      }
    }
    \chi_{H(t)}(G)
    \eqperiod
  \end{align}
  Note that the set of graphs we sum over are precisely the
  graphs in the~$\set{i,j}$-boundary with vertex cover~$d$.
  By \cref{prop:core-boundary} the families~$\calH_F$
  for~$F \in \calF_d$ in the $\set{i,j}$-boundary and vertex
  cover~$\vc(F) = d$ partition this set of graphs
  We thus obtain that
  \begin{align}
    \mu_{G,d}(Q)
    &=
    n^{-k}
    \sum_{
      \substack{
        F \in \calF_d \colon \\
        \vc(F) = d\\
        \vc(F \cup \set{i,j})> d
      }
    }
    \sum_{t\in Q}
    \sum_{H \in \calH_F}
    \chi_{H(t)}(G)\\
    &\leq
    \sum_{
      \substack{
        F\in\calF_d \colon \vc(F) = d\\
        \vc(F\cup \set{i,j}) > d
      }
    }
    n^{2\setsize{E(F)}/D - \delta d/10}
    \eqcomma
    \label{eq:edge-axiom-bound}
  \end{align}
  where we relied on the assumption that~$G$ is bounded over~$Q$
  and~$F$. 
  Since according to \cref{lem:cores} it holds that
  cores~$F \in \calF_d$ have few non-isolated
  vertices~$\setsize{V(E(F))} \leq 3\vc(F)$ it also holds
  that~$\setsize{E(F)} \leq 3d^2$, further using that~$\vc(F) = d$.
  Appealing to \cref{lem:number-cores} to bound the sum in
  \eqref{eq:edge-axiom-bound}, and using the above bound and the fact
  that~$d = \eps D \leq 2\eps \log n$ we get that
  \begin{align}
    \mu_{G,d}(Q) 
    & \leq
    2^{3 d(d+\log k)}
    \cdot
    n^{-d(\delta/10 - 6\eps)}\\
    &\leq
    n^{-d(\delta/10 - 12\eps - 3\log k/\log n)}\\
    &= n^{-\Omega(\eps^2D)}
    \eqcomma
  \end{align}
  assuming that~$k \leq n^{\delta/60}$, and that~$\delta > 240 \eps$.
  This concludes the proof of \cref{lemma:bounded-measure}.
\end{proof}

\subsection{Random Graphs are
  Good}
\label{sec:pseudorandom}
\label{sec:weighted-sums}

This section is devoted to the proof of \cref{thm:pseudorandom}.
To this end we need to recall two further statements from
\cite{dRPR23UnarySA}.
If we call a vertex set~$t$ an \emph{$a$-tuple} if $\setsize{t} = a$
and there is a tuple~$t' \in \calT$ such that $t \subseteq t'$, then
the first statement claims that for any block~$V_i$ the size of the
common
neighbourhood~$N^\cap(t, V_i) \coloneq V_i \cap \bigcap_{u \in t}
N(u)$ in~$V_i$ for graphs~$G \sim \calG(n,k,p)$ is tightly
concentrated around its expected
value~$\Expectation[\setsize{N^\cap(t, V_i)}] = p^{\setsize{t}}n$.

\begin{lemma}[\protect{\cite[Lemma 8.1]{drpr24arxiv}}]
  \label{lemma:bounded-common-neighborhoods}
  For any integer~$D,k,n \in \N$ satisfying $k \leq n^{1/5}$ and
  $D \leq 2 \log n$ the following holds
  for~$G \sim \calG(n,k,n^{-2/D})$ \aas.
  For any~$i \in [k]$, any~$a \leq D/4$, and any
  $a$-tuple~$t \subseteq V(G) \setminus V_i$ it holds that
  $
  \setsize{N^\cap(t,V_i)}
  \in
  (1\pm1/k)\,
  \Expectation[\setsize{N^\cap(t,V_i)}]
  $.
\end{lemma}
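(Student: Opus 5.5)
We prove the lemma by a Chernoff bound for each fixed triple $(i,a,t)$, followed by a union bound over all such triples.

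\emph{Fixed triple.} Fix $i \in [k]$, $a \leq D/4$, and an $a$-tuple $t \subseteq V(G)\setminus V_i$. For each $v \in V_i$, the event $v \in N^\cap(t,V_i)$ requires the $a$ potential edges $\set{u,v}$, $u \in t$, to be present. Each of these is a potential edge, since $t$ meets each block in at most one vertex and avoids $V_i$. Different $v$ use disjoint edge sets, so $\setsize{N^\cap(t,V_i)}$ is distributed as $\mathrm{Bin}(n,p^a)$ with $p = n^{-2/D}$. Its mean is
\[
  \mu_a \coloneqq p^a n = n^{1-2a/D} \geq n^{1/2},
\]
using $a \leq D/4$. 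The multiplicative Chernoff bound with deviation parameter $1/k$ gives
\[
  \Pr\bigl[\setsize{N^\cap(t,V_i)} \notin (1\pm 1/k)\mu_a\bigr]
  \leq
  2\exp\bigl(-\mu_a/(3k^2)\bigr)
  \leq
  2\exp\bigl(-n^{1/2}/(3n^{2/5})\bigr)
  = 2\exp\bigl(-n^{1/10}/3\bigr),
\]
where the last step uses $k \leq n^{1/5}$.

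\emph{Union bound.} The number of choices of $i$ is at most $k$, of $a$ at most $D/4 \leq \log n$, and of $t$ at most $(kn)^a \leq (kn)^{\log n/2}$, using $a \leq D/4 \leq \log n/2$. With $k \leq n^{1/5}$ the total count is at most $\exp\bigl(O(\log^2 n)\bigr)$. Multiplying by the per-triple failure probability gives
\[
  \exp\bigl(O(\log^2 n) - n^{1/10}/3\bigr) = o(1),
\]
so the conclusion holds \aas.

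\emph{Main obstacle.} The only delicate point is keeping $\mu_a/k^2$ polynomially large, so that the Chernoff tail beats the quasi-polynomial number of tuples. The constraints $a \leq D/4$, which gives the exponent $1/2$ in $\mu_a$, and $k \leq n^{1/5}$, which gives $k^2 \leq n^{2/5}$, fit together exactly to leave $n^{1/10}$ in the exponent. For very small $D$ the set of allowed $a$ may contain only $a=0$, where $N^\cap(\emptyset,V_i) = V_i$ and the statement holds trivially.
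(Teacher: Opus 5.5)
Your proof is correct and takes the route the paper indicates: the paper proves nothing itself, remarking only that the lemma follows by the usual Chernoff bound plus union bound argument. Your details check out: $\setsize{N^\cap(t,V_i)}$ is distributed as $\mathrm{Bin}(n,p^a)$ with mean at least $n^{1/2}$, each fixed tuple fails with probability at most $2\exp(-n^{1/10}/3)$ using $k \le n^{1/5}$, a union bound over $\exp(O(\log^2 n))$ triples finishes the argument, and you handle the degenerate case $a=0$ correctly.
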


The proof of \cref{lemma:bounded-common-neighborhoods} follows by the
usual Chernoff bound plus union bound argument.
To argue \cref{thm:pseudorandom} we need bounds on linear combinations
of Fourier characters associated with the same core~$F$ as stated
next.
For a set~$A \subseteq [k]$ %
and a tuple~$t \in \calT$ let~$t_A \coloneq \setdescr{t_i}{i \in A}$,
and extend this notation to~$\calT$ in the natural
way~$\calT_A \coloneqq \setdescr{t_A}{t \in \calT}$.

\begin{lemma}[\protect{\cite[Lemma 8.5]{drpr24arxiv}}]
  \label{lemma:bounded-sum}
  Consider a non-empty graph~$F$ over the vertex set $[k]$, let
  $A \coloneq V\bigl(E(F)\bigr) %
  $, and denote
  by~$M \subseteq E(F)$ a matching in $F$.
  For any even~$m \leq n^2$, any $r\in \R^+$, and any function
  $\xi: \calT_A \to [-r,r]$
  it holds that
  \begin{displaymath}
    \Pr_{G[V_A]}
    \Big[
    \Setsize{ \sum_{t\in \calT_A} \chi_{F(t)}(G) \xi(t) } > s
    \Big]
    \leq
    \left(
      \frac{
        r
        \cdot
        p^{-|E(F)|}
        \cdot
        \big(m/n^2\big)^{\setsize{M}/2}
        \cdot
        n^{\setsize{A}}
      }{
        s
      }
    \right)^m
    \eqcomma
  \end{displaymath}
  where each edge~$e \in \binom{V_A}{2}$ is sampled independently with
  probability~$p$, provided~$e$ has endpoints in distinct blocks.
\end{lemma}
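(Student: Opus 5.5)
The plan is the standard high-moment method: bound the $m$-th moment of
$S \coloneqq \sum_{t\in\calT_A}\chi_{F(t)}(G)\,\xi(t)$ and apply Markov's inequality, $\Pr[\setsize{S}>s] \le \Expectation[S^m]/s^m$, which is valid since $m$ is even. So it suffices to show
\begin{equation*}
  \Expectation[S^m]
  \le
  r^m\, p^{-m\setsize{E(F)}}\, \big(m/n^2\big)^{m\setsize{M}/2}\, n^{m\setsize{A}}
  \eqperiod
\end{equation*}

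Expanding, $\Expectation[S^m] = \sum_{t^1,\dots,t^m\in\calT_A} \prod_\ell \xi(t^\ell)\, \Expectation\bigl[\prod_\ell \chi_{F(t^\ell)}(G)\bigr]$, and each $\setsize{\xi(t^\ell)}\le r$.

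\textbf{Step 1: the character expectation.} The product $\prod_\ell\chi_{F(t^\ell)}$ equals $\prod_e \chi_e^{j_e}$ over distinct potential edges $e$ with multiplicities $j_e$, and $\sum_e j_e = m\setsize{E(F)}$. The edges are independent and $\Expectation[\chi_e]=0$. Hence the expectation vanishes unless every occurring edge has $j_e\ge 2$. Moreover $\Expectation\setsize{\chi_e}^{j} \le p\,(1/p)^j + (1-p) \le p^{-(j-1)}$. So every surviving term is bounded by $p^{-m\setsize{E(F)}}$, leaving the task of counting surviving $m$-tuples $(t^1,\dots,t^m)$.

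\textbf{Step 2: counting via the matching.} Because $G$ is $k$-partite, the edge $\{t^\ell_i,t^\ell_j\}$ for $\{i,j\}\in M$ lies between blocks $V_i$ and $V_j$. It can therefore only coincide with the edge at the same position $\{i,j\}$ of another tuple $t^{\ell'}$.

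Fix a matching edge $\{i,j\}\in M$. Survival forces the sequence of pairs $(t^\ell_i,t^\ell_j)_{\ell\in[m]}$, with values in a set of size $N=n^2$, to have every value occurring either zero times or at least twice. The number of such sequences is at most $\binom{m}{m/2}N^{m/2}(m/2)^{m/2}$. To see this, choose $m/2$ positions containing the first occurrence of each value (there are at most $m/2$ distinct values), assign those positions arbitrary values, and fill the remaining positions from the at most $m/2$ values used. This gives roughly $(mn^2)^{m/2} = (n^2)^m(m/n^2)^{m/2}$, up to a $2^{O(m)}$ factor.

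Since $M$ is a matching, these constraints involve disjoint coordinates and multiply. The coordinates of $A$ not covered by $M$ contribute at most $n^{m(\setsize{A}-2\setsize{M})}$. The total count is thus $n^{m\setsize{A}}(m/n^2)^{m\setsize{M}/2}$ up to a $2^{O(m)}$ factor, which combined with Step 1 and Markov gives the bound.

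\textbf{Main obstacle.} I expect the difficulty to be matching the exact constant in the statement rather than the structure of the argument. The first-occurrence counting above loses a factor $2^{m/2}$ per matching edge. To remove it, I would either count the fibre patterns more carefully (set partitions of $[m]$ into blocks of size at least $2$, weighted by $N^{\#\text{blocks}}$), or exploit the slack in Step 1. For Step 1, note that surviving terms with fewer distinct edges gain powers of $p$, e.g.\ $\Expectation[\chi_e^2]=(1-p)/p$ rather than $p^{-2}$, which gives room to absorb such losses. If neither works cleanly, I would accept the $2^{O(m)}$ loss, since in every application $s$ is polynomially larger than the base and constant factors raised to the $m$ are harmless.
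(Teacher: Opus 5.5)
The paper does not prove this lemma; it imports it from \cite{drpr24arxiv}, so there is no in-paper argument to compare with. Judged on its own, your moment-method structure is sound. The following steps are all correct: Markov on the even $m$-th moment, the vanishing of any term containing an edge of multiplicity one, the bound $\Expectation\setsize{\chi_e}^j\le p^{-(j-1)}$, the observation that an edge at matching position $\{i,j\}$ can only recur at position $\{i,j\}$ of another tuple, and the factorisation over the disjoint matching coordinates.

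The gap is the one you flagged, and as the proof stands it is real. Your first-occurrence count $\binom{m}{m/2}N^{m/2}(m/2)^{m/2}\le 2^{m/2}(mN)^{m/2}$, with $N=n^2$, proves the lemma only with $(2m/n^2)^{\setsize{M}/2}$ in place of $(m/n^2)^{\setsize{M}/2}$. Your fallback of absorbing the loss is not automatic in this paper either. In \cref{clm:bound-Y} the choices of $r$ and $s$ make the base exactly $1/2$, and an extra factor $2^{\setsize{M}/2}=2^{\vc(F)/4}$ pushes it to at least $1$ once $\vc(F)\ge 4$. So $s$ would have to be re-chosen there. The goodness threshold leaves room for that, but it changes the application rather than proving the lemma.

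Your second suggested fix closes the gap exactly, provided you use the sharp moment rather than $p^{-(j-1)}$. For $j\ge 2$,
\[
\Expectation\setsize{\chi_e}^j=(1-p)p^{1-j}\bigl((1-p)^{j-1}+p^{j-1}\bigr)\le p(1-p)\,p^{-j}\le \tfrac14\,p^{-j},
\]
so each distinct edge gains a factor $1/4$. The gain of only $p$ that your bound provides is useless when $p$ is close to $1$.

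Distinct matching positions give distinct edges. Hence a surviving term is at most $p^{-m\setsize{E(F)}}\prod_{\{i,j\}\in M}4^{-d_{ij}}$, where $d_{ij}\le m/2$ is the number of distinct values of $\bigl((t^\ell_i,t^\ell_j)\bigr)_{\ell\in[m]}$. Now use your first-occurrence encoding for each fixed $d$, together with the hypothesis $m\le n^2=N$ (this is exactly where it enters). The weighted count per matching edge is
\[
\sum_{d=1}^{m/2}\binom{m}{d}N^d d^{m-d}4^{-d}
\le (mN)^{m/2}\sum_{d=1}^{m/2}\binom{m}{d}\Bigl(\frac{d}{m}\Bigr)^{m-d}4^{-d}
\le (mN)^{m/2}\sum_{d\ge 1}2^{m}\,2^{-(m-d)}\,2^{-2d}
\le (mN)^{m/2}.
\]
The first inequality uses $N^{d-m/2}\le m^{d-m/2}$ for $d\le m/2$. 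This yields $\Expectation[S^m]\le\bigl(r\,p^{-\setsize{E(F)}}(m/n^2)^{\setsize{M}/2}n^{\setsize{A}}\bigr)^m$, and hence the stated bound with no extra constant. Your other suggested fix, counting fibre patterns as set partitions into blocks of size at least two, also works, but it needs more care for small $m$.
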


\Cref{lemma:bounded-sum} is in fact a weaker statement than Lemma 8.5
from \cite{drpr24arxiv} since it fixes~$Q \coloneq \calT_A$.
From \cref{lemma:bounded-common-neighborhoods,lemma:bounded-sum} we
may derive \cref{clm:bnd-sum} restated here for convenience.

\randomness*

The remainder of this section is devoted to the proof of
\cref{clm:bnd-sum}. Let~$p \coloneqq n^{-2/D}$, fix a core $F$, let
$A \coloneqq V\bigl(E(F)\bigr)$, and let $B \coloneqq [k]\setminus A$.
For the following it is convenient to write a tuple~$t \in \calT$
as~$t = (\ta, \tb)$.
For a graph~$G$ denote by~$G_A \coloneqq G[V_A]$ the graph induced
by vertices~$V_A \coloneqq \bigcup_{i \in A} V_i$ and
let~$G_{\bar A} \coloneqq \big(V_{[k]}, E(G) \setminus
\binom{V_A}{2}\big)$ denote the remainder. For a fixed set of
tuples~$Q \in \mathcal{Q}$ and $G_{\bar A}$ fixed define
\begin{equation}
  \xi_{F,Q,G_{\bar A}}(\ta)
  \coloneq
  p^{-\setsize{\CoreExtension_F}}
  \sum_{\tb\in \calT_B}
  \ind_{\set{ (\ta,\tb) \in Q }}
  \,
  \ind_{\set{\CoreExtension_F(t_A,t_B)\text{ present}}}(G_{\bar A})
  \eqperiod
\end{equation}
Recall further
that~$\calH_F = \setdescr{F \sqcup E}{E \supseteq
  \CoreExtension_F}$. This allows us to write the considered sum as
\begin{align}
  \sum_{t\in Q}
  \sum_{H \in \calH_F}
  \chi_{H(t)}(G)
  &=
  \sum_{t\in Q}
  \chi_{F(t)}(G)
  \sum_{E \subseteq \CoreExtension_F}
  \chi_{E(t)}(G)\\
  &=
  \sum_{t\in Q}
  \chi_{F(t)}(G_A)
  \,
  p^{-\setsize{\CoreExtension_F}}
  \,
  \ind_{\set{\CoreExtension_F(t)\text{ present}}}(G_{\bar A})\\
  &=
  \sum_{\ta\in \calT_A}
  \chi_{F(\ta)}(G_A)
  \,
  p^{-\setsize{\CoreExtension_F}}
  \sum_{\tb\in \calT_B}
  \ind_{\{(\ta,\tb)\in Q\}}
  \,
  \ind_{\set{\CoreExtension_F(t_A, t_B)\text{ present}}}(G_{\bar A})\\
  &=
  \sum_{\ta\in \calT_A}
  \chi_{F(\ta)}(G_A)
  \,
  \xi_{F,Q,G_{\bar A}}(\ta)
  \eqcomma
  \label{sum:weighted}
\end{align}
where we observe that since every edge in $\CoreExtension_F$ has an
endpoint outside $A$, the function~$\xi_{F,Q,G_{\bar A}}$ only
depends on the edges of $G_{\bar A}$ and not on those of
$G_A$.

For~$r \in \R^+$ denote by $\bfX_r(F,Q,G)$ the indicator random
variable of the event~$\setsize{\xi_{F,Q,G_{\bar A}}} \leq r$,
and for~$s \in \R^+$ let~$\bfY_s(F,Q,G)$ be the indicator random
variable for the event
\begin{equation}
  \Big|
  \sum_{t\in Q}
  \sum_{H \in \calH_F}
  \chi_{H(t)}(G)
  \Big|
  \leq
  s
  \eqperiod
\end{equation}
Note that it holds that
\begin{align}
  \begin{split}
    \label{eq:pseudorandom}
    \Pr_{G}
    &\bigl[{
      \exists F \in \calF_{D/4}, Q \in \calQ
      \text{ such that }
      \lnot \bfY_s(F, Q, G)
    }\bigr]
    \\[0.5em]
    &\eqalignspace=
    \Pr_{G}\bigl[{
      \exists F \in \calF_{D/4}, Q \in \calQ
      \text{ such that }
      \lnot \bfY_s(F, Q, G)
      \text{ and }
      \lnot \bfX_r(F, Q, G)
    }\bigr]
    +\\[0.25em]
    &\eqalignspace\qquad
    \Pr_{G}\bigl[{
      \exists F \in \calF_{D/4}, Q \in \calQ
      \text{ such that }
      \lnot \bfY_s(F, Q, G)
      \text{ and }
      \bfX_r(F, Q, G)
    }\bigr]
  \end{split}\\[0.5em]
  \begin{split}
    &\eqalignspace\leq
    \Pr_{G}\bigl[{
      \exists F \in \calF_{D/4}, Q \in \calQ
      \text{ such that }
      \lnot \bfX_r(F, Q, G)
    }\bigr]
    +\\[0.25em]
    &\eqalignspace\qquad
    \sum_{\substack{F \in \calF_{D/4}\\ Q \in \calQ}}
    \Pr_{G}\bigl[{
      \lnot \bfY_s(F, Q, G)
      ~\big\vert~
      \bfX_r(F, Q, G)
    }\bigr]
    \eqperiod
    \label{eq:pseudorandom-split}
  \end{split}
\end{align}

To argue that $G \sim \calG(n, k, p)$ is \aas
$(\calQ, D, \delta)$\nobreakdash-good, we need to show that the
initial probability in \eqref{eq:pseudorandom} is bounded by
$\littleo(1)$ for
$s \coloneq n^kp^{-\setsize{E(F)}} n^{- \delta\vc(F)/10}$.
\Cref{clm:bnd-sum} thus follows from the following two claims.

\begin{claim}
  \label{clm:bound-X}
  For~$r \coloneq 3n^{k-\setsize{A}}$ it holds that
  \begin{displaymath}
    \Pr_{G}\bigl[{
      \exists F \in \calF_{D/4}, Q \in \calQ
      \text{ such that }
      \lnot \bfX_r(F, Q, G)
    }\bigr]
    =
      \littleo(1)
    \eqperiod
  \end{displaymath}
\end{claim}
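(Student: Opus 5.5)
The key point is that $\bfX_r$ does not really depend on $Q$. For every $Q \subseteq \calT$ and every $\ta$,
\[
  0 \le \xi_{F,Q,G_{\bar A}}(\ta) \le \xi_{F,\calT,G_{\bar A}}(\ta)
  = p^{-\setsize{\CoreExtension_F}}\,\#\Setdescr{\tb \in \calT_B}{\CoreExtension_F(\ta,\tb) \text{ present in } G_{\bar A}}
  \eqperiod
\]
So it suffices to show that \aas, simultaneously for all cores $F \in \calF_{D/4}$ and all $\ta \in \calT_A$, this count is at most $3 n^{k-\setsize{A}} p^{\setsize{\CoreExtension_F}}$. This removes the union bound over the huge family $\calQ$. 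The union over cores is then free: the event will be implied by the single good event of \cref{lemma:bounded-common-neighborhoods}.

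First, write $\CoreExtension_F$ as a bipartite edge set between $A$ and $B$. By \cref{lem:cores}, every edge of $\CoreExtension_F$ joins $A = V(E(F))$ to $B = [k]\setminus A$. For $j \in B$ let $N_j \subseteq A$ be the set of its $\CoreExtension_F$-neighbours. Then $\CoreExtension_F(\ta,\tb)$ is present iff for each $j \in B$ the vertex $t_j$ lies in the common neighbourhood $N^\cap(t_{N_j}, V_j)$. These conditions factorise over $j \in B$, because different $j$ use disjoint coordinates of $\tb$. Hence
\[
  \#\{\tb\} = \prod_{j \in B} \setsize{N^\cap(t_{N_j}, V_j)} \eqcomma
\]
with the convention that an empty $N_j$ contributes $n$.

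Next, apply \cref{lemma:bounded-common-neighborhoods}. Here $t_{N_j}$ is an $a$-tuple avoiding $V_j$ with $a = \setsize{N_j} \le \setsize{A} \le 3\vc(F) \le 3D/4$. This exceeds the range $a \le D/4$ allowed by the lemma. I would handle this in one of two ways. The first is to check that the lemma's proof (Chernoff plus a union bound over at most $(kn)^{a}$ tuples) still works for $a \le 3D/4$: the expected size is $p^a n = n^{1-2a/D} \ge n^{-1/2}$. That is too small for concentration when $a > D/2$, so this route needs care. The second, which I prefer, is to use that $\setsize{N_j} \le \vc(F)$ or similar structural information from \cref{lem:cores}. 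For instance, every $\CoreExtension_F$-neighbour of $j$ lies in a minimum vertex cover of $F$, since adding an edge to an outside vertex must not increase the vertex cover. This gives $\setsize{N_j} \le \vc(F) \le D/4$.

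Granting this, \aas each factor is at most $(1+1/k)p^{\setsize{N_j}} n$. The product is then at most
\[
  (1+1/k)^{\setsize{B}}\, p^{\setsize{\CoreExtension_F}}\, n^{\setsize{B}}
  \le e\, p^{\setsize{\CoreExtension_F}}\, n^{k-\setsize{A}}
  \eqcomma
\]
since $\sum_j \setsize{N_j} = \setsize{\CoreExtension_F}$ and $\setsize{B} \le k$. Multiplying by $p^{-\setsize{\CoreExtension_F}}$ gives $\xi \le e\, n^{k-\setsize{A}} < 3n^{k-\setsize{A}} = r$ for every $F$, $Q$ and $\ta$. So the probability in the claim is at most the failure probability of \cref{lemma:bounded-common-neighborhoods}, which is $\littleo(1)$.

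The main obstacle is justifying the size bound $\setsize{N_j} \le D/4$ on the neighbourhoods, i.e.\ extracting this from the structure of cores in \cref{lem:cores}. If it fails, the fallback is a direct Chernoff bound, accepting a weaker but still sufficient upper bound: only an upper tail is needed, so small expectations are harmless.
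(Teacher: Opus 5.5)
Your proposal is correct and is essentially the paper's proof. You drop the indicator of $Q$ so the bound is uniform over $\calQ$, factor the number of admissible $t_B$ into common-neighbourhood sizes $\setsize{N^\cap(t_{N_j},V_j)}$ over $j \in B$, and invoke the single high-probability event of \cref{lemma:bounded-common-neighborhoods} to get $\xi \le (1+1/k)^{k-\setsize{A}}\, n^{k-\setsize{A}} \le 3n^{k-\setsize{A}}$. Your preferred fix for the tuple-size issue is also right, and more careful than the paper, which simply writes $\setsize{A} \le D/4$: any minimum vertex cover of $F$ avoids the isolated vertices in $B$ and, by the core property applied to $F \sqcup \CoreExtension_F$, covers $\CoreExtension_F$, so it contains every $N_j$ and gives $\setsize{N_j} \le \vc(F) \le D/4$; the Chernoff fallback is not needed.
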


\begin{claim}
  \label{clm:bound-Y}
  For $r \coloneqq 3n^{k - \setsize{A}}$ and
  $s \coloneqq 6n^k p^{-\setsize{E(F)}} (4n^{\delta/2})^{-\vc(F)/4}$
  it holds that
  \begin{displaymath}
    \sum_{\substack{F \in \calF_{D/4}\\ Q \in \calQ}}
    \Pr_{G}\bigl[{
      \lnot \bfY_s(F, Q, G)
      ~\big\vert~
      \bfX_r(F, Q, G)
    }\bigr]
    =
    \littleo(1) \eqperiod
  \end{displaymath}
\end{claim}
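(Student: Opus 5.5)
The plan is to handle each pair $(F,Q)$ separately with \cref{lemma:bounded-sum}, conditioning on the part of the graph outside $V_A$. Then I union bound over all pairs, choosing the moment parameter $m$ so that the failure probability beats $\setsize{\calQ}\le\exp(n^{2-\delta})$.

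\emph{Conditioning.} Fix a non-empty core $F\in\calF_{D/4}$ and a set $Q\in\calQ$, and let $A=V(E(F))$. The event $\bfX_r(F,Q,G)$ depends only on $G_{\bar A}$. The edges of $G_A$ are independent of $G_{\bar A}$. So it suffices to fix an arbitrary $G_{\bar A}$ with $\setsize{\xi_{F,Q,G_{\bar A}}}\le r$ pointwise and bound the probability over $G_A$ alone. By \eqref{sum:weighted}, $\lnot\bfY_s(F,Q,G)$ is exactly the event
\[
\Bigl|\sum_{\ta\in\calT_A}\chi_{F(\ta)}(G_A)\,\xi(\ta)\Bigr|>s, \qquad \xi\coloneqq\xi_{F,Q,G_{\bar A}}\colon\calT_A\to[-r,r].
\]
This is precisely the setting of \cref{lemma:bounded-sum}.

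\emph{Applying the lemma.} Take $M$ to be a maximal matching of $F$. Its endpoints form a vertex cover, so $\setsize{M}\ge\vc(F)/2$. Substituting $r=3n^{k-\setsize{A}}$ and $s=6n^kp^{-\setsize{E(F)}}(4n^{\delta/2})^{-\vc(F)/4}$, the base of the bound in \cref{lemma:bounded-sum} simplifies to
\[
\tfrac12\,(m/n^2)^{\setsize{M}/2}\,4^{\vc(F)/4}\,n^{\delta\vc(F)/8}.
\]
I choose $m$ to be an even integer close to $n^{2-\delta/2}/16$, so that $m/n^2\le n^{-\delta/2}/16$. Using $m/n^2<1$ and $\setsize{M}/2\ge\vc(F)/4$, we get $(m/n^2)^{\setsize{M}/2}\le(n^{-\delta/2}/16)^{\vc(F)/4}$. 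This cancels the factor $n^{\delta\vc(F)/8}$ and leaves $4^{\vc(F)/4}\cdot16^{-\vc(F)/4}\le1$. The base is therefore at most $1/2$, so the conditional failure probability is at most $2^{-m}=\exp\bigl(-\Omega(n^{2-\delta/2})\bigr)$. Clearly $m\le n^2$.

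\emph{Union bound.} By \cref{lem:number-cores} and \cref{lem:cores}, with $\vc(F)\le D/4$ and $\setsize{V(E(F))}\le 3D/4$, the number of cores is at most $2^{O(D(D+\log k))}=n^{O(\log n)}$. The number of sets $Q$ is at most $\exp(n^{2-\delta})$. Hence the sum in \cref{clm:bound-Y} is at most $n^{O(\log n)}\exp(n^{2-\delta})\exp\bigl(-\Omega(n^{2-\delta/2})\bigr)=\littleo(1)$.

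I expect the main obstacle to be the choice of $m$. It has to be large enough, polynomially above $n^{2-\delta}$, to absorb the union bound over $\calQ$. At the same time it has to be small enough that $(m/n^2)^{\setsize{M}/2}$ offsets the slack $n^{\delta\vc(F)/8}$ built into $s$, including the constant factor $4^{\vc(F)/4}$. This balance only works because of the relation $\setsize{M}\ge\vc(F)/2$. A second point needing care is that conditioning on $\bfX_r$ leaves $G_A$ uniformly distributed. This holds because $\CoreExtension_F$ has an endpoint outside $A$, so $\xi$ depends only on $G_{\bar A}$.
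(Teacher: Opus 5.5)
Your proposal is correct and follows the paper's proof essentially step for step. You condition on $G_{\bar A}$, on which both $\bfX_r$ and $\xi_{F,Q,G_{\bar A}}$ depend. You then apply \cref{lemma:bounded-sum} with a matching of size at least $\vc(F)/2$ and $m=\Theta(n^{2-\delta/2})$ to get a per-pair bound of $2^{-m}$, and union bound over the $2^{O(D(D+\log k))}$ cores and the at most $\exp(n^{2-\delta})$ sets $Q$.

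The only difference is cosmetic. The paper takes $m=n^{2-\delta/2}/4$, so that $(m/n^2)^{\vc(F)/4}=(4n^{\delta/2})^{-\vc(F)/4}$ and the base is exactly $1/2$; this is where the form of $s$ comes from. Your choice $m\approx n^{2-\delta/2}/16$ merely adds harmless slack, which also absorbs rounding $m$ to an even integer.
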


\begin{proof}[Proof of \cref{clm:bound-X}]
  The probability that any such event fails can be bounded by
  \autoref{lemma:bounded-common-neighborhoods} as
  follows. \cref{lemma:bounded-common-neighborhoods} implies that \aas
  for any core $F \in \calF_{D/4}$, any set of tuples~$Q \in \calQ$,
  any set~$A \subseteq [k]$ of size~$\setsize{A} \leq D/4$, and
  any~$\ta \in Q_A$ it holds that
  \begin{align}
    \xi_{F,Q,G_{\bar A}}(\ta)
    &=
    p^{-\setsize{\CoreExtension_F}}
    \sum_{\tb \in \calT_B}
    \ind_{\set{(\ta,\tb)\in Q}}
    \,
    \ind_{\set{\CoreExtension_F(t_A, t_B)\text{ present}}}(G_{\bar A})\\
    &\leq
    p^{-\setsize{\CoreExtension_F}}
    \sum_{\tb\in \calT_B}
    \ind_{\set{\CoreExtension_F(t_A,t_B)\text{ present}}}(G_{\bar A})\\
    & \leq
    p^{-\setsize{\CoreExtension_F}}
    n^{k-\setsize{A}}
    \,
    (1+ 1/k)^{k-\setsize{A}}
    \,
    p^{\setsize{\CoreExtension_F}}
    \\
    &\leq 3n^{k-|A|} = r
    \eqperiod\qedhere
  \end{align}
\end{proof}

\begin{proof}[Proof of \cref{clm:bound-Y}]
  Fix a core~$F \in \calF_{D/4}$ and a set~$Q \in \calQ$.
  We claim that \autoref{lemma:bounded-sum} implies that
  \begin{align}
    \Pr_{G}
    \bigl[{
      \lnot\bfY_s(F, Q, G)
      ~\big\vert~
      \bfX_r(F, Q, G)
    }\bigr]
    &=
    \Pr_{G}
    \bigg[
    \Big\vert
    \sum_{\ta\in \calT_A}
    \chi_{F(\ta)}(G)
    \,
    \xi_{F,Q,G_{\bar A}}(\ta) 
    \Big\vert
    >
    s
    ~\bigg\vert~
    \bfX_r(F,Q,G)
    \bigg]\\[0.5em]
    &\leq
    \left(
      \frac{
        r
        \cdot
        p^{-\setsize{E(F)}}
        \cdot
        (m/n^2)^{\vc(F)/4}
        \cdot
        n^{\setsize{A}}
      }{
        s
      }
    \right)^m
  \end{align}
  for any~$s \in \R^+$ and any even integer $m \in \N$.
  Indeed, since any graph $F$ has a matching of size $\vc(F)/2$ and
  $\xi_{F,Q,G_{\bar A}}$ only depends on $G_{\bar A}$, we may
  condition on $G_{\bar A}$ and average over it to obtain that
  \begin{align}
    \begin{split}
      \Pr_G
      \bigg[
      \Big\vert
      \sum_{\ta\in \calT_A}
      &\chi_{F(\ta)}(G)
      \,
      \xi_{F,Q,G_{\bar A}}(\ta)
      \Big\vert
      >
      s
      ~\bigg\vert~
      \bfX_r(F,Q,G)
      \bigg]\\[0.5em]
      &=\sum_{
        \substack{
          G_{\bar A}\colon\\
          \bfX_r(F,Q,G)
        } 
      }
      \Pr_{G \setminus G[V_A]}
      \bigl[
      G_{\bar A}
      \bigm\vert
      \bfX_r(F,Q,G)
      \bigr]
      \cdot
      \Pr_{G[V_A]}
      \bigg[
      \Big\vert
      \sum_{\ta\in \calT_A}
      \chi_{F(\ta)}(G)
      \,
      \xi_{F,Q,G_{\bar A}}(\ta)
      \Big\vert
      >
      s
      \bigg] \eqperiod
    \end{split}
  \end{align}
  For $r \coloneqq 3n^{k-|A|}$, $m \coloneqq n^{2-\delta/2}/4$, and
  $s \coloneqq 6n^k p^{-\setsize{E(F)}} (m/n^2)^{\vc(F)/4}$ we find
  that
  \begin{align}
    \Pr_{G[V_A]}
    \bigg[
    \Big\vert
    \sum_{t\in Q}
    \sum_{H \in \calH_F}
    \chi_{H(t)}(G)
    \Big\vert
    >
    6n^k
    \cdot
    p^{-\setsize{E(F)}}
    \cdot
    \left(\frac{m}{n^2}\right)^{\vc(F)/4}
    ~\bigg\vert~
    \bfX_r(F,Q,G)
    \bigg]< 2^{-m}
    \eqperiod
  \end{align}
  Since by \autoref{lem:number-cores} there are at most
  $\sum_{i=1}^{D/4} 2^{3i(i+\log k)} \leq 2^{D(D+\log k)}$ many cores
  with vertex cover at most $D/4$ and by assumption it holds that
  $\setsize{\calQ} \leq 2^{n^{2-\delta}}$, it follows that the sum
  under consideration is bounded by
  \begin{displaymath}
    2^{D(D+\log k)}
    2^{n^{2-\delta}}
    2^{-m}
    =
    2^{D(D+\log k)-n^{2-\delta/2}(1/4- n^{-\delta/2})}
    =
    o(1)
    \eqperiod
    \qedhere
  \end{displaymath}
\end{proof}

\section{Concluding Remarks}
\label{sec:conclusion}

We prove lower bounds on tree-like semantic proof systems operating
over a limited number of distinct proof lines.
These lower bounds are essentially optimal in a sense that the number
of proof lines cannot be increased by much as otherwise there are
semantic proof systems that can refute the entire family of formulas
under consideration.
As corollaries we obtain the first superpolynomial refutation length
lower bounds on tree-like Frege refutation systems with bounded line
size and the tree-like threshold proof system of polynomial degree.
These lower bounds are established for different encodings of the
clique formula.
For the standard unary encoding, our main theorem yields essentially
optimal average-case $n^{\Omega(D)}$ size lower bounds on tree-like
cutting planes and tree-like resolution over parities refutations of the $k$-clique formula over
Erd\H{o}s-Rényi random graphs with maximum cliques of size $D$.

Given that we obtain lower bounds for proof systems that can refute
all simple benchmark formulas we thereby establish that the proof
method of constructing a pseudo-measure is independent of the hardness
of these simple combinatorial principles.
This shows that this proof paradigm may be able to yield lower bounds
for even stronger proof systems.
As a first step in this direction it would be interesting to see
whether the lower bound methodology of constructing a pseudo-measure
can be adapted to only hold for a certain class of \emph{syntactic}
derivation rules instead of the very generic semantic derivation rule
considered in this paper.
There are many further loose ends---let us mention two other
directions we consider most interesting.

First off, is it possible to prove refutation length lower bounds on
dag-like semantic~$\calF$ proof systems? This might be an avenue
towards dag-like \emph{Lovász--Schrijver} refutation length lower
bounds. Could it even be that for every dag-like semantic $\calF$
proof system there is a tree-like semantic $\calF'$ proof system that
efficiently simulates the former
with~$\setsize{\calF'} = \poly\bigl(\setsize{\calF}\bigr)$?

To prove cutting planes lower bounds for random $O(1)$-CNF
formulas, we need to consider proof methods that separate
deterministic from randomized communication models; the falsified
clause search problem for random $O(1)$-CNF formulas has
low-depth %
\emph{randomized} communication protocols. As
already pointed out in \cite{dRPR23UnarySA}, the pseudo-measure~$\mu$
can be used to prove a lower bound on the \emph{deterministic}
$k$-player number-in-hand communication complexity: consider the
problem of finding a missing edge in the induced subgraph by a
tuple~$(v_1, \ldots, v_k)$, where player $i \in [k]$ is provided
vertex~$v_i$. The pseudo-measure~$\mu$ shows that any deterministic
communication protocol for this problem is of depth~$\Omega(D\log
n)$. As this problem can be solved in constant cost in the
\emph{randomized} model for small values of~$k$, this provides a
modest separation of these two models.
Can the method of constructing a pseudo-measure be used to prove
random $O(1)$-CNF cutting planes refutation size lower bounds?

\ifdefined\CONFERENCE%
\relax%
\else%
\section*{Acknowledgements}

We are grateful to Mika Göös, Dmitry Itsykson, Duri Andrea Janett, and
Artur Riazanov for insightful discussions.

S.R. and D.E. received funding from the Knut and Alice Wallenberg Foundation grant \mbox{KAW 2023.0116}, ELLIIT, and the Swedish Research Council grant \mbox{2021-05104}.
Y.G.~received funding from the Independent Research Fund Denmark grant
\mbox{9040-00389B}.
K.R.~is supported by the Swiss National Science Foundation
Postdoc.Mobility fellowship \mbox{P500-2\_235298}; part of this work
was done while affiliated with EPFL. 
We also gratefully acknowledge that we have benefited greatly from being part of the Basic Algorithms Research Centre (BARC) environment financed by the Villum Investigator grant~54451.

\fi%

\bibliography{references}
\bibliographystyle{alphaurl}

\end{document}